\numberwithin{equation}{section}
\theoremstyle{plain}
\newtheorem{thm}{Theorem}[section]
\newtheorem{lem}[thm]{Lemma}
\newtheorem{prop}[thm]{Proposition}
\newtheorem{rmk}[thm]{Remark}
\newcommand{\be}{\begin{equation}}
\newcommand{\ee}{\end{equation}}
\newcommand{\ben}{\begin{equation*}}
\newcommand{\een}{\end{equation*}}
\newcommand{\bq}{\begin{eqnarray}}
\newcommand{\eq}{\end{eqnarray}}
\newcommand{\ex}{\mathbb{E}}
\newcommand{\half}{\frac{1}{2}}
\newcommand{\ind}{\mathbf{1}}%}{1\hspace{-2.1mm}{1}} %Indicator Function
\newcommand{\nn}{\nonumber}
\newcommand{\ul}{\underline}
\newcommand{\ol}{\overline}
\newcommand{\mc}{\mathcal}
\newcommand{\p}{\partial}
\newcommand{\et}{\mathrm{e}}
\newcommand{\pr}{\mathbb{P}}
\newcommand{\diff}{\textup{d}}
\newcommand{\prq}{{\Phi(r+q)}}
\newcommand{\R}{\mathbb{R}}
\newcommand{\timset}{\mathcal{T}}
\newcommand{\U}{~\cup~}
\renewcommand{\bar}{\overline}
\renewcommand{\tilde}{\widetilde}
\renewcommand{\hat}{\widehat}
\begin{document}

\begin{frontmatter}
\title{When to sell an asset amid anxiety about drawdowns}
\runtitle{Selling amid anxiety about drawdowns}
%\thankstext{T1}{Footnote to the title with the `thankstext' command.}

\begin{aug}
\author{\fnms{Neofytos} \snm{Rodosthenous}\thanksref{t3}%\thanksref{t1,t2}
\ead[label=e1]{n.rodosthenous@qmul.ac.uk}}
%\and
%\author{\fnms{Second} \snm{Author}\thanksref{t3}\ead[label=e2]{second@somewhere.com}}

\address{School of Mathematical Sciences, Queen Mary University of London\\ Mile End Road, London E1 4NS, UK\\
\printead{e1}}

\author{\fnms{Hongzhong} \snm{Zhang}
%\ead[label=e3]{hz2244@columbia.edu}
%\ead[label=u1,url]{www.foo.com}
}

\address{Department of IEOR, Columbia University\\
500W 120th Street, New York, NY 10027, USA\\
%\printead{e3}
%\printead{u1}
}

%\thankstext{t1}{Some comment}
%\thankstext{t2}{First supporter of the project}
\thankstext{t3}{Corresponding author.}
\runauthor{Rodosthenous and Zhang}

\affiliation{Queen Mary University of London and Columbia University}
\end{aug}

\begin{abstract}
We consider risk averse investors with different levels of anxiety about asset price drawdowns. The latter is defined as the distance of the current price away from its best  performance since inception. These drawdowns can increase either continuously or by jumps, and will contribute towards the investor's overall impatience %anxiety 
when breaching the investor's private tolerance level. We investigate the unusual reactions of investors when aiming to sell an asset under such adverse market conditions. 
Mathematically, we study the optimal stopping of the utility of an asset sale with a random discounting that captures the investor's overall impatience. %anxiety. 
The random discounting is given by the cumulative amount of time spent by the drawdowns in an undesirable high region, fine tuned by the investor's personal tolerance and anxiety about drawdowns. 
We prove that in addition to the traditional take-profit sales, the real-life employed stop-loss orders and trailing stops may become part of the optimal selling strategy, depending on different personal characteristics.
%levels of tolerance and anxiety about drawdowns. 
This paper thus provides insights on the effect of anxiety and its distinction with traditional risk aversion on decision making.
\end{abstract}

\begin{keyword}[class=MSC]
\kwd{60G40}
\kwd{60G51}
\kwd{91G80}
\end{keyword}

\begin{keyword}
\kwd{Drawdown} \kwd{L\'evy process} \kwd{optimal stopping} \kwd{Omega clock} \kwd{random discount rate} \kwd{stop-loss orders} \kwd{trailing stops}
\end{keyword}
\end{frontmatter}

\section{Introduction}

There are many economic, financial and psychological (behavioural) reasons that drive investors to panic when their asset prices experience a relatively large fall or have a relatively low value for a significant amount of time. This often results in the assets being sold at a lower than anticipated price. 
A common scenario, that leads investors to such decisions, is financial markets acting contrary to an investor's expectation for a lot longer than their investment capital can hold out. 
Another scenario is when asset prices remain at low levels, supervisors of traders grow impatient and ask ``How much longer do we have to carry this trade before it profits?''.
One can also consider the case of Commodity Trading Advisors who determine various rules for the magnitude and duration of their client accounts' drawdowns. These can be accounts that are shut down either when certain drawdowns are breached or after small long-lasting drawdowns (see, e.g. \cite{chekhlovDD} for more details). 
In all these examples, investors are driven to liquidate assets at undesirable low prices.

This phenomenon of ``selling low'' is also evident in financial markets through the extensive use of traditional {\it stop-loss} and {\it trailing stop} orders placed by investors.  
Stop-loss orders are usually placed to minimise risks associated with trading accounts, to bound losses or to protect profits, and they have a fixed value.\footnote{See \cite{Fischbacher2017} for a recent empirical study stressing the importance of stop-loss orders and their usefulness for reducing investors' disposition effect.} 
Trailing stops serve a similar purpose, but contrary to stop-loss orders, they automatically follow price movements, e.g. the asset's best historical performance.  
Even though such strategies sell a ``losing" investment without guaranteeing that  this is better than holding onto the assets, they are frequently used in practice mainly due to investors' anxiety of incurring further losses. 
Besides their aforementioned purpose, placing trailing stop orders has the additional benefit of allowing investors to focus on multiple open positions at the same time, thanks to their special self-adjustment feature (see \cite{Glynn1995} for a study of trailing stop strategies, their optimal value, duration and distribution of gains). 
Recently, the use of these strategies was also studied in various mathematical frameworks. 
For instance, the optimal combination of an up-crossing target price and a stop-loss, chosen from a specific given set, was studied in \cite{Zhang2001a} under a switching geometric Brownian motion model. 
An investigation in \cite{imkeller2014trading} under a drifted Brownian motion showed that the possibility of a negative drift also suggests the use of combinations of such strategies. 
Moreover, buying and selling strategies with exogenous trailing stops were considered in conjunction with take-profit orders in \cite{Tim_HZ18} under diffusion models.   
In practice, such stop-loss and trailing stop orders are available for use on stock, option and futures exchanges.

Although trailing stops and stop-loss orders have been widely used, there is no quantitative model that can explain the rationale behind such practices. 
In all aforementioned studies \cite{Zhang2001a,imkeller2014trading,Tim_HZ18,Zhang2018}, the use of trailing stops is exogenously imposed. 
On the other hand, Russian options and their extensions (see e.g. \cite{RZ17watermark}) involve the use of trailing stops, but as in regret theory, their objective is to protect against drawdowns, and is not concerned about the utility realised from an asset sale. The purpose of this paper is thus to provide a framework that rationalises the use of these types of orders from the perspective of selling an asset.
In particular, we neither impose any exogenous hard constraints on the set of selling strategies, nor do we have a reward that involves the running maximum of the asset's price. 
In such case, a traditional take-profit sale is usually optimal. 
However, we demonstrate in the present paper that, the use of stop-loss and trailing stop strategies naturally arises from the growing ``anxiety'' of the investor, as the asset price's performance remains at undesirable levels. 
A decision making process affected by (asset price) path-dependence can also be found, via the so-called history-dependent risk aversion models, in \cite{Dillenberger15} and references therein. 
Contrary to our work, the path-dependence affects directly the risk aversion in these models, e.g. it is incorporated in a state-dependent risk aversion coefficient $\rho$ of utility functions as in \eqref{Uu}. 
Our model can be seen as an expansion of this class of path-dependent risk aversion studies in the novel direction that we rigorously present in the sequel.

To fix ideas, consider a financial market with a risky asset whose price process $\et^X$ is modelled by a spectrally negative L\'evy process  $X=(X_t)_{t\ge0}$ on a filtered probability space $(\Omega, \mathcal{F}, \mathbb{F}=(\mathcal{F}_t)_{t\ge0},\pr)$. 
Empirical evidence suggests that such a financial market model, allowing for negative asset price jumps, is appropriate in various settings, such as equity, fixed income and credit risk (see \cite{CarrWu03} and \cite{MadanSchoutens08} among others).
In order to model the aforementioned anxiety of investors, consider also the best performance of the asset $\et^{\overline{X}}$, where 
$\overline{X}=(\overline{X}_t)_{t\geq 0}$ is the running maximum process 
associated with $X$, given by 
$\overline{X}_t = s \vee \sup_{u\in[0,t]} X_u$. 
Namely, the best performance until time $t$ is the maximum between the highest price of the asset during the time interval $[0,t]$ and the constant $s\in\R$. 
The latter represents the ``starting maximum" of the asset price and can be interpreted as the highest asset price over some previous time period $(-t_0, 0]$, for some $t_0>0$.

We model the driving concern of investors with an ``impatience" clock, when the asset price remains at relatively low levels for a significant amount of time. 
Specifically, for any fixed constant $q>0$, we construct an (impatience) Omega clock, which measures the amount of time that $X$ is below its running maximum $\overline{X}$ 
by a pre-specified level $c>0$:
\be
\varpi_t^c:=q\int_{0}^{t}\ind_{\{X_u < \overline{X}_u - c\}}\diff u.
\label{eq:omega}
\ee
In other words, it measures the amount of time that the drawdowns of the logarithm 
of the asset price $X$, away from its best performance $\overline{X}$, exceed a 
{\it tolerance level} $c$. The extent of the investors' impatience is tuned by the value of the {\it anxiety rate} $q$. 
Then, the investor is faced with a random penalisation in the form of discounting driven by \eqref{eq:omega}. 
Under the aforementioned impatience mechanism, the investor is looking for the optimal selling strategy in order to maximise the utility $U(\et^{X_\tau})$ from selling the asset  at time $\tau$.  
Mathematically, the investor aims at solving the following optimal stopping problem:
\be
V(x,s;c)=\sup_{\tau\in \timset}\ex_{x,s}[\et^{-R_\tau^c}U(\et^{X_{\tau}}) \ind_{\{\tau<\infty\}}],\quad\forall \, (x,s)\in\mc{O}_+ \equiv\{(x',s')\in\R^2: x'\le s'\} ,
 \label{eq:problem1}
 \ee
where $\timset$ is the set of all $\mathbb{F}$-stopping times $\tau$,  $R_t^c$ is the increasing process
\be \label{A}
R_t^c:=rt+\varpi_t^c,\quad\forall \, t\ge0,
\ee
with $r>0$ being the investor's original discount rate and $U(\cdot)$ is either a constant relative risk aversion (CRRA) or a risk-neutral utility function. 
Specifically, by fixing a relative risk aversion coefficient $\rho \in \,[0,1]$, we  define $U(u)$, for $u>0$, by
\be \label{Uu}
U(u) := \begin{cases}
\frac{u^{1-\rho} - 1}{1-\rho} ,\quad&\text{if } \rho \in\,[0,1) ,\\ 
\log (u) ,\quad&\text{if } \rho=1.
\end{cases}
\ee
Note that, when $\rho=0$, by scaling $K\cdot U(\frac{u}{K})=u-K$ we also obtain the revenue from selling the asset with price $u=\et^{X}$ net of the transaction cost $K>0$.

Intuitively, a large anxiety rate $q$ implies a strong penalisation of each unit of time, that the asset price's drawdown breaches the investor's tolerance level $c$. %Such anxiety 
This will naturally demand selling the asset sooner, and in some occasions, this sale may even happen with a loss, if its price does not go up quickly.  
Indeed, our analysis captures this phenomenon by proving that traditional stop-loss type strategies are optimal either when investors have a high tolerance level $c$ for drawdowns, but severe anxiety when these asset price drawdowns occur, see Theorem \ref{thm2} (also Figure \ref{fig:thm5.3}); or when investors have severe anxiety and a low drawdown tolerance, see Theorem \ref{thm3} (also Figures \ref{fig:thm5.7}). 
Furthermore, in the latter class of investors with severe anxiety even about small drawdowns, we prove that trailing stop type strategies also become part of the optimal strategy, see Theorem \ref{thm3} (also Figures \ref{fig:thm5.7}--\ref{fig:notrail}).  
Overall, through the study of problem \eqref{eq:problem1}, this paper manages to answer the following (qualitative and quantitative) question: {\it ``what are the individual tolerance and anxiety characteristics of an investor that may result in an optimal use of trailing stop and stop-loss type strategies?"}. 
Our results on the optimal selling strategy are robust with respect to the choice of utility function. Their qualitative nature remains the same and they only change quantitatively when tuning the risk aversion coefficient $\rho$. 
On the other hand, irrespective of the risk aversion of investors, the absence of severe anxiety always produces take-profit selling strategies (see Section \ref{sec:connection}). Thus, our novel results are mainly driven by anxiety, a risk factor which is not captured by traditional risk aversion.

Drawdown is widely used as a path-dependent risk indicator (see, e.g. \cite{Zhang2018}) and has been often used as a constraint for portfolio optimisation. 
The growth optimal portfolio under exogenous drawdown constraints is explicitly constructed for diffusion models in \cite{Grossman1993, Cvitanic1995}. 
Their strategy entails continuous buying and selling of risky assets, in order to meet the hard constraint on drawdowns. In particular, the portfolio will be 100\% invested in the risk-free asset once the drawdown reaches their pre-specified tolerance level. 
In contrast, our framework only imposes a soft drawdown constraint, in that breaching the investor's drawdown tolerance does not automatically trigger a sale of the risky asset. 
Additionally, our investment is irreversible and indivisible, thus the investor's objective is to find the optimal timing of a sale instead of the continuous rebalancing of the holding position.

The irreversible sale of a real asset which is indivisible, at a time chosen by the investor, is a classical topic in the optimal stopping literature. Optimal timing of an asset sale such that an expected utility is maximised was studied in \cite{Henderson2007} under an exponential utility and in \cite{HendersonHobson2008} under a CRRA utility (see also \cite{HendersonHobsonZeng2018}) in a diffusion framework. 
The problem under a risk neutral utility, which is simply given by the revenue from the sale, namely the value of the asset $\et^{X_\tau}$ net of the transaction cost $K$ at the selling time $\tau$, was studied in \cite{mordecki2002} in a general L\'evy model.\footnote{See also \cite[Example 3.5]{OksendalSulemBook} for a specific geometric L\'evy process and \cite[Example 10.2.2]{Oksendal2003_6} for a geometric Brownian motion} 
The optimal sale of an asset under a utility given by the asset price scaled by its running maximum was studied in \cite{TP_selling09} in a geometric Brownian motion model. 
All aforementioned studies are performed either with a constant discount rate or without discounting.
The study of the above optimal stopping problem with a random (stochastic path-dependent) discount rate and  exponential L\'evy asset price models was developed in \cite{OmegaRZ}. There, a risk neutral utility $U(\et^x) = \et^x-K$ was considered, in a simplified version of the problem formulated in \eqref{eq:problem1}, namely  
\be
\ol{v}(x;y):=\sup_{\tau\in\timset}\ex_x[\et^{-A_\tau^y} U(\et^{X_\tau}) \,  \ind_{\{\tau<\infty\}}],\quad\forall \, x,y\in\R\,, 
\label{eq:oldproblem}
\ee
where instead of the original Omega clock \eqref{eq:omega}, a ``level Omega clock" was used to define $A^y$, given by the occupation time 
\be
A_t^y=rt+q\int_{0}^{t}\ind_{\{X_u<y\}}\diff u \,, \quad \text{ for some fixed risk tolerance level } y\in \R.  
\label{Aold}
\ee 
This ``level Omega clock'' measures the amount of time when $X$ is below this fixed pre-specified level $y$. Above, $\ex_{x}$ is the expectation under $\pr_{x}$, which is the law of $X$ given that $X_0=x\in \mathbb{R}$. 
A European-type equivalent to the problem \eqref{eq:oldproblem} was considered under a risk-neutral utility by \cite{Linetsky99} under a geometric Brownian motion model, which he named a step option.
The American version of this option under the same model and utility and a finite maturity time was recently considered by \cite{Detemple19}.

Part of our analysis builds on the results obtained in Theorems \ref{thm00} and \ref{thm0} for the problem \eqref{eq:oldproblem}, which generalises \cite[Theorems 2.4 and 2.5]{OmegaRZ} to the class of CRRA utility functions. 
To be more precise, we show that the original optimal stopping problem \eqref{eq:problem1} reduces to its simplified version \eqref{eq:oldproblem} for specific ranges of parameter values and certain levels of asset price best performance.   
However, our problem \eqref{eq:problem1} is two-dimensional with state space process $(X,\overline{X})$, which makes the rest of the analysis significantly harder. 
In particular, the optimal exercise boundaries will be proved to be the first entry times of the first component process $X$ in intervals with boundaries, 
which are functions of the second component process $\overline{X}$.
We manage to obtain analytical solutions in all possible cases. 
This is achieved via a guess-and-verify approach, 
as well as the complete characterisation of the solution to a highly non-linear ordinary differential equation.
In mathematical terms, this paper contributes to the literature of explicitly solvable two-dimensional optimal stopping problems in L\'evy models (see \cite{Kyp_Ott12}, \cite{Kyp_Ott14}, \cite{Ott13} among others).

The remaining paper is structured as follows. 
In Section \ref{sec:pre}, we present preliminaries for spectrally negative L\'evy models and standing assumptions. 
We introduce the concepts of mild and severe anxiety in Section \ref{sec:connection} and then present our main results and economic insights in Section \ref{mainres}: Theorem \ref{thm1} (investors with mild anxiety), Theorem \ref{thm2}  and Theorem \ref{thm3} (investors with severe anxiety). 
Figures \ref{fig:H0p}, \ref{fig:thm5.3} and \ref{fig:thm5.7} respectively illustrate numerically the optimal strategies when the log price is given by a compound Poisson jump process plus a Brownian motion with drift. We also simulate our results in Figures \ref{fig:trail} and \ref{fig:notrail}, which demonstrate two possible scenarios faced by an investor with severe anxiety and a small drawdown tolerance. 
The remaining sections are devoted to proving the main results. In particular, 
we bound the value function \eqref{eq:problem1} from above and below in Section \ref{sec:comp} and show its connection with the simpler version of the problem \eqref{eq:oldproblem}.  
In Section \ref{sec:barv}, we fully solve the problem \eqref{eq:oldproblem} and completely characterise all optimal strategies.
Then, we prove the main results of this paper in Section \ref{sec:res}.
Some useful facts on scale functions of spectrally negative L\'evy processes are reviewed in Appendix \ref{app:pre}.

\section{Model and main results}
\label{sec:pre}

We consider a filtered probability space $(\Omega,\mathcal{F},\mathbb{F}=(\mathcal{F}_t)_{t\ge 0},\pr)$ on which we define the logarithm of the asset price (log price) $X=(X_{t})_{t\geq0}$ to be a spectrally negative L\'{e}vy process.
Here $\mathbb{F}$ is the augmented natural filtration of $X$. 
We denote by $(\mu,\sigma^2,\Pi)$ the L\'evy triplet of $X$, and by $\psi$ its Laplace exponent, i.e.
\begin{align*}
\psi(\beta):=&\log\mathbb{E}_0[\et^{\beta X_{1}}]=\mu \beta+\frac{1}{2}\sigma
^{2}\beta^{2}+\int_{-\infty}^{0}\left(  \et^{\beta x}-1-\beta x\ind_{\{x>-1\}}\right)
\Pi(\mathrm{d}x), 
\end{align*}
for every $\beta\in\mathbb{H}^{+}\equiv\{z\in\mathbb{C}: \Re z\ge0\}$. Here,  the L\'{e}vy measure
$\Pi(\mathrm{d}x)$ is supported on $(-\infty,0)$ satisfying $
\int_{-\infty}^{0}(1\wedge x^{2})\Pi(\mathrm{d}x)<\infty$, 
\begin{comment}
It is known that $X$ has paths of bounded variation if and only if
$\int_{]-1,0[}|x|\Pi(\mathrm{d}x)<\infty$ and $\sigma=0$. In this case, we can
rewrite \eqref{decomp} as
\begin{equation}
\psi(\beta)=\gamma\beta+\int_{]-\infty,0[}(\et^{\beta x}-1)\Pi(\mathrm{d}x),
\quad \text{for } \beta\geq0 \quad \text{and} \quad 
\gamma:=\mu+\int_{]-1,0[}|x| \Pi(\mathrm{d}x)>0 \,.
\label{psi bv}%
\end{equation}
\end{comment}
is atom-less and has a (weakly) monotone density $\pi(\cdot)$,  i.e. $\pi(-x)$ is non-increasing in $x>0$ and 
\[\Pi(-\infty,-x)=\int_{-\infty}^{-x}\pi(u)\diff u,\quad\forall x>0.\]
This condition is  weaker than the standard one of the tail measure $\Pi(-\infty,-x)$ either having a completely monotone density or being log-convex. Examples of processes that satisfy the condition include spectrally negative $\alpha$-stable process, spectrally negative CGMY model and spectrally negative hyper-exponential model.

We assume that the discount rate $r>\psi(1)$, which is equivalent to the discounted asset 
price $(\et^{-rt+X_t})_{t\ge0}$ being a super-martingale. 
For any given $r\geq0$, the equation $\psi(\beta)=r$ has at least
one positive solution, and we denote the largest one by $\Phi(r)$. 
Notice that $r>\psi(1)$ implies that $\Phi(r)>1$.
Finally, the $r$-scale function $W^{(r)}:\mathbb{R}\mapsto[0,\infty)$ is a function vanishing 
on $(-\infty,0)$, continuous on $[0,\infty)$, with a Laplace transform given by
\ben
\int_0^\infty \et^{-\beta x}W^{(r)}(x)\mathrm{d}x 
= \frac{1}{\psi(\beta)-r},\quad \text{for } \beta>\Phi(r) \,. 
\een
We assume that $W^{(r)}(\cdot)\in C^{2}(0,\infty)$ for all $r\geq0$, which is guaranteed if 
$\sigma>0$ (see e.g. \cite[Theorem 3.10]{Kuznetsov_2011}).\footnote{However, $\sigma>0$ is not a necessary condition for $W^{(r)}(\cdot)\in C^2(0,\infty)$. For instance, a spectrally negative $\alpha$-stable process with $\alpha\in(1,2)$  satisfies this condition without a Gaussian component.} The $r$-scale function $W^{(r)}(\cdot)$ is closely related to the first passage times of the spectrally negative L\'{e}vy process $X$, which are defined by\,\footnote{\label{note1}Notice also that the assumption $r>\psi(1)$ implies  
$$0\le \lim_{z\to\infty}\ex_x[\et^{-A_{T_z^+}^y}U(X_{T_z^+})\ind_{\{T_z^+<\infty\}}]\le\lim_{z\to\infty}\ex_x[\et^{-rT_z^+}U(X_{T_z^+})\ind_{\{T_z^+<\infty\}}]=\lim_{z\to\infty}U(z)\et^{-\Phi(r)z}=0.$$}
\ben
T_{x}^{\pm}:=\inf\{  t\geq0:X_{t} \gtrless x\},\quad  x\in\mathbb{R}.
\een

In Appendix \ref{app:pre} we  list a handful of useful properties and identities of $r$-scale functions.

\subsection{Two degrees of anxiety}
\label{sec:connection}

As seen in \cite{OmegaRZ}, the optimal selling strategy, for problem \eqref{eq:oldproblem} with a risk neutral utility, is largely affected by the investor's own anxiety rate $q$.  
To extend the arguments there to a general CRRA utility and eventually to our objective \eqref{eq:problem1}, we introduce the concept of  {\it mild anxiety} and {\it severe anxiety}, after conducting some preliminary analysis of problem \eqref{eq:oldproblem}.  

We focus on the case of $\rho\in[0,1)$, because the corner case of logarithmic utility can be obtained from the isoelastic utility in the limit $\rho \uparrow 1$.  

Our preliminary approach is inspired by the optimality of threshold type strategies  \cite{Long18}. We first consider the benchmark case of no anxiety, i.e. $q=0$. In this case,  we always discount at rate $r>0$. Given that 
\be
z\mapsto U(\et^z)-\frac{1}{\Phi(r)}\frac{\diff}{\diff z}U(\et^z)=\et^{(1-\rho)z}\bigg(\frac{1}{1-\rho}-\frac{1}{\Phi(r)}\bigg)-\frac{1}{1-\rho}\label{eq:hzz}\ee
is strictly increasing over $\R$, we know from \cite[Theorem 2.2]{Long18} (see also footnote \ref{note1}) that problem \eqref{eq:oldproblem} is solved by a take-profit (up-crossing) selling strategy when the log price reaches the target
$\frac{1}{1-\rho}\log(\frac{\Phi(r)}{\Phi(r)-1+\rho})>0.$

With anxiety, i.e. $q>0$, we consider the function $g(\cdot)$ given by 
\begin{align}
g(x)=\et^{(1-\rho)x} \bigg(\frac{1}{1-\rho} - \frac{1}{\Lambda(x)}\bigg)
\quad \text{with} \quad 
\Lambda(x):= 
\frac{\diff}{\diff x}\log\mc{I}^{(r,q)}(x), 
\label{Lg}
\end{align}
where $\mc{I}^{(r,q)}(\cdot)$ is defined 
by 
\be \label{Irq}
\mc{I}^{(r,q)}(x):=\int_{0}^{\infty}\et^{-\prq u}W^{(r)}(u+x)\diff u,\quad\forall \,x\in\R \,.
\ee
The function $g(\cdot)$ is continuous everywhere with one possible discontinuity at 0. 
It is known from \cite[Lemma 4.2]{OmegaRZ} that $\Lambda(\cdot)$ is strictly decreasing over $\R_+$, with limits $\Lambda(0+)\le \Phi(r+q)$ and $\Lambda(\infty)=\Phi(r)$. 
Moreover, following similar analysis to \cite{OmegaRZ}, we know that $g(\cdot)$ is strictly increasing over  $(-\infty,0)$ with a lower limit $g(-\infty)=0$, and is ultimately increasing over $[u,\infty)$ for $u\ge0$ sufficiently large, with an upper limit $g(\infty)=\infty$.  Thus,  one can unambiguously define (the largest local minimum of $g$) 
\ben 
\ol{u}:=\inf\{u\in\R: g(\cdot) \text{ is non-decreasing over }[u,\infty)\} \,.
\een
Notably, constant $\ol{u}$ is either non-negative, or $-\infty$. 

In case $\ol{u}=-\infty$, the function $g(\cdot)$ is non-decreasing over $\R$,  so for any fixed $y\in\R$, the mapping
\be \label{eq:g}
z\mapsto  U(\et^z)-\frac{1}{\Lambda(z-y)}\frac{\diff}{\diff z}U(\et^z)=\et^{(1-\rho)y}g(z-y)-\frac{1}{1-\rho},\quad\forall z\in\R
\ee 
is also non-decreasing. By \cite[Theorem 2.2]{Long18} and Lemma \ref{lem:I} (see also footnote \ref{note1}), we immediately know that \eqref{eq:oldproblem} is also solved by a take-profit selling strategy, regardless of the risk tolerance level $y$. In this case, the selling target log price is given by\,\footnote{The definition of $z^\star(y)$ holds for any value of $\ol{u}$.}
\be \label{z*} 
z^\star(y):=y+\inf\left\{u>\ol{u}: g(u) > \et^{-(1-\rho)y}/(1-\rho) \right\}\,.
\ee 
Equivalently, $z^\star(y)$ is the largest root over $(y+\ol{u},\infty)$, which is simply $\R$ when $\bar u = - \infty$, to equation
\be
g(z-y)=\frac{1}{1-\rho}\et^{-(1-\rho)y}.\label{eq:zstarg}
\ee
Because of the qualitative similarity of this type of optimal strategy with that under no anxiety (i.e. $q=0$), we henceforth refer to the case of $\ol{u}=-\infty$ as the case of {\it mild anxiety}. 

The remaining case, when $\ol{u}\ge0$, is referred to as the case of {\it severe anxiety}.
In this case, contrary to the previous mild one, an investor %In contrast to the case of mild anxiety, an investor with severe anxiety 
may choose an additional stop-loss type strategy to ``cut the loss'' depending on the risk tolerance level $y$, which is a phenomenon already documented in \cite{OmegaRZ} under a risk neutral utility\,\footnote{We shall see that such distinction still exists for our generalised problem \eqref{eq:oldproblem} (see Section \ref{sec:barv}) and our main objective \eqref{eq:problem1} (see main results in Section \ref{mainres} below).}. 
Furthermore,  the representation of candidate threshold $z^\star(y)$ as the largest root to \eqref{eq:zstarg} does not always hold under severe anxiety. Specifically, only if 
\be \label{ybar}
y\le \bar{y}:=-\frac{1}{1-\rho}\log\big((1-\rho)g(\ol{u})\big),
\ee
can we identify $z^\star(y)$ as the largest root over $(y+\ol{u},\infty)$ to \eqref{eq:zstarg}. If $y>\ol{y}$, $z^\star(y)$ of \eqref{z*} is simply equal to $y+\ol{u}$.

\begin{figure}
\centering
\includegraphics[width=3in]{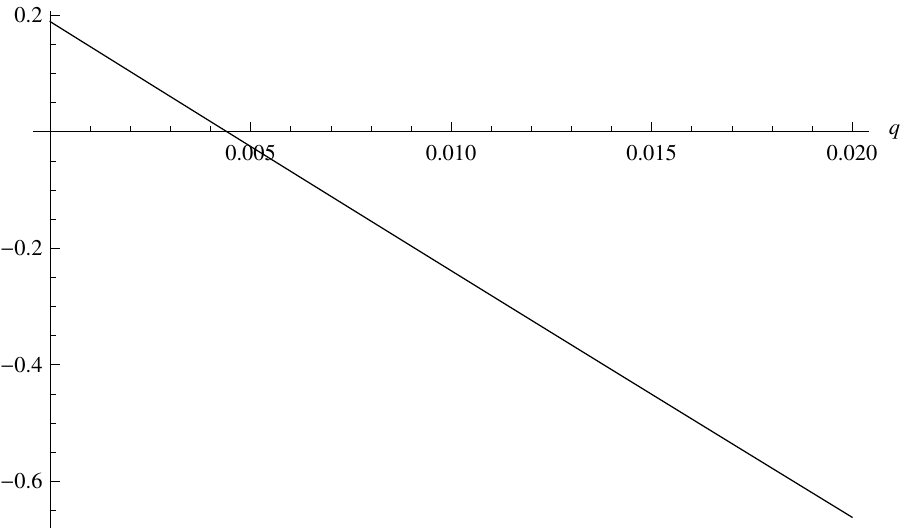}
\caption{Plot of $H^\star$ as a function of $q$. 
Here, we consider a risk neutral investor (i.e. $\rho=0$) with a discount rate $r=0.18$ and the Laplace exponent used is $\psi(\beta)=0.18\beta+0.02\beta^2-0.25(\frac{\beta}{\beta+4})$ (so the jump distribution is exponential).}
\label{fig:figH}
\end{figure}

We close this subsection by providing a convenient criterion that distinguishes severe from mild anxiety:
when the tail jump measure of the L\'evy process, denoted by $\bar{\Pi}(x):=\Pi(-\infty,-x)$ (for $x>0$), either has a completely monotone density or is log-convex, then following similar arguments as in 
\cite[Lemma 2.6]{OmegaRZ}, one can show that 
\be
\label{eq:explicit}
\ol{u}=-\infty \text{ holds } \quad \Leftrightarrow \quad 
H^\star := (\prq-1+\rho)(\prq-qW^{(r)}(0))-qW^{(r)\prime}(0+)\ge0 \,.
\ee
Figure \ref{fig:figH} plots $H^\star$ as a function of $q$, illustrating the relationship between mild anxiety ($\ol{u}=-\infty$) and small $q$, as well as severe anxiety ($\ol{u}\geq 0$) and large $q$.

\subsection{Main results} \label{mainres}

In this section we present our main result, the value function and the optimal selling strategy for problem  \eqref{eq:problem1}, when the investor has either mild  (i.e. $\ol{u}=-\infty$) or severe (i.e. $\ol{u}\ge0$) anxiety.   

\vspace{3pt}
To begin, we note that the optimal stopping of reward $U(\et^X)$ with a constant discounting rate $r+q$, is solved by a take-profit selling strategy with target log price 
\be \label{b_}
\ul{b} = \frac{1}{1-\rho} \, \log \Big( \frac{\Phi(r+q)}{\Phi(r+q)-1+\rho} \Big) > 0 \,,
\ee  
and the value function is given by
\begin{align} \label{eq:underlinev} 
\underline{v}(x)
&:=\sup_{\tau\in\timset}\ex_x[\et^{-(r+q)\tau} U(\et^{X_{\tau}}) \ind_{\{\tau<\infty\}}] 
=\ind_{\{x\le\underline{b}\}}\et^{\Phi(r+q)(x-\underline{b})}U(\et^{\, \underline{b}}) + \ind_{\{x>\underline{b}\}} U(\et^x) \,.
\end{align} 
Function $\ul{v}(\cdot)$ is smooth everywhere off the set $\{\ul{b}\}$, and is continuously differentiable over $\R$. These results  follow from \cite[Theorem 2.2]{Long18} together with the increasing property of mapping \eqref{eq:hzz} as we replace discount rate $r$ by $r+q$.

Define also the log price threshold 
\be \label{z_c}
z_c := \frac{1}{1-\rho}\log\bigg(\frac{\Lambda(c)}{\Lambda(c)-1+\rho}\bigg). 
\ee
By the monotonicity of $\Lambda(\cdot)$ over $\R_+$, we know that $z_c>\ul{b}$.

\subsubsection{Investors with mild anxiety}
\label{sec:mild}

In the first result of this section, %concerned with investors with mild anxiety, 
we present the properties of two types of take-profit sale targets $\et^{z^\star(\cdot)}$ from \eqref{z*} or $\et^{z_c}$ from \eqref{z_c}, which can be attained either before or after the asset price improves its best performance $\et^{s}$, respectively. 

\begin{lem}\label{lem21}
If $\ol{u}=-\infty$, then
\begin{enumerate}
\item[(i).] the function $z^\star(\cdot)$ of \eqref{z*} is continuous and strictly decreasing over $(-\infty,\ul{b}]$, and $z^\star(y)\equiv \ul{b}$ for all $y\ge\ul{b}$;
\item[(ii).] the log price $z_c$ from \eqref{z_c} satisfies $z^\star(z_c-c)=z_c$ and $z_c<\ul{b}+c$.
\end{enumerate}
\end{lem}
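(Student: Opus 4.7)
The proof hinges on reformulating \eqref{eq:zstarg}: multiplying by $(1-\rho)e^{(1-\rho)y}$ and rearranging yields the equivalent identity
\[
\Lambda(z-y) \;=\; \phi(z), \qquad \phi(z) \;:=\; \frac{1-\rho}{1-e^{-(1-\rho)z}},
\]
valid for $z>0$, with $\phi$ continuous and strictly decreasing from $+\infty$ at $0+$ to $1-\rho$ at $+\infty$. Combined with the behaviour of $\Lambda$ (equal to $\Phi(r+q)$ on $(-\infty,0)$ and strictly decreasing on $(0,\infty)$ from $\Lambda(0+)$ down to $\Phi(r)$), this geometric picture drives everything. A preliminary observation is that mild anxiety forces the downward jump of $g$ at the origin to vanish, hence $W^{(r)}(0+)=0$ and $\Lambda(0+)=\Phi(r+q)$.

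For part (i), the flat region $y\ge\underline{b}$ is handled by direct substitution: \eqref{b_} gives $\phi(\underline{b})=\Phi(r+q)$, and $\underline{b}-y\le 0$ gives $\Lambda(\underline{b}-y)=\Phi(r+q)$, so $z=\underline{b}$ is a root. Strict monotonicity of $g$ on $(-\infty,0)$ then identifies the infimum in \eqref{z*} as $\underline{b}-y$, so $z^\star(y)=\underline{b}$. For $y<\underline{b}$ any root $z^\star(y)$ lies above $\underline{b}>y$, so $z^\star(y)-y>0$ and $\Lambda'(z^\star-y)<0$ is finite. Since $z\mapsto\Lambda(z-y)-\phi(z)$ crosses zero at the largest root $z^\star(y)$, the implicit function theorem yields
\[
\frac{dz^\star}{dy}(y) \;=\; \frac{\Lambda'(z^\star(y)-y)}{\Lambda'(z^\star(y)-y)-\phi'(z^\star(y))} \;<\; 0,
\]
provided the denominator is strictly positive. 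Using the algebraic identity $\phi'(z^\star)=-\Lambda(z^\star-y)(\Lambda(z^\star-y)-1+\rho)$ (valid at any root), this strict positivity is equivalent to the key scale-function estimate
\[
-\Lambda'(c)\;<\;\Lambda(c)(\Lambda(c)-1+\rho),\qquad c>0,
\]
with $c:=z^\star-y$. Continuity of $z^\star$ on $(-\infty,\underline{b})$ is automatic, and continuity at $y=\underline{b}$ is a monotone-limit argument.

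For part (ii), the definition \eqref{z_c} gives $e^{-(1-\rho)z_c}=(\Lambda(c)-1+\rho)/\Lambda(c)$, hence $\phi(z_c)=\Lambda(c)=\Lambda(z_c-(z_c-c))$, so $z_c$ is a root. Moreover $\Lambda(c)<\Phi(r+q)=\phi(\underline{b})$ with $\phi$ strictly decreasing gives $z_c>\underline{b}$. Invoking the key estimate at this $c$, the function $z\mapsto\Lambda(z-(z_c-c))-\phi(z)$ passes from zero to positive at $z_c$ and, since $\Lambda\to\Phi(r)>1-\rho\to\phi(\infty)$ as $z\to\infty$, stays positive thereafter; thus $z_c$ is the maximal root. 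The inequality $z_c-c<\underline{b}$ then follows from part (i): if $z_c-c\ge\underline{b}$, part (i) would force $z^\star(z_c-c)=\underline{b}<z_c$, contradicting that $z_c$ is the maximal root. Finally, part (i) applied with $z_c-c<\underline{b}$ identifies $z^\star(z_c-c)=z_c$.

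The key scale-function estimate $-\Lambda'(c)<\Lambda(c)(\Lambda(c)-1+\rho)$ for $c>0$ is the principal technical obstacle of the whole lemma. At $c=0+$ it reduces to $qW^{(r)\prime}(0+)\le\Phi(r+q)(\Phi(r+q)-1+\rho)$, i.e.\ $H^\star\ge 0$ of \eqref{eq:explicit} (using $W^{(r)}(0)=0$), which is precisely the mild-anxiety hypothesis. To extend it to all $c>0$ I would use the identities $-\Lambda'(c)=(W^{(r)\prime}(c)-\Lambda(c)W^{(r)}(c))/\mathcal{I}^{(r,q)}(c)$ and $\mathcal{I}^{(r,q)\prime}(c)=\Lambda(c)\mathcal{I}^{(r,q)}(c)$ to reduce the estimate to
\[
W^{(r)\prime}(c)\;<\;(\Phi(r+q)-1+\rho)\,\mathcal{I}^{(r,q)\prime}(c),\qquad c>0,
\]
which should follow from the Laplace-type integral representation of $\mathcal{I}^{(r,q)\prime}$ together with the (weakly) monotone density hypothesis on the tail measure $\Pi$. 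I expect this scale-function estimate to be the bottleneck of the lemma.
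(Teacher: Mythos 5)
Your reduction of \eqref{eq:zstarg} to $\Lambda(z-y)=\phi(z)$, the identity $\phi'(z)=-\Lambda(z-y)\big(\Lambda(z-y)-1+\rho\big)$ at a root, the flat region $z^\star\equiv\ul{b}$ for $y\ge\ul{b}$, and your part (ii) are all fine and close to what the paper treats as immediate from the definitions. The genuine gap is exactly the step you flag as the bottleneck, and it is mis-aimed. Differentiating \eqref{Lg} on $(0,\infty)$ gives
\begin{equation*}
g'(x)=\frac{\et^{(1-\rho)x}}{\Lambda(x)^2}\Big(\Lambda(x)\big(\Lambda(x)-1+\rho\big)+\Lambda'(x)\Big),
\end{equation*}
so your ``key scale-function estimate'' $-\Lambda'(c)\le\Lambda(c)\big(\Lambda(c)-1+\rho\big)$ for all $c>0$ is nothing other than the statement that $g$ is non-decreasing on $(0,\infty)$, i.e.\ it is (the relevant part of) the hypothesis $\ol{u}=-\infty$ itself. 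Consequently it cannot be derived, as you propose, from the Laplace representation of $\mc{I}^{(r,q)}$ and the monotone-density assumption on $\Pi$ alone: those are standing assumptions that hold equally in the severe-anxiety regime, where the inequality fails on an interval (that failure is precisely what makes $\ol{u}\ge0$). Conversely, once you invoke $\ol{u}=-\infty$ the estimate is free, but only with ``$\le$'', and the weak form is not enough for your implicit-function-theorem step: at a tangential crossing ($g'=0$ at the largest root) the denominator $\Lambda'-\phi'$ vanishes, the IFT does not apply, and the largest-root map need not even be continuous there (a flat stretch of $g$ at the level $\et^{-(1-\rho)y}/(1-\rho)$ makes $z^\star$ jump). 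So neither the strict decrease nor the ``automatic'' continuity is established as written.

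The paper takes a different route altogether: Lemma \ref{lem21}(i) is the final assertion of Theorem \ref{thm00}, where continuity and the non-increasing property come from the optimal-stopping structure (Proposition \ref{prop41} and $\mc{D}_y=[z^\star(y),\infty)$), and strict decrease is proved probabilistically: if $z^\star(y_1)=z^\star(y_2)$ for $y_1<y_2\le\ul{b}$, the pathwise comparison $A^{y_1}_{T^+_{z^\star(y_1)}}<A^{y_2}_{T^+_{z^\star(y_2)}}$ forces $\ol{v}(x;y_1)>\ol{v}(x;y_2)$ for $x>y_2$, a contradiction; part (ii) is then read off the definitions, essentially as in your last paragraph. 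If you prefer to stay analytic, you can bypass both the estimate and the IFT with a two-point comparison using only the strict decrease of $\Lambda$ on $\R_+$ (\cite{OmegaRZ}, Lemma 4.2) and the monotone map \eqref{eq:g}: for $y_1<y_2\le\ul{b}$ and $z:=z^\star(y_2)$,
$\et^{(1-\rho)z}\big(\tfrac{1}{1-\rho}-\tfrac{1}{\Lambda(z-y_1)}\big)<\et^{(1-\rho)z}\big(\tfrac{1}{1-\rho}-\tfrac{1}{\Lambda(z-y_2)}\big)=\tfrac{1}{1-\rho}$,
so $g(z-y_1)$ lies strictly below the level in \eqref{z*} for $y_1$; since $g$ is non-decreasing and continuous at $z-y_1>0$, this gives $z^\star(y_1)>z^\star(y_2)$ directly. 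Finally, your preliminary claim that mild anxiety forces $W^{(r)}(0)=0$ is nowhere needed (for $y\ge\ul{b}$ you only use $\Lambda\equiv\Phi(r+q)$ on $(-\infty,0)$) and sits uneasily with the criterion \eqref{eq:explicit}, which permits $W^{(r)}(0)>0$ under mild anxiety; I would drop it.
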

\begin{figure}
\centering
\includegraphics[width=4in]{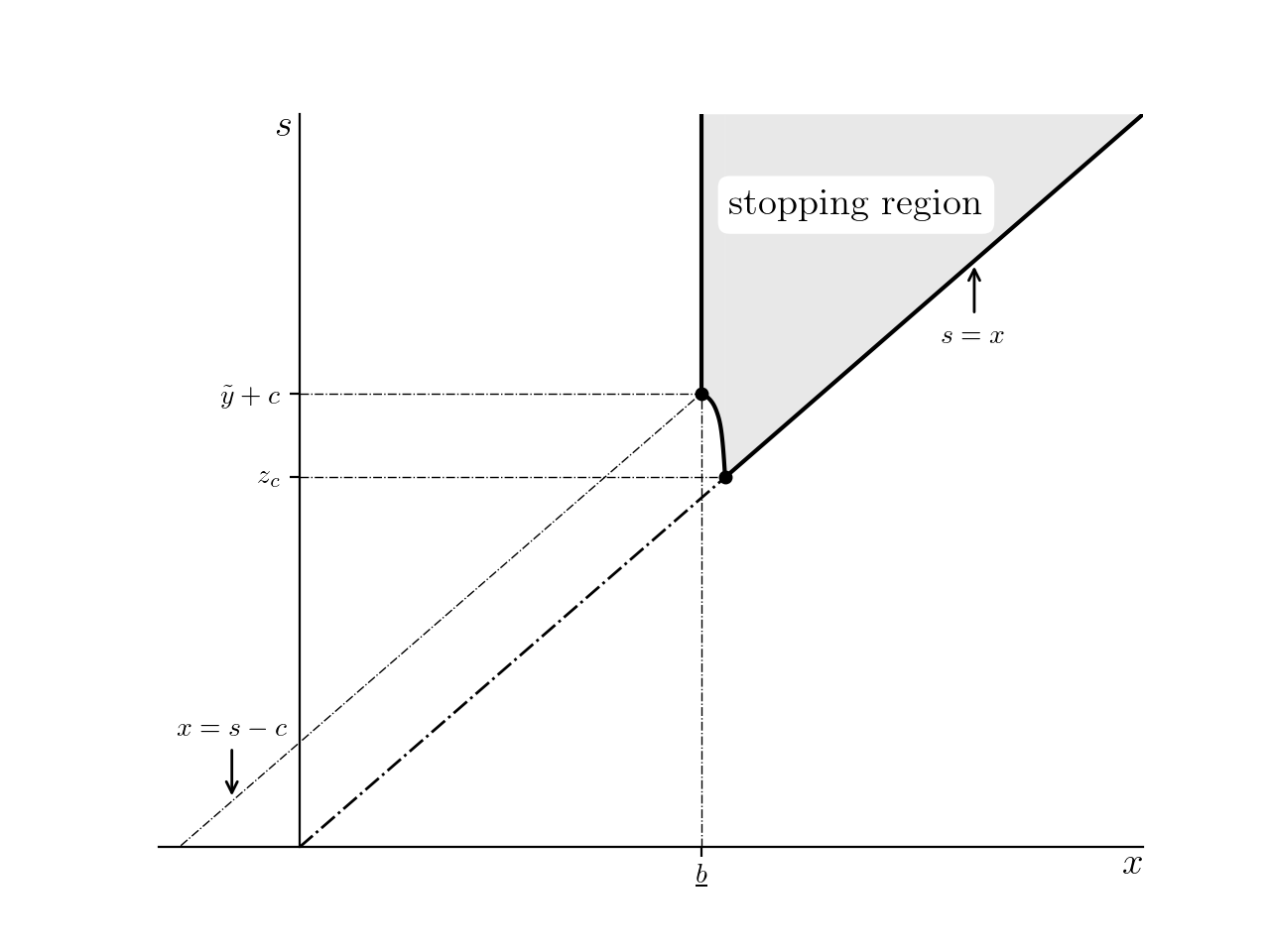}
\caption{Illustration of the optimal stopping region under mild anxiety 
($\ol{u} = -\infty$).
Model parameters:
 $c=0.3568, r=0.18, q=0.003$ and Laplace exponent used: $\psi(\beta)=0.18\beta+0.02\beta^2-0.25(\frac{\beta}{\beta+4})$. The investor solves a risk-neutral sale with transaction cost  $K=10$. Here, we have $H^\star=0.0603, \underline{b}=4.1903$ and $z_c=4.2616$. The axes origin is at $(3,3)$ in the figure. }
\label{fig:H0p}
\end{figure}

The optimality of the above selling strategies is given in the following result and is proved in Section \ref{sec:51} via the use of variational inequalities (see the beginning of Section \ref{sec:res}) and the results obtained in the subsequent Sections \ref{sec:comp} and \ref{sec:barvm}. 

\begin{thm}\label{thm1}
For an investor with mild anxiety (i.e. $\ol{u}=-\infty$), 
the value function for problem \eqref{eq:problem1} is given by
\be
V(x,s;c)=\begin{dcases}\et^{-\Lambda(c)(z_c-s)}
\frac{\mc{I}^{(r,q)}(x-s+c)}{\mc{I}^{(r,q)}(c)} \, U(\et^{z_c}) 
,&\text{if } s< z_c;\\
\ind_{\{x<z^\star(s-c)\}}\frac{\mc{I}^{(r,q)}(x-s+c)}{\mc{I}^{(r,q)}(z^\star(s-c)-s+c)} \, U(\et^{z^\star(s-c)}) +\ind_{\{x\ge z^\star(s-c)\}} U(\et^x),&\text{if }  s\ge z_c. 
\end{dcases}\label{eq:thm1}
\ee
The optimal selling region (see Figure \ref{fig:H0p} for an illustration) is 
$$
\mc{S}_c =
\{(x,s)\in\mc{O}_+ \;:\; x \ge z^\star(s-c) \text{ and } \; s\ge z_c \}.
$$
\end{thm}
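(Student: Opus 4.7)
The plan is to guess a candidate $W(x,s)$ matching the right-hand side of \eqref{eq:thm1} and establish $V=W$ by two-sided bounds, exploiting the connection with problem \eqref{eq:oldproblem} from Section \ref{sec:comp} and its explicit mild-anxiety solution from Section \ref{sec:barvm}. The candidate arises from the following observation: on each horizontal slice $\{\ol{X}_u=s\}$, up to the first time $X$ revisits $s$, the running maximum is constant, so $\ind_{\{X_u<\ol{X}_u-c\}}=\ind_{\{X_u<s-c\}}$ and $R_u^c=A_u^{s-c}$. Locally, problem \eqref{eq:problem1} therefore mimics \eqref{eq:oldproblem} at risk tolerance $y=s-c$, whose mild-anxiety solution (Theorem \ref{thm0}) is a take-profit sale at log-price $z^\star(s-c)$. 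Two regimes arise according to whether $z^\star(s-c)\le s$ or $>s$, and by Lemma \ref{lem21}(i)--(ii) the separating value of $s$ is exactly $z_c$. For $s\ge z_c$, monotonicity gives $z^\star(s-c)\le z^\star(z_c-c)=z_c\le s$, so the local take-profit level is reached before the maximum can grow and letting $\ol{X}$ increase is suboptimal; hence $W(x,s)=\bar v(x;s-c)$, which matches the second branch of \eqref{eq:thm1}. For $s<z_c$, the local target lies strictly above $s$, so the candidate waits for $\ol{X}$ to increase. By the strong Markov property at $T_s^+$ and the scale-function identity
\ben
\ex_{x,s}\big[\et^{-A_{T_s^+}^{s-c}}\ind_{\{T_s^+<\infty\}}\big]=\frac{\mc{I}^{(r,q)}(x-s+c)}{\mc{I}^{(r,q)}(c)},\qquad x\le s,
\een
from Appendix \ref{app:pre}, together with the Neumann condition $\partial_s W(s,s)=0$ along the diagonal (the standard reflection condition for running-maximum problems), one then integrates the resulting ODE for $W(s,s)$ on $[s,z_c]$ with terminal value $W(z_c,z_c)=U(\et^{z_c})$ to obtain $W(s,s)=\et^{-\Lambda(c)(z_c-s)}U(\et^{z_c})$, matching the first branch of \eqref{eq:thm1}.

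For the lower bound $V\ge W$, I would exhibit the explicit stopping time
\ben
\tau^\star:=\inf\big\{t\ge 0\,:\,\ol{X}_t\ge z_c \text{ and } X_t\ge z^\star(\ol{X}_t-c)\big\}
\een
and compute $\ex_{x,s}[\et^{-R_{\tau^\star}^c}U(\et^{X_{\tau^\star}})\ind_{\{\tau^\star<\infty\}}]=W(x,s)$ by conditioning on the successive upcrossing times of levels $s'\in[s,z_c]$ by $\ol{X}$ and applying the scale-function identity above on each slice. For the upper bound $V\le W$, I would apply the Meyer-It\^o formula (cf.\ \cite{Kyp_Ott12,Kyp_Ott14}) to $(\et^{-R_t^c}W(X_t,\ol{X}_t))_{t\ge 0}$. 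The $\diff\ol{X}$-term, supported on $\{X_t=\ol{X}_t\}$, vanishes for $s<z_c$ by the Neumann condition and for $s\ge z_c$ because the diagonal lies in the stopping region. On the continuation region, the drift coefficient reduces to a multiple of $g(z-y)-\et^{-(1-\rho)y}/(1-\rho)$ (with $y=s-c$ and $z=x$, cf.\ \eqref{eq:g}), which is non-positive precisely because mild anxiety ($\ol{u}=-\infty$) makes $g$ non-decreasing and the root characterisation of $z^\star$ in \eqref{eq:zstarg} fixes the correct sign. On the stopping region $W(x,s)=U(\et^x)$ and a direct computation again gives a non-positive drift. Optional stopping with a suitable localising sequence then yields $W(x,s)\ge\ex_{x,s}[\et^{-R_\tau^c}U(\et^{X_\tau})\ind_{\{\tau<\infty\}}]$ for every $\tau\in\timset$.

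The main obstacle I anticipate is the regularity of $W$ at the two interfaces $\{x=s\}$ and $\{s=z_c\}$: globally $W$ is only continuous, with $\partial_s W$ jumping across $s=z_c$ and $\partial_x W$ potentially discontinuous at $x=z^\star(s-c)$ for $s\ge z_c$. To close the verification one must check (a) continuity of $W$ across $s=z_c$, which follows from $z^\star(z_c-c)=z_c$ in Lemma \ref{lem21}(ii); (b) smooth fit of $\partial_x W$ at $x=z^\star(s-c)$ for $s\ge z_c$, from the first-order condition defining $z^\star$ and the $C^2$-regularity of $W^{(r)}$; (c) the Neumann condition along the diagonal for $s<z_c$; and (d) that the singular $\ol{X}$-term in the Meyer-It\^o expansion does not contribute spurious mass when $\ol{X}$ crosses $z_c$. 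Each of these reduces to an algebraic identity involving $\mc{I}^{(r,q)}$, $\Lambda$ and the definition \eqref{z_c} of $z_c$, so the hard work is bookkeeping rather than new probabilistic input.
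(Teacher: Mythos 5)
Your overall architecture (candidate built from the slice-wise reduction and the Neumann ODE along the diagonal; lower bound by exhibiting the strategy; upper bound by a supermartingale argument) is the same verification scheme the paper sets up at the start of Section \ref{sec:res}, except that the paper dispatches the region $s\ge z_c$ more cheaply through Proposition \ref{prop:compare}: since $z^\star(\cdot)$ is non-increasing, $s\ge z_c$ gives $z^\star(s-c)\le z^\star(z_c-c)=z_c\le s$, hence $[s,\infty)\subseteq\mc{D}_{s-c}$ and $V(x,s;c)=\ol{v}(x;s-c)$ with no fresh verification needed there. The genuine gap in your plan is in the upper bound. Optional stopping applied to the supermartingale $\et^{-R_t^c}W(X_t,\ol{X}_t)$ only yields $W(x,s)\ge\ex_{x,s}[\et^{-R_\tau^c}W(X_\tau,\ol{X}_\tau)\ind_{\{\tau<\infty\}}]$; to compare with the reward you also need the pointwise dominance $W(x,s)\ge U(\et^x)$ on the whole continuation region, and this is nowhere verified in your proposal (it is absent from your checklist (a)--(d) as well). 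It is not bookkeeping: for $s<z_c$ it is the main content of the paper's proof --- one studies the ratio $R(x,s;c)=U(\et^x)/W(x,s)$, shows $\frac{\p}{\p x}R>0$ for $x\le s<z_c$ using the monotonicity of $g$ (mild anxiety) together with the definition \eqref{z_c} of $z_c$ (cf.\ \eqref{eq:Rpartial}), then shows that $s\mapsto R(s,s;c)$ is strictly increasing with $R(z_c,z_c;c)=1$, and only the combination of the two gives $W\ge U$ below $z_c$.

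Relatedly, you place the $g$-monotonicity in the wrong slot. On the continuation region the drift $(\mc{L}-r-q\ind_{\{x<s-c\}})W$ is identically zero by construction --- that is exactly how the candidate was assembled from the occupation-time/scale-function identities (cf.\ \eqref{eq:L1}) --- and is not ``a non-positive multiple of $g(z-y)-\et^{-(1-\rho)y}/(1-\rho)$''; that expression is what appears when differentiating the ratio $U/W$ in $x$, i.e.\ precisely in the dominance step you omitted. Conversely, on the stopping region the inequality $(\mc{L}-r-q\ind_{\{x<s-c\}})U(\et^x)\le0$ for $x\ge z^\star(s-c)$ is the part that genuinely needs an argument; ``a direct computation'' is optimistic, and the clean route is to inherit it from the one-dimensional theory (Theorem \ref{thm00}, resting on \cite{Long18}) or, as the paper does, to bypass the region $s\ge z_c$ altogether via Proposition \ref{prop:compare}. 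With the dominance step supplied by the ratio argument and the stopping-region superharmonicity sourced from the one-dimensional results, your plan closes and essentially coincides with the paper's proof; your explicit lower-bound computation with $\tau^\star$ is fine but redundant once the candidate is shown to satisfy \eqref{verifeq}--\eqref{verifeq2}.
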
 

Theorem \ref{thm1} asserts that an investor with mild anxiety does not behave so differently from an investor with no anxiety ($q=0$) when facing problem \eqref{eq:problem1}. The optimal strategy is always given by a take-profit sale. 
However, observe that the two regions of the value function \eqref{eq:thm1}, when $s \lessgtr z_c$, are not communicating. The optimal selling target thus depends explicitly on the past, through the asset's historical best performance at the price $\et^s$. To be more precise:

\begin{enumerate}
\item[(i)] When the starting maximum log price $\ol{X}_0=s$ is lower than the target $z_c$, the investor should hold onto the asset and sell it once its log price reaches the threshold $z_c$;

\item[(ii)]  When the starting maximum log price $\ol{X}_0=s$ is already higher than the target $z_c$, then the investor is less patient about adverse movements of the asset price, hence lowers the optimal selling target to the log price $z^\star(s-c)$.
\end{enumerate}

\subsubsection{Investors with severe anxiety}

In this case, %of {\it severe anxiety}, 
i.e. when $\ol{u}\ge0$, the structure of the optimal selling strategy for problem \eqref{eq:problem1} changes as the investor's tolerance level $c$ varies. 
In order to define the critical regions of $c$-values, we firstly need to specify two values $\hat y$ and $\tilde y$ that are closely related to variational inequalities and martingale methods associated to the optimality of take-profit selling strategies.   
To formalise the following results, consider the function 
\be \label{eq:57}
\chi(x):=
\frac{r}{1-\rho} - \frac{r-\psi(1-\rho)}{1-\rho} \et^{(1-\rho)x} +\int_{-\infty}^{\underline{b}-x} \left(\ul{v}(x+w)-U(\et^{x+w}) \right) \Pi(\diff w),\quad \forall\;x\ge\underline{b}. 
\ee
Following similar analysis to \cite{OmegaRZ}, one can show that $\chi(\cdot)$ is continuous and strictly decreasing over $[\underline{b},\infty)$, and satisfies $\chi(\infty)=-\infty$. Thus, we can define 
\be \label{Lv_1}
\hat{y} := \inf \left\{ y \ge \ul{b} \,:\, 
\chi(y)
\leq 0 \,\right\}.
\ee 
Moreover, we define another critical log price threshold $\tilde{y}$, given by 
\be \label{tily}
\tilde{y} := \inf \left\{ y \le\ol{y} 
\,:\, \sup_{x<y+\ol{u}} \frac{U(\et^x)}{\mc{I}^{(r,q)}(x-y)} = \frac{U(\et^{z^\star(y)})}{\mc{I}^{(r,q)}(z^\star(y)-y)}\right\}.
\ee 

In the following lemma, %concerned with investors with severe anxiety, 
we present the properties of three types of take-profit sale targets. 
On the one hand, $\et^{z_c}$ from \eqref{z_c} that can be attained after the asset price improves its best performance $\et^{s}$. 
On the other hand, $\et^{z^\star(\cdot)}$ from \eqref{z*} or the novel $b^\star(\cdot)$, which is also associated with a stop-loss type sale target $a^\star(\cdot)$, that can be attained before the asset price improves its best performance $\et^{s}$. 

\begin{lem}\label{lem23}
If $\ol{u}\ge0$, then
\begin{enumerate}
\item[(i).] the function $z^\star(\cdot)$ of \eqref{z*} is continuous and strictly decreasing over $(-\infty, \tilde{y}]$;
\item[(ii).] we have $\ul{b}<\tilde{y}<\ol{y}$ and $\tilde{y}<\hat{y}$;
\item[(iii).] by defining the positive\,\footnote{Positivity of $\tilde{c}$ is due to $\tilde{c}>z^\star(\ol{y})-\ol{y}\ge\ol{u}\ge0$.}
\hspace{-2pt}constant 
\be
\tilde{c}:=z^\star(\tilde{y})- \tilde{y}, 
\label{tilc}
\ee 
we have for all $c\ge\tilde{c}$ that $z^\star(z_c-c)=z_c$.
\item[(iv).] for any $y\in[\tilde{y}, \hat{y})$, let $a^\star(y), b^\star(y)$ be defined as
\be
\label{a*b*}
a^\star(y)=\inf\{a\in[\underline{b},y): \mc{N}^a\neq\emptyset\} 
\;\; \text{ and } \;\;
b^\star(y)=\inf\mc{N}^{a^\star(y)},\ee
where $\mc{N}^a:=\{x\in \,(y,z^\star(\tilde{y})]: \Delta(x,a;y)\le 0\}$ is the negative set of the function indexed by $a\in[\ul{b}, y)$:
\begin{equation} \label{D}
x\mapsto\Delta(x,a;y)=\int_{a}^yW^{(r,q)}(x,w;y)\cdot[q\,\underline{v}(w)-\chi(w)]\diff w-\int_{y}^{x\vee y}W^{(r)}(x-w)\cdot\chi(w)\diff w \,,
\end{equation} 
with $W^{(r,q)}(x,a;y)$ given in \eqref{eq:bigW1}. 
Then $a^\star(\cdot)$ and $b^\star(\cdot)$  are respectively strictly increasing/decreasing continuous functions on $[\tilde{y},\hat{y})$, and satisfy
\be
a^\star(\tilde{y})=\ul{b}, \quad b^\star(\tilde{y}) = z^\star(\tilde{y})\equiv z_{\tilde{c}},\quad  \lim_{y\uparrow \hat{y}}a^\star(y)=\lim_{y\uparrow \hat{y}}b^\star(y)=\hat{y}.\label{eq:blimit}\ee
\end{enumerate}
\end{lem}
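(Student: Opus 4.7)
The plan is to treat parts (i)--(iii) by direct manipulation of the implicit equations defining $z^\star$ and $z_c$, and then to devote the main effort to part (iv), which concerns the two-parameter family $\Delta(\cdot,a;y)$.

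For part (i), the key observation (after invoking part (ii)) is that $y\le\tilde y<\bar y$, so $z^\star(y)$ is characterised as the largest root over $(y+\bar u,\infty)$ of the equation $g(z-y)=(1-\rho)^{-1}e^{-(1-\rho)y}$ with $z^\star(y)-y>\bar u$ strictly. Implicit differentiation then gives
\[
(z^\star)'(y) \equ 1 - \frac{e^{-(1-\rho)y}}{g'(z^\star(y)-y)},
\]
and the explicit formula $g'(x)=(1-\rho)g(x)+e^{(1-\rho)x}\Lambda'(x)/\Lambda(x)^2$ combined with $\Lambda'<0$ on $\R_+$ yields $g'(z^\star(y)-y)<(1-\rho)g(z^\star(y)-y)=e^{-(1-\rho)y}$, hence $(z^\star)'(y)<0$. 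Continuity follows from continuity of $g$.

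For part (ii), I would leverage the variational-inequality analysis of $\bar v(\cdot;y)$ from Section \ref{sec:barv}. The inequality $\underline b<\tilde y$ follows because at $y=\underline b$ the ratio $U(e^x)/\mc I^{(r,q)}(x-y)$ over $x<y+\bar u$ strictly beats the value at the take-profit candidate $z^\star(\underline b)$, ruling out equality. For $\tilde y<\bar y$, continuity of the ratios in $y$ together with the boundary identity $z^\star(\bar y)=\bar y+\bar u$ shows that the take-profit candidate already dominates just below $\bar y$. For $\tilde y<\hat y$, one exploits that $\chi>0$ on $[\underline b,\hat y)$ produces, via the Green-kernel representation of $\bar v$, strategies that strictly improve upon any pure take-profit target for $y$ close to $\hat y$, so such $y$ cannot be the first at which pure take-profit is optimal. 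For part (iii), I would substitute $y=z_c-c$ into the defining equation and verify, using $e^{(1-\rho)z_c}=\Lambda(c)/(\Lambda(c)-1+\rho)$ from \eqref{z_c}, the identity $g(c)=(1-\rho)^{-1}e^{-(1-\rho)(z_c-c)}$. The restriction $c\ge\tilde c$ ensures $z_c-c\le\tilde y$ and $c\ge\tilde c>\bar u$, placing $c$ in the strictly increasing region of $g$ on $[\bar u,\infty)$, so that $c$ is the largest root and $z^\star(z_c-c)=z_c$.

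Part (iv) is the main obstacle. I would begin by noting joint continuity (and joint differentiability) of $(x,a,y)\mapsto\Delta(x,a;y)$, which follows from smoothness of the scale functions $W^{(r)}$ and $W^{(r,q)}$. For fixed $y\in[\tilde y,\hat y)$, I would exploit the identity
\[
\partial_a\Delta(x,a;y) \equ -W^{(r,q)}(x,a;y)\bigl[q\underline v(a)-\chi(a)\bigr],
\]
which controls how the family evolves in $a$ and ultimately shows that $\mc N^a=\emptyset$ for $a$ near $\underline b$ and becomes non-empty at a unique critical value $a^\star(y)$, at which the minimum of $\Delta(\cdot,a^\star;y)$ just touches zero at a unique point $b^\star(y)$, so that the smooth-fit pair $\Delta(b^\star,a^\star;y)=\partial_x\Delta(b^\star,a^\star;y)=0$ holds. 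The implicit function theorem applied to this system yields continuity (and differentiability) of $a^\star(\cdot),b^\star(\cdot)$; strict monotonicities follow by differentiating the smooth-fit conditions in $y$ and tracking sign information using $\chi>0$ on $[\underline b,\hat y)$ and the monotonicity of $q\underline v-\chi$. The boundary values $a^\star(\tilde y)=\underline b$ and $b^\star(\tilde y)=z^\star(\tilde y)=z_{\tilde c}$ match the definition of $\tilde y$ combined with part (iii) at $c=\tilde c$, while $a^\star(y),b^\star(y)\to\hat y$ as $y\uparrow\hat y$ comes from $\chi(\hat y)=0$ forcing the two smooth-fit roots to coalesce at $\hat y$. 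The hardest step will be rigorously establishing the transition at $a^\star(y)$ and the uniqueness of the critical $a$: proving that $a\mapsto\min_x\Delta(x,a;y)$ crosses zero exactly once requires delicate sign analysis of $q\underline v-\chi$ together with positivity and monotonicity properties of the Green-type kernel $W^{(r,q)}(\cdot,\cdot;y)$.
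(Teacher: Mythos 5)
Your proposal has genuine gaps, most seriously in parts (ii) and (iv). For part (ii), your stated mechanism for $\ul{b}<\tilde{y}$ points in the wrong direction: if at $y=\ul{b}$ the supremum of $U(\et^x)/\mc{I}^{(r,q)}(x-y)$ over $x<y+\ol{u}$ \emph{strictly exceeded} the value at the candidate $z^\star(\ul{b})$, then by the monotonicity in $y$ of the normalised ratio this would place $\tilde{y}$ of \eqref{tily} at or below $\ul{b}$, the opposite of the claim; and even with the inequality reversed you offer no argument for it. The paper's route is different: working with $D(x;y)=U(\et^x)/\ol{v}(x;y)$, one shows $D(x;\cdot)$ is strictly increasing for $x<y$, and at $y=\tilde{y}$ the touching point $x_0$ where $\sup_{x<\ol{u}+\tilde{y}}D=1$ is a stationary point which Proposition \ref{prop41}(iii) forces to lie in $(-\infty,\tilde{y}]$; the stationarity equation $(1-\rho)^{-1}=\et^{(1-\rho)\tilde{y}}g(x_0-\tilde{y})$ together with $\Lambda\equiv\Phi(r+q)$ on the negative half-line then pins $x_0=\ul{b}$, whence $\ul{b}<\tilde{y}$. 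Likewise $\tilde{y}<\hat{y}$ is not obtained from ``strategies improving on take-profit near $\hat{y}$'' (which is circular before the two-sided analysis of part (iv) is available), but from $\mc{D}_{\tilde{y}}=\{\ul{b}\}\cup[z^\star(\tilde{y}),\infty)\neq[\ul{b},\infty)=\mc{D}_y$ for $y\ge\hat{y}$ (Theorem \ref{thm0}). Two smaller holes: in part (iii), the step ``$c\ge\tilde{c}$ ensures $z_c-c\le\tilde{y}$'' requires the strict monotonicity of $y\mapsto z^\star(y)-y$ (the map $m$ of \eqref{eq:m}), which you never establish; and in part (i), your implicit-differentiation formula needs $g'(z^\star(y)-y)>0$, which your bound $g'<(1-\rho)g$ does not provide — the paper instead proves strictness probabilistically, by comparing the clocks $A^{y_1}$ and $A^{y_2}$ as in Theorem \ref{thm00}.

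For part (iv), your sign identity $\partial_a\Delta(x,a;y)=-W^{(r,q)}(x,a;y)\,[q\ul{v}(a)-\chi(a)]$ is correct and useful, but everything that the lemma actually asserts — the single crossing of $a\mapsto\min_x\Delta(x,a;y)$, existence and uniqueness of the touching point $b^\star(y)$ with the smooth-fit pair, the strict monotonicity of $a^\star(\cdot)$ and $b^\star(\cdot)$, the boundary identity $b^\star(\tilde{y})=z^\star(\tilde{y})$, and the coalescence $a^\star,b^\star\to\hat{y}$ — is asserted rather than proved, and you flag the hardest of these as unresolved. The paper does not attempt this free-boundary analysis in the abstract: it identifies $a^\star(y),b^\star(y)$ with the pasting points \eqref{eq:def_bs} of the optimal stopping problem \eqref{eq:oldproblem}, obtains \eqref{vflast} from the two-sided exit representation \eqref{eqeqeqeqeq}--\eqref{V2V22}, imposes continuous/smooth fit to reduce to the system $\Delta(b,a;y)=\partial_b\Delta(b,a;y)=0$, and imports the remaining verification (including the equivalence with the definition \eqref{a*b*} via the negative set $\mc{N}^a$) from \cite[Theorem 2.5(iv)]{OmegaRZ}; continuity, limits and weak monotonicity then follow from the structure and monotonicity in $y$ of $\mc{D}_y$ (Proposition \ref{prop41}), with strictness again argued probabilistically. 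Without either completing your analytic single-crossing and implicit-function programme or invoking this optimal-stopping identification, the proposal does not establish part (iv).
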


We differentiate investors with severe anxiety into two types: the ones with a {\it high} and the others with a {\it low tolerance} for asset price drawdowns, based on whether the investor's drawdown tolerance level $c\ge\tilde{c}$ or not. Here $\tilde{c}$ is the constant defined in \eqref{tilc}.

The optimality of the above selling strategies for the former class of investors %with severe anxiety and high drawdown tolerance 
is given in the following result. This is proved in Section \ref{highc} via the use of variational inequalities (see the beginning of Section \ref{sec:res}) and the results obtained in the subsequent Sections \ref{sec:comp} and \ref{sec:barvs}. 

\begin{thm}\label{thm2}
For an investor with severe anxiety (i.e. $\ul{u}\ge0$) and a high drawdown tolerance $c\ge\tilde{c}$, the value function for problem \eqref{eq:problem1} is given by 
\be
V(x,s;c)=\begin{dcases}
\et^{-\Lambda(c)(z_c-s)} \frac{\mc{I}^{(r,q)}(x-s+c)}{\mc{I}^{(r,q)}(c)} \, U(\et^{z_c}), 
&\text{if } s< z_c;\\
\ind_{\{x<z^\star(s-c)\}}\frac{\mc{I}^{(r,q)}(x-s+c)}{\mc{I}^{(r,q)}(z^\star(s-c)-s+c)} \, U(\et^{z^\star(s-c)}) +\ind_{\{x\ge z^\star(s-c)\}} U(\et^x), 
&\text{if } z_c \le s < \tilde{y} +c; \\
\underline{v}(x) + \ind_{\{a^\star(s-c)<x<b^\star(s-c)\}} \, \Delta(x,a^\star(s-c);s-c),
&\text{if } \tilde{y}+c \leq s < \hat{y}+c; \vspace{4.5pt}\\
\underline{v}(x),
&\text{if } s \geq \hat{y}+c. 
\end{dcases}\label{eq:thm2}
\ee
The optimal selling region (see Figure \ref{fig:thm5.3} for an illustration) is $\mc{S}_c = \mc{S}_c^1\cup\mc{S}_c^2\cup\mc{S}_c^3$, where
\begin{align*}
\mc{S}_c^1&=
\{(x,s)\in\mc{O}_+ \;:\; x\ge z^\star(s-c) \,\text{ and } \; z_c\le s<\tilde{y}+c \},\\
\mc{S}_c^2&=\{(x,s)\in\mc{O}_+ \;:\;  \ul{b}\le x\le a^\star(s-c) \text{ or } x\ge b^\star(s-c), \,\text{ and } \; \tilde{y}+c\le s< \hat{y}+c \},\\
\mc{S}_c^3&=\{(x,s)\in\mc{O}_+ \;:\; x \ge \ul{b} \,\text{ and } \;  s \geq \hat{y}+c \}.  
\end{align*}
\end{thm}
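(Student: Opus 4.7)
The plan is a guess-and-verify argument based on variational inequalities for the two-dimensional Markov process $(X,\ol X)$. The crucial simplifying observation is that on any interval during which $\ol X$ stays constant at level $s$, the clock $\varpi^c_t$ of \eqref{eq:omega} agrees with the ``level Omega clock'' \eqref{Aold} at the fixed risk-tolerance $y=s-c$. Hence, on each horizontal slice $\{\ol X=s\}$, the problem collapses to the one-dimensional auxiliary problem $\bar v(\cdot;s-c)$ of \eqref{eq:oldproblem}, which is fully solved in Section \ref{sec:barvs} under severe anxiety. The two-dimensional candidate is then obtained by gluing these slices along the diagonal $\{x=s\}$ via the Neumann condition $\partial_s V(x,s;c)\big|_{x=s}=0$ induced by the local time of $\ol X$ there; indeed, any slice-harmonic candidate is proportional to $\mc{I}^{(r,q)}(\cdot-s+c)$, and the Neumann condition forces the proportionality constant to be of the form $\et^{-\Lambda(c)s}$ times a constant fixed by matching.

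I would organise the verification by the position of $s$ relative to the thresholds $z_c$, $\tilde y+c$ and $\hat y+c$, each case calling on the one-dimensional solution at $y=s-c$. For $s<z_c$ the investor waits for $\ol X$ to reach $z_c$; the candidate is obtained by combining the first-passage Laplace transform $\mc{I}^{(r,q)}(x-s+c)/\mc{I}^{(r,q)}(c)$ (from Lemma \ref{lem:I}) with the diagonal value $V(s,s;c)=\et^{-\Lambda(c)(z_c-s)}U(\et^{z_c})$ determined as above. For $z_c\le s<\tilde y+c$, Lemma \ref{lem23}(i),(iii) identify $z^\star(s-c)$ as the take-profit target of $\bar v(\cdot;s-c)$, yielding line 2 of \eqref{eq:thm2}. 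For $\tilde y+c\le s<\hat y+c$, Lemma \ref{lem23}(iv) provides the combined stop-loss/take-profit pair $(a^\star(s-c),b^\star(s-c))$ and the excess $\Delta(x,a^\star(s-c);s-c)$ over the pure take-profit value $\ul v(x)$. For $s\ge\hat y+c$ the anxiety overwhelms any benefit of postponing, the waiting strategy collapses, and the candidate reduces to $\ul v(x)$.

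With the candidate in hand, I would apply an It\^{o}--Meyer change-of-variables formula to $\et^{-R_t^c}V(X_t,\ol X_t;c)$ up to an arbitrary stopping time. On the candidate continuation region within each horizontal slice, $V$ satisfies the integro-differential equation $(\mathcal{L}-r-q\ind_{\{x<s-c\}})V=0$ inherited from the one-dimensional analysis, while on the candidate stopping region $V(x,s;c)=U(\et^x)$ and the same operator applied to $V$ is non-positive. The local-time contribution from the reflection of $\ol X$ along the diagonal is killed by the Neumann condition. This produces the supermartingale inequality giving the upper bound $V\le$ candidate, while the lower bound follows by applying the same formula at the stopping time of first entry of $(X,\ol X)$ into $\mc S_c$ and showing the resulting local martingale is uniformly integrable (using the bounds from Section \ref{sec:comp}).

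The hardest part is the middle regime $\tilde y+c\le s<\hat y+c$, where the stopping region $\mc S_c^2$ is disconnected, consisting of $[\ul b,a^\star(s-c)]\cup [b^\star(s-c),\infty)$. Here one must verify three delicate matching conditions simultaneously: continuous fit at $\ul b$ (inherited from $\ul v$), smooth fit at $a^\star(s-c)$ and $b^\star(s-c)$ (which is precisely what the definition \eqref{a*b*} enforces via the negative set of $\Delta(\cdot,a;y)$), and the Neumann condition at the diagonal combined with continuous matching of the four candidate pieces across the three transition strips $s=z_c$, $s=\tilde y+c$, $s=\hat y+c$. The last item relies crucially on Lemma \ref{lem23}(iii)--(iv), in particular on $b^\star(\tilde y)=z^\star(\tilde y)=z_{\tilde c}$, $a^\star(\tilde y)=\ul b$, and the coalescence limits \eqref{eq:blimit}, which ensure that the four pieces of \eqref{eq:thm2} glue into a single globally continuous function with no spurious local-time contributions from the regime boundaries.
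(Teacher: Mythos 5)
Your guess-and-verify skeleton is compatible with the paper's framework, but your route diverges where it matters and your sketch misidentifies the difficulty. For $s\ge z_c$ the paper does \emph{not} re-verify the variational inequality slice by slice: it invokes Proposition \ref{prop:compare}, which gives $V(x,s;c)\le \ol{v}(x;s-c)$ with equality if and only if $[s,\infty)\subseteq\mc{D}_{s-c}$. Since $c\ge\tilde{c}$ forces $y_c\le\tilde{y}$ and $\mc{D}_y$ is monotone in $y$, this inclusion holds for every $s\ge z_c$, so lines 2--4 of \eqref{eq:thm2} are inherited wholesale from Theorem \ref{thm0} -- smooth/continuous fit at $\ul{b}$, $a^\star(s-c)$, $b^\star(s-c)$ included. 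Hence the regime $\tilde{y}+c\le s<\hat{y}+c$ that you single out as ``the hardest part'' costs nothing new. If you instead insist on a direct two-dimensional It\^o--Meyer verification over all four regimes, you also take on a burden your sketch does not discharge: the $s$-regularity of the glued candidate across the strips $s=z_c$, $s=\tilde{y}+c$, $s=\hat{y}+c$ required by the change-of-variables formula, beyond the mere continuity supplied by Lemma \ref{lem23}.

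The genuine gap is in the waiting region $s<z_c$. The variational inequality \eqref{verifeq} demands not only harmonicity of the candidate $\bar{V}$ of \eqref{U}--\eqref{Us<zc} and the Neumann condition, but also the dominance $\bar{V}(x,s;c)\ge U(\et^x)$ for all $x\le s<z_c$, and this is precisely where severe anxiety differs from Theorem \ref{thm1}: with $\ol{u}\ge0$ the function $g$ is no longer monotone, so the ratio argument through \eqref{eq:Rpartial} used under mild anxiety breaks down, and your proposal offers no substitute -- it only lists the generator equation, the inequality on the stopping region, the Neumann condition and integrability. The paper's fix is to show that $s\mapsto\bar{V}(x,s;c)$ is strictly increasing (because $\Lambda(\cdot)$ is strictly decreasing on $\R_+$), whence $\bar{V}(x,s;c)\ge\lim_{s\uparrow z_c}\bar{V}(x,s;c)=\frac{\mc{I}^{(r,q)}(x-y_c)}{\mc{I}^{(r,q)}(z_c-y_c)}\,U(\et^{z_c})=\ol{v}(x;y_c)\ge U(\et^x)$ by Theorem \ref{thm0}(a), which applies exactly because $c\ge\tilde{c}$ gives $y_c\le\tilde{y}$. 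Without this (or an equivalent) argument the supermartingale inequality cannot be closed for $s<z_c$, so your proof is incomplete at the one point where Theorem \ref{thm2} needs an idea beyond Theorem \ref{thm1}.
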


\begin{figure}
\centering
\includegraphics[width=4in]{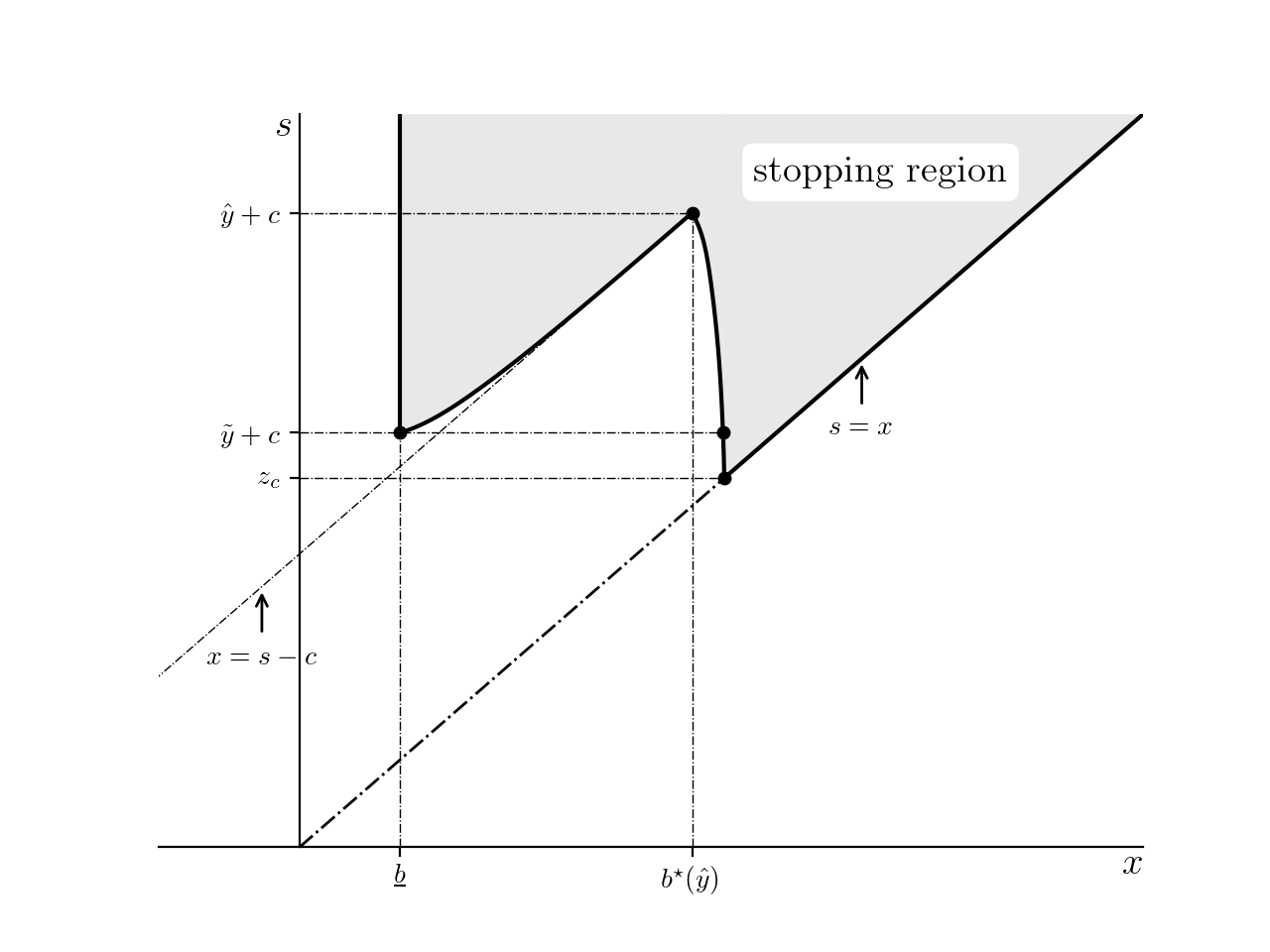}
\caption{Illustration of the optimal stopping region under severe anxiety 
($\ol{u} > 0$) 
and high drawdown tolerance ($c \geq \tilde{c}$). Model parameters: 
$c=1.8, r=0.18, q=1$ and Laplace exponent used: $\psi(\beta)=0.18\beta+0.02\beta^2-0.25(\frac{\beta}{\beta+4})$. The investor solves a risk-neutral sale with transaction cost  $K=10$. Here we have $H^\star=-31.9618, \underline{b}=2.5375, \tilde{c}=1.5143, \tilde{y}=2.7452, \hat{y}=4.0946$ and $z_c=4.2664$. The axes origin is at $(2,2)$ in the figure. }
\label{fig:thm5.3}
\end{figure}

The difference in the degree of investor's anxiety in Theorems \ref{thm1} and \ref{thm2} results in the optimal strategies having different structures when $s\ge z_c$.

\begin{enumerate}
\item[(i)] If the initial maximum log price $\ol{X}_0=s$ is strictly lower than $z_c$, then it is optimal to hold onto the asset and sell it once its log price reaches threshold $z_c$. In this case, the investor essentially behaves in the same way as an investor with mild anxiety.

\item[(ii)] If the best performance of the asset is higher than $z_c$, the investor with severe anxiety may need to complement the take-profit sale with an additional stop-loss type sale to ``cut the loss''. 
\end{enumerate}

\vspace{3pt}
Finally, we treat the type of investors with {\it severe anxiety}, who are concerned also about asset price drawdowns with size smaller than $\tilde{c}$. By fixing a $c\in(0,\tilde{c})$, and using the continuity, monotonicity of $b^\star(\cdot)$ over $[\tilde{y}, \hat{y})$, and its limits in \eqref{eq:blimit}, we may define a threshold $y_c$ as the unique root over $(\tilde{y}, \hat{y})$ to equation
 \be \label{z_cn}
 b^\star(y_c) - y_c =c. 
\ee
By construction we know that $\hat{y}< b^\star(y_c)<z_{\tilde{c}}\equiv b^\star(\tilde{y})$.
We shall see that the optimal selling strategy involves a combination of the take-profit sale at the target log price $b^\star(y_c)$, in conjunction with a trailing-stop type sale, with the barrier given by the solution to a first order non-linear ordinary differential equation (ODE). We provide the complete characterisation of this solution in the following lemma.
\begin{lem}\label{lem:as}
Let $\Delta(\cdot,a;y)$ and $a^\star(\cdot)$ be defined by \eqref{a*b*}--\eqref{D} and $f(x) := \int_{-\infty}^{\underline{b}-x}(\ul{v}(x+w)-U(\et^{x+w}))\Pi(\diff w)$.
Then, there exists a unique solution $a(\cdot)$ to the first order non-linear ODE
\be
\label{eq:ODE}
\begin{dcases}
a^\prime(s)=\dfrac{qW^{(r)}(c)}{W^{(r,q)}(s,a(s);s-c)}\, \dfrac{(1-\rho) \, \big(\underline{v}(s-c) + \Delta(s-c,a(s);s-c)\big)}{(r+q-\psi(1-\rho)) \et^{(1-\rho)\, a(s)} - (r+q) - (1-\rho) \, f(a(s))},\quad \forall \,s\le b^\star(y_c),\\
a(b^\star(y_c))=a^\star(y_c),
\end{dcases}
\ee
which can be extended smoothly for $s\le b^\star(y_c)$ as long as $a(s)\ge\underline{b}$.
Moreover, there exists a unique $s_c\in\,(\underline{b}+c,b^\star(y_c))$ such that $a(s_c)=\underline{b}$ and we have 
\[
a'(s)>0, \quad a(s)<s-c,\quad \Delta(x,a(s);s-c)>0,\quad\forall\, (x,s)\in \mc{O}_+\text{ s.t. }a(s)<x\le s-c \text{ and } s_c\le s\le b^\star(y_c).
\]
\end{lem}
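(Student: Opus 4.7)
First, I would show local existence and uniqueness at $(b^\star(y_c), a^\star(y_c))$ via Cauchy-Lipschitz. The right-hand side of \eqref{eq:ODE} is $C^1$ on the feasible region $\{(s,a) : \ul{b} < a < s-c\}$ provided its two denominators are nonzero. The kernel $W^{(r,q)}(s, a; s-c)$ is smooth and strictly positive there. For the algebraic factor
\ben
D(a) := (r+q-\psi(1-\rho))\et^{(1-\rho)a} - (r+q) - (1-\rho)f(a),
\een
I would prove the identity $D(a) = -(1-\rho)(\mc{L}-(r+q))\ul{v}(a)$ by applying the generator of $X$ to $U(\et^\cdot)$ and using the definition of $f$ to absorb the jump contributions where $\ul{v}$ disagrees with $U$. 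Since $\ul{v}$ solves its optimal stopping problem with strict variational inequality on $\{a > \ul{b}\}$ (using $r+q > \psi(1-\rho)$, itself a consequence of convexity of $\psi$ and $r > \psi(1)$), we get $D(a) > 0$ on $(\ul{b}, \infty)$. As Lemma \ref{lem23}(iv) places $a^\star(y_c) \in [\ul{b}, y_c)$ with $b^\star(y_c) > y_c$, the Cauchy-Lipschitz theorem then yields a unique local $C^1$ solution.

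Second, I would establish $a'(s) > 0$. The prefactor $qW^{(r)}(c)/W^{(r,q)}(s, a(s); s-c)$ is positive, $D(a(s)) > 0$ while $a(s) > \ul{b}$, and $(1-\rho) > 0$. The remaining factor $\ul{v}(s-c) + \Delta(s-c, a(s); s-c)$ is the candidate value of \eqref{eq:problem1} at $(s-c, s)$; its positivity at $s = b^\star(y_c)$ follows from $\ul{v} > 0$ and Lemma \ref{lem23}(iv). Propagation along the trajectory is done by a continuity-bootstrap: at any putative first vanishing $s_0$, the ODE forces $a'(s_0) = 0$, which combined with the generator identity above contradicts the strict variational inequality for the candidate value at $s_0$. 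This bootstrap is the principal technical obstacle, since the ODE is well-posed only while the numerator is positive, yet positivity must itself be propagated through the ODE; the cleanest resolution is likely a pointwise comparison of the candidate value against $\ul{v}$ as a lower bound, coupled with the variational inequality.

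Third, with $a' > 0$ in hand I would extend the solution leftward by standard ODE continuation. The maximal interval terminates only at $\{a = \ul{b}\}$ or $\{a = s-c\}$. To rule out the second exit I would exploit $\partial_s(s-c-a(s)) = 1 - a'(s)$ together with a uniform upper bound on $a'$ over compact sub-intervals of $(\ul{b}+c, b^\star(y_c)]$, obtained from continuity of the right-hand side and the starting gap $y_c - a^\star(y_c) > 0$. Hence $s_c := \sup\{s < b^\star(y_c) : a(s) = \ul{b}\}$ is well-defined, unique by strict monotonicity of $a$, and $s_c > \ul{b}+c$ because $a(s) < s-c$ up to the exit.

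Fourth, for $\Delta(x, a(s); s-c) > 0$ on $a(s) < x \le s-c$, I note $\Delta(a(s), a(s); s-c) = 0$ directly from \eqref{D} (the first integral runs over the empty interval, and since $x \le s-c = y$ the second integral $\int_y^{x \vee y}$ also vanishes), while $\Delta(s-c, a(s); s-c) > 0$ from step two. Differentiating \eqref{D} in $x$ and using the signs of the kernels $W^{(r,q)}$ and $W^{(r)}$ together with the sign analysis of $\chi$ on $[\ul{b}, \hat{y})$ rules out interior zeros of $x \mapsto \Delta(x, a(s); s-c)$ in $(a(s), s-c]$.
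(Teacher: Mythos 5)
Your Step 1 is essentially the paper's starting point (classical ODE well-posedness plus positivity of the algebraic denominator; your identity $D(a)=-(1-\rho)(\mc{L}-(r+q))\ul{v}(a)$ for $a\ge\ul{b}$ is correct, though the "strict variational inequality on $\{a>\ul{b}\}$" is not automatic -- note $D(\ul{b})=\tfrac12(1-\rho)\sigma^2\Phi(r+q)$ can vanish when $\sigma=0$, so strictness needs the monotonicity of $f$, i.e. of $\chi-q\ul{v}$, as in \eqref{ineq1}). The genuine gaps are in your Steps 2 and 3, which are the heart of the lemma. First, your exclusion of an exit through the moving boundary $\{a(s)=s-c\}$ via "a uniform upper bound on $a'$ over compact sub-intervals" cannot work: a bound on $a'$ does not prevent the backward trajectory from meeting the line $a=s-c$; e.g. if $a'\approx 0$ the gap $s-c-a(s)$ shrinks at unit rate and the line is hit near $s\approx a^\star(y_c)+c>\ul{b}+c$, i.e. before $a$ reaches $\ul{b}$. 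What is actually needed -- and is the paper's central computation -- is the sign of the slope field \emph{on} that line: setting $\theta(s)=s-a(s)$, one uses $\Delta(a,a;a)=0$ and \eqref{newW} (so $W^{(r,q)}(s,s-c;s-c)=W^{(r)}(c)$) to get $1-F(s,s-c)=-(1-\rho)\chi(s-c)/D(s-c)<0$, because $\chi>0$ on $[\ul{b},\hat{y})$ and $s-c\le y_c<\hat{y}$ (cf. \eqref{thetadecr}--\eqref{ineq1}); hence $\theta$ can only cross the level $c$ downward, contradicting $\theta(b^\star(y_c))=b^\star(y_c)-a^\star(y_c)>c$. Nothing in your proposal produces this boundary inequality, and without it the claim $a(s)<s-c$ (and with it everything downstream) is unproved.

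Second, the "continuity-bootstrap" you flag as the principal obstacle in Step 2 is both unnecessary and, as sketched, not a proof: you never say why $a'(s_0)=0$ at a putative vanishing of the numerator would contradict anything. In fact no propagation argument is needed, because positivity of the numerator is a pointwise, static fact once $a(s)\in[\ul{b},s-c)$: for $x\le s-c$ the second integral in \eqref{D} vanishes and $\Delta(x,a(s);s-c)=\int_{a(s)}^{s-c}W^{(r,q)}(x,w;s-c)\,[q\ul{v}(w)-\chi(w)]\,\diff w\ge0$, since $q\ul{v}-\chi=-(\mc{L}-r-q)\ul{v}$ is strictly positive on $(\ul{b},\infty)$ (it is increasing in the relevant sense, with value $\tfrac12\sigma^2\Phi(r+q)\ge0$ at $\ul{b}$); hence $\ul{v}(s-c)+\Delta(s-c,a(s);s-c)\ge U(\et^{s-c})>0$ and $a'>0$ follows immediately. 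This is also why the order of your steps matters: $a(s)<s-c$ must come first (otherwise $\Delta$ in \eqref{D} is an integral over a reversed interval and its sign is wrong), whereas you try to prove $a'>0$ before controlling the boundary. Relatedly, your Step 4 cites "Step two" for $\Delta(s-c,a(s);s-c)>0$, which Step 2 does not deliver (it only concerns $\ul{v}+\Delta$), and the key sign you invoke there is that of $\chi$ rather than of $q\ul{v}-\chi$, which is the one that matters for $x\le s-c$; the vanishing $\Delta(a(s),a(s);s-c)=0$ is correct but because the kernel $W^{(r,q)}(a(s),w;s-c)$ vanishes for $w>a(s)$, not because the integration interval is empty.
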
 

The optimality of the selling strategies presented in Lemma \ref{lem23} together with the aforementioned trailing stop type sale target, is given below for %investors with severe anxiety and low drawdown tolerance. 
this class of investors. 
This is proved in Section \ref{highc} via the use of variational inequalities (see the beginning of Section \ref{sec:res}) and the results obtained in the subsequent Sections \ref{sec:comp} and \ref{sec:barvs}. 

\begin{figure}
\centering
\includegraphics[width=4in]{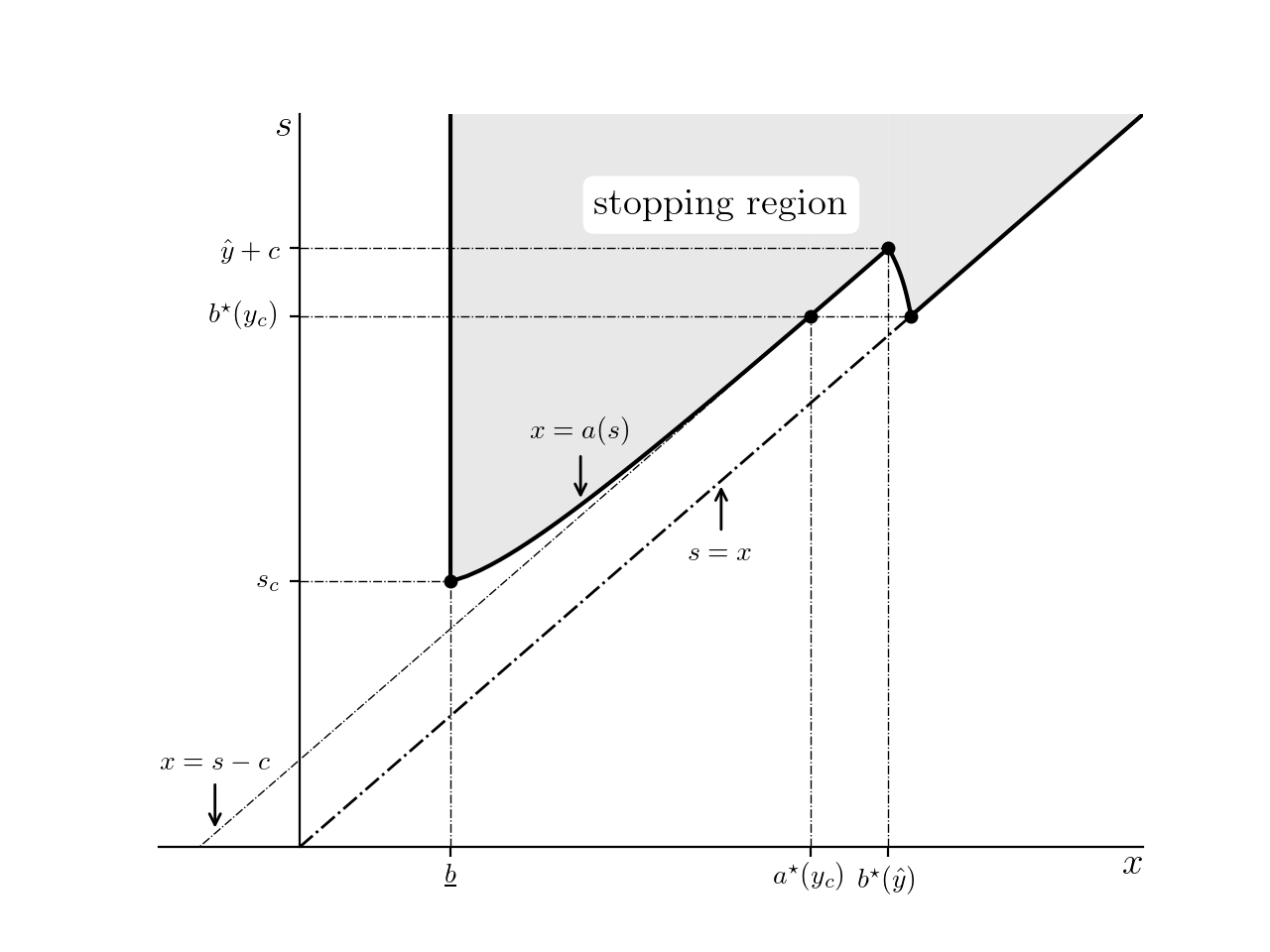}
\caption{Illustration of the optimal stopping region under severe anxiety 
($\ol{u} > 0$)
and low drawdown tolerance ($c<\tilde{c}$). Model parameters: 
$c=0.3568, r=0.18, q=1$ and Laplace exponent used: $\psi(\beta)=0.18\beta+0.02\beta^2-0.25(\frac{\beta}{\beta+4})$. The investor solves a risk-neutral sale with transaction cost  $K=10$. Here we have $H^\star=-31.9618, \underline{b}=2.5375, \tilde{c}=1.5143, \hat{y}=4.0946, s_c=3.0877$ and 
$b^\star(y_c)=4.1748$. The axes origin is at $(2,2)$ in the figure. }
\label{fig:thm5.7}
\end{figure}

\begin{thm}\label{thm3}
For an investor with severe anxiety (i.e. $\ol{u}\ge0$) and a low drawdown tolerance $c<\tilde{c}$, the value function for problem \eqref{eq:problem1} is given by
\be
V(x,s;c)=\begin{dcases}
\et^{-\Lambda(c)(s_c-s)}
\frac{\mc{I}^{(r,q)}(x-s+c)}{\mc{I}^{(r,q)}(c)} V(s_c,s_c;c), 
&\text{if } s<s_c;\\ \vspace{0.08cm}
\underline{v}(x) + \Delta(x,a(s);s-c) \ind_{\{x\ge a(s)\}}, 
&\text{if } s_c\le s < b^\star(y_c); \\ 
\underline{v}(x) + \ind_{\{a^\star(s-c)< x<b^\star(s-c)\}} \Delta(x,a^\star(s-c);s-c),
&\text{if } b^\star(y_c) \leq s < \hat{y}+c; \vspace{2.5pt}\\
\underline{v}(x),
&\text{if } s \geq \hat{y}+c.
\end{dcases}\label{eq:thm3}
\ee
The optimal selling region (see Figure \ref{fig:thm5.7} for an illustration) is $\mc{S}_c=\mc{S}_c^1\cup\mc{S}_c^2\cup\mc{S}_c^3$, where
\begin{align*}
\mc{S}_c^1=&\{(x,s)\in\mc{O}_+ \;:\; \ul{b}\le x \le  a(s) \,\text{ and } \; s_c \leq s < b^\star(y_c)\},\\
\mc{S}_c^2=&\{(x,s)\in\mc{O}_+ \;:\; \ul{b}\le x\le  a^\star(s-c) \text{ or } x\ge b^\star(s-c), \,\text{ and } \; b^\star(y_c)\leq s < \hat{y}+c \},\\
\mc{S}_c^3=&\{(x,s)\in\mc{O}_+ \;:\; x \ge\ul{b} \,\text{ and } \; s \geq \hat{y}+c \}.
\end{align*}
\end{thm}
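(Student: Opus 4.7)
The proof follows the verification (guess-and-verify) method outlined at the start of Section \ref{sec:res}. Write $\widehat V(x,s;c)$ for the function defined by the right-hand side of \eqref{eq:thm3}, and $\tau^\star:=\inf\{t\ge 0:(X_t,\overline{X}_t)\in \mc S_c\}$ for the candidate stopping time. I would establish the two inequalities $\widehat V\ge V$ (supersolution bound) and $\widehat V\le V$ (attainment along $\tau^\star$) separately. A preliminary step is to verify joint continuity of $\widehat V$ on $\mc O_+$: the three interfaces $s\in\{s_c,b^\star(y_c),\hat y+c\}$ are handled, respectively, using $a(s_c)=\underline b$ from Lemma \ref{lem:as}, the combination of the defining identity $b^\star(y_c)-y_c=c$ with $a(b^\star(y_c))=a^\star(y_c)$, and the limits $\lim_{y\uparrow\hat y}a^\star(y)=\lim_{y\uparrow\hat y}b^\star(y)=\hat y$ from Lemma \ref{lem23}. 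Note also that the upper two branches of \eqref{eq:thm3} agree verbatim with the corresponding branches of \eqref{eq:thm2}, so the genuinely new analysis is confined to the two lower branches.

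\textbf{Verification.} For $\widehat V\ge V$ I would apply the change-of-variable formula for functions of the pair $(X,\overline{X})$ to the process $M_t:=\et^{-R_t^c}\widehat V(X_t,\overline{X}_t;c)$. After localisation, the finite-variation part splits into: (i) an absolutely continuous integrand in $\diff t$ of the form $(\mathcal G-r-q\ind_{\{X_t<\overline{X}_t-c\}})\widehat V$, where $\mathcal G$ denotes the infinitesimal generator of $X$; (ii) a reflection term $\partial_s\widehat V(\overline{X}_t,\overline{X}_t;c)\,\diff\overline{X}_t$ supported on $\{X_t=\overline{X}_t\}$; and (iii) a jump compensation term controlled by the pointwise bound $\widehat V\ge U(\et^x)$. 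I would check branch by branch that (i) holds as an equality in the continuation region, relying on the scale-function computations of Section \ref{sec:barvs} for the one-dimensional problem \eqref{eq:oldproblem}, and as an inequality on $\mc S_c$ by an argument analogous to Theorem \ref{thm2}. The reflection term (ii) vanishes everywhere: on $s<s_c$ because $V(s_c,s_c;c)$ is $s$-independent and the ratio $\mc I^{(r,q)}(x-s+c)/\mc I^{(r,q)}(c)$ equals one at $x=s$; on $s_c\le s<b^\star(y_c)$ because the ODE \eqref{eq:ODE} is precisely the condition $\partial_s\widehat V(s,s;c)=0$; and on the upper two branches by the same computation used for Theorem \ref{thm2}. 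The dominance $\widehat V\ge U(\et^x)$ on the continuation region uses $\underline v\ge U(\et^x)$ together with $\Delta(x,a(s);s-c)>0$ from Lemma \ref{lem:as}. Thus $M$ is a local supermartingale, and optional sampling plus integrability at infinity (granted by $r>\psi(1)$, cf.\ footnote \ref{note1}) yields $\widehat V(x,s;c)\ge \ex_{x,s}[\et^{-R_\tau^c}U(\et^{X_\tau})\ind_{\{\tau<\infty\}}]$ for every $\tau\in\timset$. Running the identity along $\tau^\star$ turns every inequality into an equality, giving $\widehat V\le V$ and hence the theorem.

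\textbf{Main obstacle.} The subtle point is the reflection identity on the trailing-stop branch $s_c\le s<b^\star(y_c)$, where $\widehat V(s,s;c)=\underline v(s-c)+\Delta(s,a(s);s-c)$. Imposing $\partial_s\widehat V(s,s;c)=0$ couples four contributions from the $s$-dependence of $\Delta$: through its first argument, through its second argument via $a'(s)$, through the endpoints of its integrals, and through the third parameter $y=s-c$ appearing in the kernels $W^{(r,q)}$ and $\chi$. A careful accounting — in which the jump piece of $\chi$ at $w=s-c$ combined with the indicator $\ind_{\{X_t<\overline{X}_t-c\}}$ produces the factor $qW^{(r)}(c)$ in the numerator of \eqref{eq:ODE} — rearranges to exactly the ODE of Lemma \ref{lem:as}. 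The well-posedness of this ODE, the positivity of $a'$, the inequality $a(s)<s-c$, and the existence of a unique $s_c\in(\underline b+c,b^\star(y_c))$ with $a(s_c)=\underline b$ are precisely what Lemma \ref{lem:as} delivers; these are the ingredients that allow $\widehat V$ to be defined consistently on the whole of $\mc O_+$ and the supermartingale verification to close.
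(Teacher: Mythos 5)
Your skeleton matches the paper's: verify the candidate through the variational inequality \eqref{verifeq} plus the Neumann condition \eqref{verifeq2}, identify the ODE \eqref{eq:ODE} as exactly the reflection condition on the trailing-stop branch with terminal datum $a(b^\star(y_c))=a^\star(y_c)$, treat $s<s_c$ by waiting until the maximum reaches $s_c$, and reduce the branches $s\ge b^\star(y_c)$ to the one-dimensional problem (the paper does this last step more cheaply, via Proposition \ref{prop:compare} and the monotonicity of $\mc{D}_y$, rather than re-running a verification, but that is only a difference of packaging).

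There is, however, a genuine gap at the dominance step, and it is precisely where the paper's proof does its hardest work. You assert that $\widehat V\ge U(\et^x)$ on the continuation region follows from $\ul{v}\ge U$ together with $\Delta(x,a(s);s-c)>0$ from Lemma \ref{lem:as}. But Lemma \ref{lem:as} gives positivity of $\Delta(x,a(s);s-c)$ only for $a(s)<x\le s-c$; it says nothing about $x\in(s-c,s]$, which is also part of the continuation region when $s_c\le s<b^\star(y_c)$. In that strip one has $x>s-c>\ul{b}$, so $\ul{v}(x)=U(\et^x)$ and the required dominance is exactly $\Delta(x,a(s);s-c)\ge0$; this is not automatic, since for $x>s-c$ the second integral in \eqref{D} switches on and pulls $\Delta$ down, and $x$ may even exceed $\hat{y}$ because $b^\star(y_c)>\hat{y}$. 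The paper closes this by first computing $\frac{\p}{\p s}\tilde{V}(x,s;c,a(s))$ for $x\in(s-c,s]$ and showing, via the fluctuation identities \eqref{up} and \eqref{uphit}, the ODE \eqref{eq:ODE} itself, and the fact that $(\mc{L}-r-q)\ul{v}(a(s))<0$, that $s\mapsto\tilde{V}(x,s;c,a(s))$ is strictly decreasing on $[s_c,b^\star(y_c))$ (display \eqref{tilUmon}); it then rules out $\tilde{V}(x_0,s_0;c,a(s_0))\le U(\et^{x_0})$ by a two-case contradiction: for $x_0\in(y_c,s_0]$ compare with the value at $s=b^\star(y_c)$ and invoke Theorem \ref{thm0}(b) (i.e.\ $\ol{v}(\cdot;y_c)\ge U$), and for $x_0\in(s_0-c,y_c]$ compare with the value at $s_1=x_0+c$ and invoke Lemma \ref{lem:as}. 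Without this (or an equivalent) argument your supermartingale verification does not close. A minor further slip: on the branch $s<s_c$ the Neumann condition does not follow from ``the ratio equals one at $x=s$''; it holds because at $x=s$ the $s$-derivative of the prefactor $\et^{-\Lambda(c)(s_c-s)}$ contributes $+\Lambda(c)$ while that of $\mc{I}^{(r,q)}(x-s+c)$ contributes $-\Lambda(c)$, and these cancel.
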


\begin{figure}
\centering
\includegraphics[width=4in]{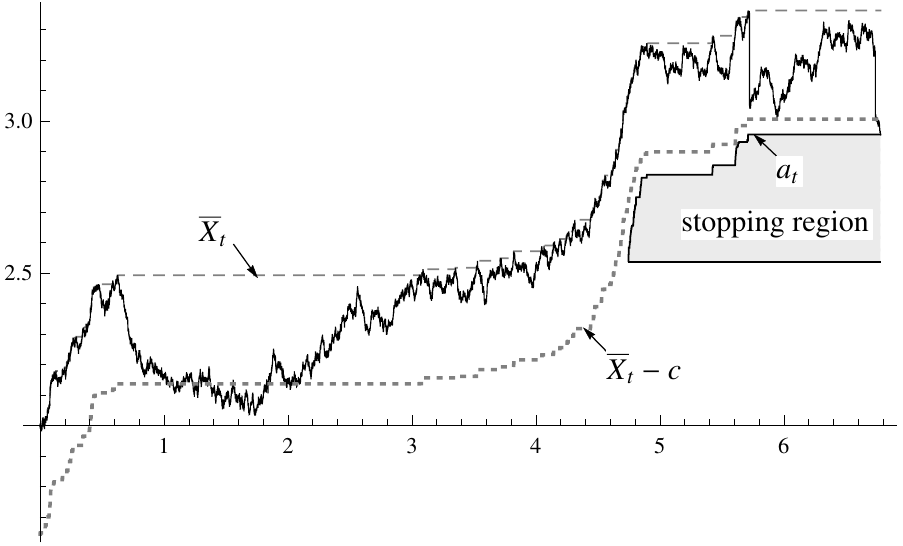}
\caption{A simulated sample path under parameters: $c=0.3568$, $r=0.18, q=1$,  Laplace exponent $\psi(\beta)=0.18\beta+0.02\beta^2-0.25(\frac{\beta}{\beta+4})$,  and initial value $X_0=\ol{X}_0=2$. The investor solves a risk-neutral sale with transaction cost  $K=10$. In the figure, $a_t=a(\ol{X}_t)$ is the trailing stop threshold, which is set up once $\ol{X}$ reaches $s_c=3.0877$. This particular path fails to reach the upper threshold $b^\star(y_c)=4.1748$ (not shown) before activating the trailing stop at value $2.949$ and $t=6.7795$.}
\label{fig:trail}
\end{figure}

\begin{figure} 
\centering
\includegraphics[width=4in]{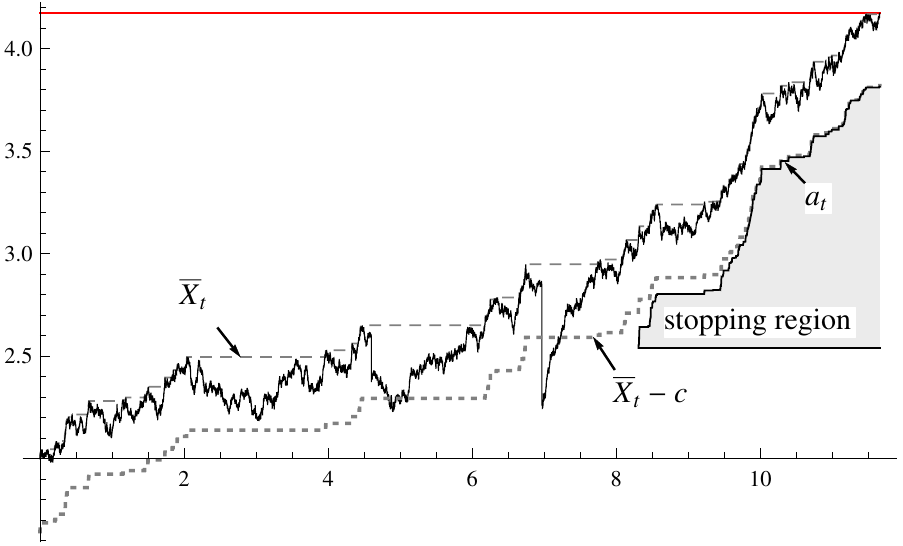}
\caption{A simulated sample path under parameters: $c=0.3568$, $r=0.18, q=1$,  Laplace exponent $\psi(\beta)=0.18\beta+0.02\beta^2-0.25(\frac{\beta}{\beta+4})$,  and initial value $X_0=\ol{X}_0=2$. The investor solves a risk-neutral sale with transaction cost  $K=10$. In the figure, $a_t=a(\ol{X}_t)$ is the trailing stop threshold, which is set up once $\ol{X}$ reaches $s_c=3.0877$. This particular path succeeds in reaching the upper threshold $b^\star(y_c) = 4.1748$ (shown in red line) before activating the trailing stop. The hitting occurs at time $t=11.656$.}
\label{fig:notrail}
\end{figure}

Contrary to Theorems \ref{thm1} and \ref{thm2}, we observe that the regions of the value function communicate in Theorem \ref{thm3}. 
If an investor with severe anxiety has a low tolerance for drawdowns, then the optimal selling strategy may involve some holding period of no trade, and some period when a fixed take-profit target is set up together with a sequence of protective, adaptive stop-loss orders.
To be more precise:

\begin{enumerate}
\item[(i)] When the starting maximum log price $s$ is lower than $s_c$, the investor should hold onto the asset until its log price reaches $s_c$ (no trade region). 

\item[(ii)] If the maximum log price $\bar{X}$ is at least $s_c$, but lower than $b^\star(y_c)$, the investor should set up a take-profit target log price $b^\star(y_c)$, while also consider to (optimally) sell the asset when the log price $X$ jumps down to the interval $[\ul{b}, a(\ol{X})]$. 
The latter strategy is precisely a generalised trailing stop, where the stochastic floor  increases along with the running maximum $\ol{X}$ (see also \cite{Tim_HZ18}). 

\item[(iii)] If the trailing stop order is not activated as $\overline{X}$ continuously increases towards the take-profit target log price $b^\star(y_c)$,  it will be optimal to sell the asset once  $X$ reaches $b^\star(y_c)$. 

\item [(iv)] As an independent case, not communicating with the aforementioned ones (i)--(iii), when the starting maximum log price $s$ is higher than $b^\star(y_c)$, we obtain similar economic insights as in the severe anxiety with high drawdown tolerance, since the optimal selling strategy is realised either at a traditional take-profit sale, before a new maximum is established, or at a stop-loss type order. 
\end{enumerate}

\begin{rmk}
It is worth  mentioning that, when setting a trailing stop type order, there is a possibility that the asset log price jumps downwards from the interval $(a(s),s]$ to the interval $(-\infty,\underline{b})$. In this case, we proved that it is optimal for the investor not to sell the asset, but rather be patient by waiting until $T^+_{\underline{b}}$ to sell (this is also true for all aforementioned cases under severe anxiety that require the use of a stop-loss type strategy). 
This reflects the additional protection sought by investors in financial markets through a trailing stop (resp., stop-loss) with a limit. The purpose is to secure a price only when the asset price experiences a drawdown from its peak that crosses the trailing stop (resp., stop-loss) threshold but not the limit. 
We can further interpret this result as missing out on selling the asset after a relatively big price jump, since the asset has already lost enough value that it ``costs" nothing to wait for some more time until the log price recovers to $\underline{b}$.
\end{rmk}

In order to illustrate the optimal selling strategy proposed in Theorem \ref{thm3}, we present two numerical case studies for identical assets: 
In Figure \ref{fig:trail}, the asset is optimally sold at time $t=6.7795$ at the trailing stop threshold $a(\bar{X}_t)=2.949$, protecting some of the ``profits'' from holding this asset, whose value rose significantly from the initial log price $x=2$;
in Figure \ref{fig:notrail}, the trailing stop is never activated (even though it is set up at some point), so the investor sells the asset at the take-profit threshold $b^\star(y_c)=4.1748$.

\begin{rmk}
We comment here that the rationale behind the consideration of the ODE \eqref{eq:ODE} is the imposition of Neumann condition on the value of the optimal selling strategy at the diagonal $\p\mc{O}_+$ of the two-dimensional state space.
Different from the majority of existing literature on optimal stopping problems involving the maximum process,  we do have a boundary condition at $s=b^\star(y_c)$ for the ODE, which helps us to obtain a unique solution $a(\cdot)$ as the candidate down-crossing sell order. 
When such a boundary condition is not available, an appropriate (unique) candidate must be chosen from the set of infinitely many solutions of the ODE, by relying on various different methods; 
e.g. using the transversality condition (see \cite{Guo2010}, \cite{RZ17watermark}, among others), or the maximality principle from \cite{peskir1998maximality} (see 
\cite{obloj2007maximality} for diffusion models, \cite{Kyp_Ott14} for L\'evy models, among others). 
\end{rmk}

\vspace{2pt}
The remaining Sections \ref{sec:comp}--\ref{sec:res} are devoted to proving the main results of the paper that have been presented in this section.

\subsection{Comparative statics}

Given that the structure of the stopping region is explicitly determined by model parameters via equations of differentiable functions, we know that the optimal stopping boundaries are continuous in $c$ and $q$.
It is easy to see that the set of optimal selling regions shrinks with the investor's tolerance level $c$ for asset price drawdowns and increases with the anxiety rate $q$ about these drawdowns. 
Namely, when %the investors' tolerance for asset price drawdowns 
$c$ decreases or when %their anxiety about drawdowns 
$q$ increases, investors should become more proactive and (optimally) sell their asset at lower profit-taking and/or higher stop-loss/trailing stop targets.    
These results can also be proved independently of the theory developed in Section \ref{mainres}, directly via the expression of the value function in \eqref{eq:problem1}. 
In particular, the result for $c$ is given in Proposition \ref{prop:compare1} in Section \ref{sec:LB} below, while one can similarly prove the monotonicity with respect to $q$ from (\ref{A}) and (\ref{eq:omega}), by observing that $q \mapsto R_t^{c}(q)$ is non-decreasing.\footnote{Here, we used the notation $R_t^{c} = R_t^{c}(q)$ to stress the dependence of the Omega clock on the parameter $q$.} 

We can further prove that the set of optimal selling strategies for investors with mild anxiety (cf.\ Theorem \ref{thm1} in Section \ref{sec:mild}) is strictly increasing with the investors' risk aversion coefficient $\rho$. 
In particular, we prove in Lemma \ref{lem:comp} that the mappings $\rho \mapsto \ul{b}(\rho)$, $\rho \mapsto z^\star(y;\rho)$ and $\rho \mapsto z_c(\rho)$ are all strictly decreasing, whenever they are defined (cf.\ Section \ref{mainres}).\footnote{\label{foot}Here, we used the notation 
$\ul{b} = \ul{b}(\rho)$,
$z^\star(y) = z^\star(y;\rho)$ and 
$z_c = z_c(\rho)$ 
to stress the dependence of these take-profit selling targets on the parameter $\rho$.}  
Therefore, %in the case of mild anxiety, 
more risk averse investors should be more proactive and (optimally) sell their asset at lower profit-taking targets.

Our analytical results in Section \ref{mainres} allow also for a numerical study of comparative statics with respect to general model parameter configurations, including cases of severe anxiety. 
To set up a numerical study for comparative statics of the risk-aversion coefficient $\rho$, volatility $\sigma$ and jump distribution intensity $\eta$, we consider an asset with log price process given by a compound Poisson jump process plus a Brownian motion with drift.
Namely, we consider an asset price model with Laplace exponent
\be \label{psib}
\psi(\beta)=0.18\beta+\frac{1}{2}\sigma^2\beta^2  - 0.25 \, \frac{\beta}{\beta+\eta}
\ee
within seven model parameter configurations, as shown in Table \ref{table:0} below.
  
\begin{table}[ht] 
\centering
\begin{tabular}{c|c c c c c  c} 
 \hline
Configuration & 
$\sigma$ & $\eta$ &  $r$ & $q$ & $\rho$ & $c$\\ [0.5ex] 
 \hline
 \text{Benchmark}  & 0.2 & 4 &  0.18 & 1 & 0.25 & 0.3568\\ 
 \text{Smaller $\rho$} & 0.2 & 4 &  0.18 & 1 & $\bm{0}$ & 0.3568\\
 \text{Larger $\rho$} & 0.2 & 4 &  0.18 & 1 & $\bm{0.5}$ & 0.3568\\
\text{Smaller $\sigma$} & $\bm{0.12}$ & 4 &  0.18 & 1 & 0.25 & 0.3568\\
\text{Larger $\sigma$} & $\bm{0.3}$ & 4 &  0.18 & 1 & 0.25 & 0.3568\\ 
\text{Smaller $\eta$}  & 0.2 & $\bm{2}$ &  0.18 & 1 & 0.25 & 0.3568\\
\text{Larger $\eta$} & 0.2 & $\bm{8}$ &  0.18 & 1 & 0.25 & 0.3568\\
[1ex] 
 \hline
\end{tabular}
\caption{Configurations of model parameters for comparative statics}
\label{table:0}
\end{table}
\begin{table}[ht]
\centering
\begin{tabular}{c|c  c c c} 
 \hline
Configuration & $\ul{b}$ & $b^\star(y_c)$ & $s_c$ & $b^\star(\hat{y})$ \\ [0.5ex] 
 \hline
 \text{Benchmark} & 0.2277 &  1.2793 & 0.7331 & 1.5593\\ 
 \text{Smaller $\rho$} & 0.2349 &  1.8713 & 0.7851 & 2.1489\\
 \text{Larger $\rho$} & 0.2211 &  1.0160 & 0.6966 &  1.3012\\
\text{Smaller $\sigma$} &  0.1792 & 1.1204 & 0.6950 & 1.4358\\
\text{Larger $\sigma$} &0.2987 & 1.5807 & 0.7964 & 1.8416\\ 
\text{Smaller $\eta$} & 0.2221 & 0.8555 & 0.6788 & 1.2123\\
\text{Larger $\eta$} & 0.2341 & 1.6568 & 0.7838 & 1.9333\\
[1ex] 
 \hline
\end{tabular}
\caption{Key selling thresholds for the problem configurations in Table \ref{table:0}}
\label{table:1}
\end{table}

We illustrate in Figure \ref{fig:comp} the optimal stopping regions for investors with severe anxiety and low drawdown tolerance (cf.\ Theorem \ref{thm3}), when perturbing their risk aversion coefficient $\rho$, the log price's volatility $\sigma$, or the intensity $\eta$ of the jumps' exponential distribution according to the configurations in Table \ref{table:0}. 
The key selling thresholds for each case are also shown in Table \ref{table:1} complementing the illustrations in Figure  \ref{fig:comp}.

\begin{figure}[h]
\centering
  \subfloat[Smaller $\rho$]{\includegraphics[width=0.3\textwidth]{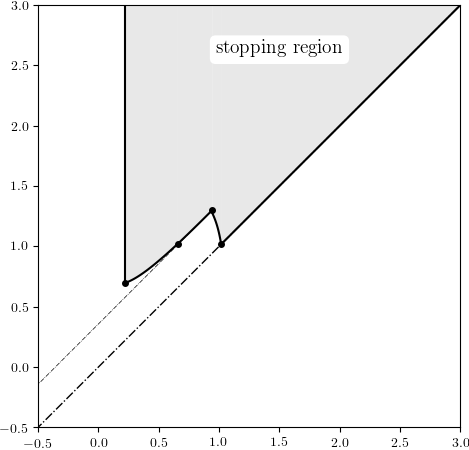} }
  \subfloat[Benchmark]{\includegraphics[width=0.3\textwidth]{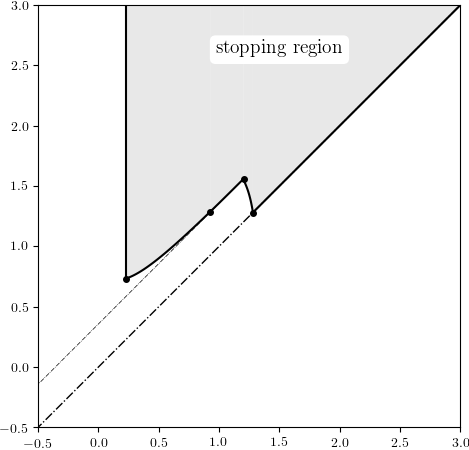} }
  \subfloat[Larger $\rho$]{\includegraphics[width=0.3\textwidth]{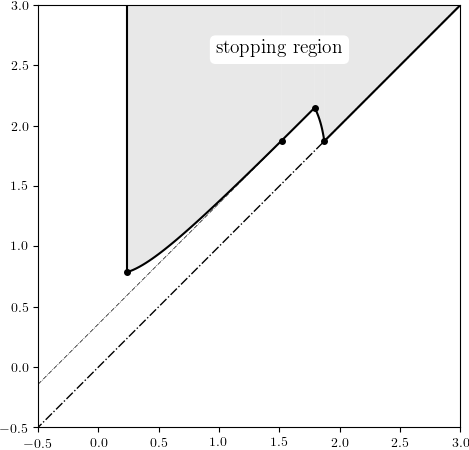} }
\\
  \subfloat[Smaller $\sigma$]{\includegraphics[width=0.3\textwidth]{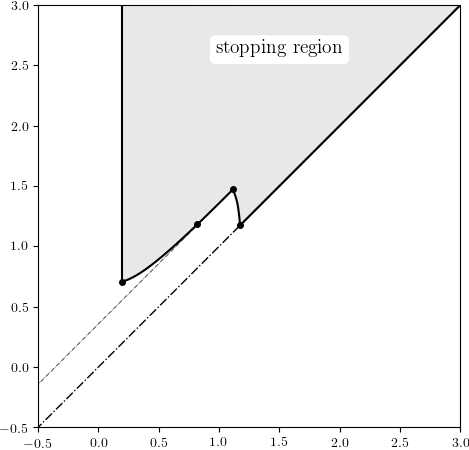} }
  \subfloat[Benchmark]{\includegraphics[width=0.3\textwidth]{benchmark.png} }
  \subfloat[Larger $\sigma$]{\includegraphics[width=0.3\textwidth]{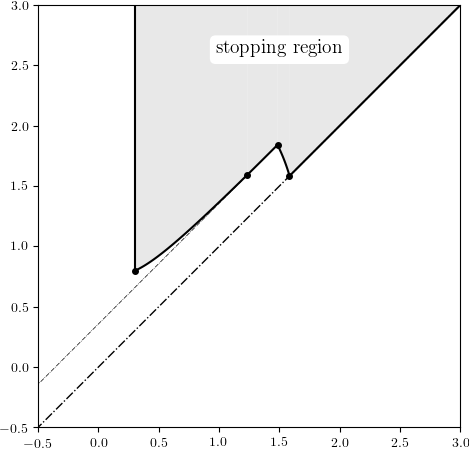} }
\\
  \subfloat[Smaller $\eta$]{\includegraphics[width=0.3\textwidth]{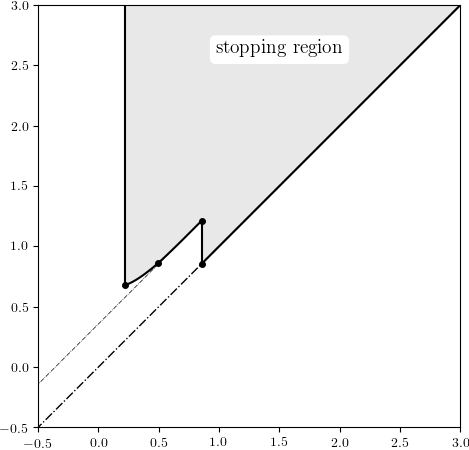} }
  \subfloat[Benchmark]{\includegraphics[width=0.3\textwidth]{benchmark.png} }
  \subfloat[Larger $\eta$]{\includegraphics[width=0.3\textwidth]{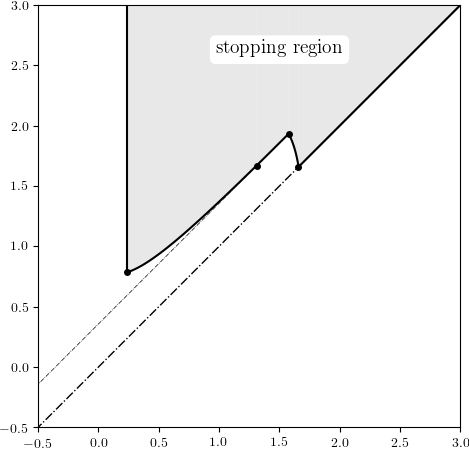} }
\caption{Stopping regions for the problem configurations in Table \ref{table:0} (see also Table \ref{table:1} for the values of key thresholds).}
\label{fig:comp}
\end{figure}

One can observe from Figure \ref{fig:comp} and Table \ref{table:1} that: 
$(a)$ the set of selling strategies appears to grow (resp., shrink) with respect to $\rho$ (resp., $\sigma$ and $\eta$) also in the case of severe anxiety;
$(b)$ the changes in the trailing stop $a(s)$, stop-loss $a^\star(s)$ and take-profit $\ul{b}$ targets, whenever they exist for each fixed $s>0$, are relatively minor under all parameters examined, compared to the resulting changes in the take-profit target $b^\star(y_c)$, which are much more significant.      
In summary, 
\begin{enumerate} 
\item for both degrees of anxiety, the more risk averse investors tend to be more proactive and (optimally) sell their asset at lower profit-taking and/or higher stop-loss (or trailing stop) targets. 
\item in cases of severe anxiety, the more volatile the asset price, the longer investors  wait before (optimally) selling their asset, as it is widely understood that options tend to gain value when volatility increases; 
\item in cases of severe anxiety, the higher the intensity of the exponential distribution, the smaller the size of negative jumps tends to be, hence the investors wait longer before (optimally) selling their asset. 
\end{enumerate}

\section{Bounds for the value function $V(x,s;c)$}
\label{sec:comp}

Fixing any tolerance level $c>0$, we denote by $\mc{S}_c$  the ``stopping region''  of log prices. This contains all the states of price and maximum price at which the investor should  (optimally) sell the asset. By the general theory of optimal stopping for Markov processes (see, e.g. \cite[Ch. I, Sec. 2.2]{PS}), we define 
\ben
\mc{S}_c:=\{(x,s)\in\mc{O}_+ \,:\, V(x,s;c)- U(\et^x)=0\}. 
\een 

As the first step of determining the value function $V(\cdot,\cdot;c)$ of (\ref{eq:problem1}), we derive a lower and an upper bound. We also provide bounds for the stopping region $\mc{S}_c$, which will be useful in proofs in later sections.

\subsection{Lower bound for $V(x,s;c)$ and upper bound for $\mc{S}_c$} \label{sec:LB}

For any $c'> c>0$,  it is easily seen from (\ref{A}) and (\ref{eq:omega}) that $R_t^{c'}\le R_t^{c}\le R_t^0  \leq (r+q)t$ holds for all $t\ge0$. 
Therefore, the value of \eqref{eq:problem1} is always bounded from below by  $\ul{v}(x)$, the optimal expected value of utility $U(\et^X)$ discounted at rate $r+q>0$. 

\begin{prop}\label{prop:compare1}
For any $c'> c>0$, 
\ben
\underline{v}(x) \leq V(x,s;c)\le V(x,s;c'), 
\quad \forall\,(x,s)\in\mc{O}_+,
\een
and 
\ben
\mc{S}_{c} \subseteq [\underline{b},\infty)^2\,\cap\,\mc{O}_+.
\een
\end{prop}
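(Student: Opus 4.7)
\emph{The plan} is to exploit the monotonicity of the random clock $R_t^c$ in $c$ together with its obvious upper bound $(r+q)t$. From \eqref{A} and \eqref{eq:omega}, the indicator $\ind_{\{X_u<\overline{X}_u-c\}}$ is non-increasing in $c$ and bounded above by $1$, so for any $c'\ge c>0$ and $t\ge0$,
\[
R_t^{c'} \,\le\, R_t^c \,\le\, (r+q)t, \qquad \et^{-R_t^{c'}} \,\ge\, \et^{-R_t^c} \,\ge\, \et^{-(r+q)t}.
\]
The only subtlety is that $U(\et^x)$ can take negative values (namely when $x<0$), so one cannot compare the integrands $\et^{-R_\tau^c}U(\et^{X_\tau})$ pointwise and immediately take suprema.

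\emph{To handle this sign issue}, I would first show that in computing the supremum defining $V(x,s;c)$ one may restrict to stopping times $\tau$ satisfying $U(\et^{X_\tau})\ge0$ on $\{\tau<\infty\}$. Since $\{U(\et^x)\ge 0\}=\{x\ge 0\}$, the event $A_\tau:=\{\tau<\infty,\,X_\tau\ge0\}$ is $\mc{F}_\tau$-measurable, and the truncated random time
\[
\tilde\tau \,:=\, \tau\ind_{A_\tau}+\infty\ind_{A_\tau^{\,c}}
\]
belongs to $\timset$ and yields an expected payoff at least as large as that of $\tau$, since it drops only the non-positive contributions. The identical reduction works for the problem defining $\ul{v}(\cdot)$.

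\emph{Restricted to such non-negative $\tau$}, multiplying the discount bound from the first paragraph by $U(\et^{X_\tau})\ge0$ and taking expectations gives
\[
\ex_{x,s}\bigl[\et^{-R_\tau^{c'}}U(\et^{X_\tau})\ind_{\{\tau<\infty\}}\bigr] \,\ge\, \ex_{x,s}\bigl[\et^{-R_\tau^{c}}U(\et^{X_\tau})\ind_{\{\tau<\infty\}}\bigr] \,\ge\, \ex_{x}\bigl[\et^{-(r+q)\tau}U(\et^{X_\tau})\ind_{\{\tau<\infty\}}\bigr].
\]
Taking the supremum over all such $\tau$ then yields the desired chain $\ul{v}(x)\le V(x,s;c)\le V(x,s;c')$.

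\emph{For the inclusion} $\mc{S}_c\subseteq[\ul{b},\infty)^2\cap\mc{O}_+$, I would simply combine the lower bound just established with the structure of $\ul{v}$. If $(x,s)\in\mc{S}_c$, then $U(\et^x)=V(x,s;c)\ge\ul{v}(x)$. From \eqref{eq:underlinev} and the identification of $\ul{b}$ as the optimal take-profit threshold for the $(r+q)$-discounted problem, $\ul{v}(x)>U(\et^x)$ whenever $x<\ul{b}$, so $x\ge\ul{b}$; and since $(x,s)\in\mc{O}_+$ forces $s\ge x$, also $s\ge\ul{b}$. The main (and essentially only) obstacle in the whole argument is the sign-reduction trick of the second paragraph; once it is in place, the rest is a short comparison.
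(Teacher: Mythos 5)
Your argument is correct and is, at its core, the same comparison the paper uses: $R_t^{c'}\le R_t^c\le (r+q)t$, compare the discounted payoffs pathwise, take expectations and suprema, and then read off the inclusion $\mc{S}_c\subseteq[\ul{b},\infty)^2\cap\mc{O}_+$ from $\ul{v}(x)>U(\et^x)$ for $x<\ul{b}$. The one genuine difference is your sign-reduction step, and it is a worthwhile addition rather than a detour: the paper's proof asserts the pathwise chain $\et^{-(r+q)\tau}U(\et^{X_\tau})\le \et^{-R_\tau^c}U(\et^{X_\tau})\le \et^{-R_\tau^{c'}}U(\et^{X_\tau})$ for an arbitrary stopping time, but this chain reverses on the event $\{X_\tau<0\}$ where $U(\et^{X_\tau})<0$, so as literally stated it implicitly presumes a non-negative payoff at $\tau$. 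Your observation that one may restrict to stopping times with $U(\et^{X_\tau})\ge0$ on $\{\tau<\infty\}$ — replacing $\tau$ by $\tilde\tau=\tau\ind_{A_\tau}+\infty\ind_{A_\tau^c}$ with $A_\tau=\{\tau<\infty,\,X_\tau\ge0\}\in\mc{F}_\tau$, which only discards non-positive contributions and applies equally to the problem defining $\ul{v}$ — closes this gap cleanly and makes the subsequent pointwise comparison legitimate. Your treatment of the stopping-region inclusion also spells out the strict inequality $\ul{v}(x)>U(\et^x)$ for $x<\ul{b}$ (from the optimality of the take-profit threshold $\ul{b}$ for the $(r+q)$-discounted problem), which the paper compresses into ``follows immediately''; this is exactly the right justification.
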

\begin{proof} For any fixed $(x,s)\in\mc{O}_+$, $ c'>c>0$ and any stopping time $\tau\in\timset$,  we have on the event $\{\tau<\infty\}$ that, the following inequalities 
\begin{align*}
\et^{-(r+q)\tau} U(\et^{X_\tau}) 
\le \et^{-R_\tau^c} U(\et^{X_\tau}) 
\le \et^{-R_\tau^{c'}} U(\et^{X_\tau}) 
\end{align*}
hold true ($\pr_{x,s}$-a.s.). 
Taking expectations $\ex_{x,s}$\footnote{Note that the $s$-component is actually redundant when the objective function does not involve the running maximum.} and the suprema over all stopping times $\tau$, we obtain in view of the definition \eqref{eq:underlinev} of $\underline{v}$ that 
\begin{align*}
\underline{v}(x)\leq  V(x,s;c)\le V(x,s;c'). 
\end{align*}
The bound for $\mc{S}_c$ follows immediately. 
\end{proof}

As a result of Proposition \ref{prop:compare1},  we may equivalently express our objective \eqref{eq:problem1} as
\be
\label{eq:problem2}
V(x,s;c) = \sup_{\tau\in \timset}\ex_{x,s}[\et^{-R_\tau^c} \, \ul{v}(X_{\tau}) \ind_{\{\tau<\infty\}}]
,\quad\forall\,(x,s)\in\mc{O}_+.
\ee
This will be a more convenient form of our problem in some parts of the forthcoming analysis (cf.\ Section \ref{highc}).

\subsection{Upper bound for $V(x,s;c)$ and lower bound for $\mc{S}_c$}

Consider the value function $\ol{v}(x;y)$ of \eqref{eq:oldproblem}, i.e. optimal expected value of utility $U(\et^X)$ with discounting $A^y$. 
We define the optimal stopping region of problem \eqref{eq:oldproblem}, (see, e.g. \cite[Ch. I, Sec. 2.2]{PS}) by 
\be
\mc{D}_y:=\{x\in\R: \ol{v}(x;y)-U(\et^x)=0\},\quad\forall\, y\in\R \,.
\label{Dy}
\ee
Problem \eqref{eq:oldproblem} can be considered as a simplified version of the original problem \eqref{eq:problem1}, 
when the discount factor does not update with the running maximum $\ol{X}$.
Specifically, for any fixed $c>0$ and $(x,s)\in\mc{O}_+$, the continuous additive functional $A^{s-c}$ is almost surely dominated from above by $R^c$ under $\pr_{x,s}$. 
Thus, the value function of \eqref{eq:problem1} is always bounded from above by the value of \eqref{eq:oldproblem} when $y=s-c$.

\begin{prop}\label{prop:compare}
For any fixed $c\ge0$, we have 
\be
V(x,s;c)\le \ol{v}(x;s-c),\quad \forall\,(x,s)\in\mc{O}_+.\label{eq:bounds_fixed}
\ee 
Then, the optimal stopping region $\mc{D}_y$ defined by \eqref{Dy} satisfies
\be
(\mc{D}_{s-c}\times\{s\})\,\cap\,\mc{O}_+\subseteq\mc{S}_c.\label{eq:bounds_fixed1}
\ee
Moreover, the equalities in \eqref{eq:bounds_fixed} and \eqref{eq:bounds_fixed1} hold if and only if $[s,\infty) \subseteq \mc{D}_{s-c}$.
\end{prop}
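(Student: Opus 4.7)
The plan is to work pathwise with the two discount functionals $R^c$ and $A^{s-c}$, then lift the pointwise comparison to the value functions via the reformulation already available in the paper.

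\textbf{Pathwise comparison and the upper bound.} The core observation is that under $\mathbb{P}_{x,s}$ with $(x,s)\in\mc{O}_+$, the running maximum satisfies $\ol{X}_u \ge s$ for all $u\ge 0$, so $\ol{X}_u - c \ge s-c$ and hence $\{X_u < s-c\}\subseteq\{X_u<\ol{X}_u-c\}$ pointwise; integrating in $u$ yields $A_t^{s-c}\le R_t^c$ almost surely. Since $U(\et^{x})$ may be negative for $x<0$, I cannot multiply the inequality by $U$ directly. Instead I would use the reformulation \eqref{eq:problem2} of $V$ in terms of the strictly positive function $\ul{v}$, and establish the analogous identity $\ol{v}(x;s-c)=\sup_\tau \ex_x[\et^{-A_\tau^{s-c}}\ul{v}(X_\tau)\ind_{\{\tau<\infty\}}]$ by the strong Markov property at $\tau$ together with $\ul{v}$'s characterisation as the optimal value at constant discount $r+q$ (using the bound $A^{s-c}_{\tau+\sigma}-A^{s-c}_\tau \le (r+q)\sigma$). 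Because $\ul{v}>0$, the pathwise bound $A^{s-c}\le R^c$ now yields $V(x,s;c)\le\ol{v}(x;s-c)$.

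\textbf{Set inclusion and sufficiency.} For \eqref{eq:bounds_fixed1}: if $x\in\mc{D}_{s-c}$ with $(x,s)\in\mc{O}_+$ then $\ol{v}(x;s-c)=U(\et^x)$, so the upper bound together with the trivial lower bound $V(x,s;c)\ge U(\et^x)$ (taking $\tau=0$) forces $V(x,s;c)=U(\et^x)$, i.e.\ $(x,s)\in\mc{S}_c$. Now assume $[s,\infty)\subseteq\mc{D}_{s-c}$, and take $\tau^\star:=\inf\{t\ge 0:X_t\in\mc{D}_{s-c}\}$ as the optimiser for $\ol{v}(x;s-c)$. Since the continuation set is contained in $(-\infty,s)$, we have $X_t<s$ for every $t<\tau^\star$. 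As $X$ has only negative jumps, the first visit to the closed set $\mc{D}_{s-c}\supseteq\{s\}$ occurs at a state $X_{\tau^\star}\le s$, so $\ol{X}_t=s$ throughout $[0,\tau^\star]$ and $R_t^c=A_t^{s-c}$ on that interval. Therefore
\[
\ol{v}(x;s-c)=\ex_{x,s}\big[\et^{-A_{\tau^\star}^{s-c}}U(\et^{X_{\tau^\star}})\ind_{\{\tau^\star<\infty\}}\big]=\ex_{x,s}\big[\et^{-R_{\tau^\star}^c}U(\et^{X_{\tau^\star}})\ind_{\{\tau^\star<\infty\}}\big]\le V(x,s;c),
\]
which, combined with the upper bound, gives equality in \eqref{eq:bounds_fixed}; the set equality in \eqref{eq:bounds_fixed1} then follows from this pointwise equality together with the inclusion $\mc{S}_c\cap(\R\times\{s\})\subseteq\mc{D}_{s-c}\times\{s\}$.

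\textbf{Necessity.} Conversely, if $[s,\infty)\not\subseteq\mc{D}_{s-c}$, then the continuation region of the $\ol{v}(\cdot;s-c)$ problem meets $[s,\infty)$, so the optimiser $\tau^\star$ does not stop $X$ before it first exceeds $s$ with positive probability. On this event, $\ol{X}_u>s$ for a positive-Lebesgue-measure set of times $u\in[0,\tau^\star]$, on which $\{s-c\le X_u<\ol{X}_u-c\}$ occurs with positive probability, producing the strict pathwise inequality $R_{\tau^\star}^c>A_{\tau^\star}^{s-c}$ on a set of positive probability. Since by Proposition \ref{prop:compare1} the stopping region $\mc{D}_{s-c}\subseteq[\ul{b},\infty)$ so that $U(\et^{X_{\tau^\star}})>0$, this strict gain in the discount translates to $\ex_{x,s}[\et^{-R_{\tau^\star}^c}U(\et^{X_{\tau^\star}})\ind]<\ol{v}(x;s-c)$; then the same argument applied to any competing strategy yields $V(x,s;c)<\ol{v}(x;s-c)$, contradicting equality in \eqref{eq:bounds_fixed}. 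I expect this final step to be the main obstacle: turning the pathwise strict inequality $R^c>A^{s-c}$ on an event of positive probability into a strict gap between the two suprema requires a careful argument ruling out that some alternative $\tau$ for $V$ recovers the full value $\ol{v}(x;s-c)$, which is where the positivity of the reward at $X_{\tau^\star}$ inside $\mc{D}_{s-c}\subseteq[\ul{b},\infty)$ is crucial.
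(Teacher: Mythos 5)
Your handling of the parts the paper actually proves is correct and follows essentially the paper's route: the pathwise domination $A_t^{s-c}\le R_t^c$ under $\pr_{x,s}$ gives \eqref{eq:bounds_fixed}; the inclusion \eqref{eq:bounds_fixed1} follows by sandwiching $U(\et^x)\le V(x,s;c)\le \ol{v}(x;s-c)=U(\et^x)$ at $x\in\mc{D}_{s-c}$; and for sufficiency you argue, exactly as the paper does, that when $[s,\infty)\subseteq\mc{D}_{s-c}$ the optimal rule for $\ol{v}(\cdot\,;s-c)$ acts no later than the first passage of $X$ to $s$, so $\ol{X}\equiv s$ and $R^c=A^{s-c}$ along that strategy, whence $V\ge\ol{v}$ and equality holds. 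Your detour through the reformulation \eqref{eq:problem2} and its analogue for $\ol{v}$ with the strictly positive reward $\ul{v}$ is a small but genuine refinement: the paper multiplies the clock comparison directly by $U(\et^{X_\tau})$, which can be negative for $X_\tau<0$; your route (or, alternatively, the observation that replacing $U$ by $U^+$ changes neither supremum, since one may continue to $T_0^+$ rather than stop at a negative reward) closes that loophole. One small slip: $\mc{D}_{s-c}\subseteq[\ul{b},\infty)$ is not Proposition \ref{prop:compare1}, which concerns $\mc{S}_c$; but it follows by the identical comparison, since $\ol{v}(x;y)\ge\ul{v}(x)>U(\et^x)$ for $x<\ul{b}$.

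The one genuine gap is the necessity (``only if'') direction, which you flag yourself: a strict pathwise loss along the particular optimiser $\tau^\star$ does not separate the two suprema, because a maximising sequence of stopping times for $V$ could in principle avoid the region $\{\ol{X}>s\}$ or make the extra discounting asymptotically negligible, and your sketch does not rule this out. Two remarks put this in perspective. First, the paper's own proof establishes only the ``if'' direction (it shows $V\ge\ol{v}$ under $[s,\infty)\subseteq\mc{D}_{s-c}$ and then declares the proof complete), and only that direction is used later in the proofs of Theorems \ref{thm1}--\ref{thm3}, so on this point you are not behind the paper. Second, the ``only if'' cannot be read pointwise: for any $x\in\mc{D}_{s-c}$ with $x\le s$ one has $V(x,s;c)=U(\et^x)=\ol{v}(x;s-c)$ regardless of how $\mc{D}_{s-c}$ behaves above $s$, so equality in \eqref{eq:bounds_fixed} must be understood over the whole slice $\{x\le s\}$ for the given $s$ (and \eqref{eq:bounds_fixed1} as equality of $s$-slices). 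A completed necessity argument would therefore have to exhibit at least one $x\le s$ in the continuation region of $\ol{v}(\cdot\,;s-c)$ at which the gap is strict, and it must control \emph{all} admissible stopping times for $V$, not just $\tau^\star$ --- for instance by showing that strategies confined to $\{\tau\le T_s^+\}$ have value strictly below $\ol{v}(x;s-c)$ when the continuation region reaches above $s$, while strategies that pass above $s$ incur a quantifiable extra discount; neither estimate is supplied in your sketch.
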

\begin{proof}
We only need to prove that the equality in \eqref{eq:bounds_fixed} holds if and only if  $[s,\infty) \,\subseteq \mc{D}_{s-c}$. 
But the latter condition means that it is optimal to stop in problem \eqref{eq:oldproblem} before the asset log price $X$ reaches $s$. 
This is equivalent to the running maximum $\ol{X}$ remaining constant, equal to $s$, in which case the value of the same strategy for problem \eqref{eq:problem1} under $\pr_{x,s}$ is the same as $\ol{v}(x;s-c)$. Therefore, the optimal value $V(x,s;c)$ for problem \eqref{eq:problem1} is no less than $\ol{v}(x; s-c)$. This completes the proof. 
\end{proof}

In the next section, we focus on solving for the upper bound $\ol{v}(x;y)$.

\section{Cracking problem \eqref{eq:oldproblem} with value function $\ol{v}(x;y)$} \label{sec:barv}

This section is concerned with the study of  problem \eqref{eq:oldproblem}, which servers as a cornerstone in the analysis of problem \eqref{eq:problem1}. 
Recall that, the case of risk neutral utility has already been treated in \cite[Theorem 2.4, Theorem 2.5]{OmegaRZ}. 

\begin{prop}\label{prop41}
The value function of \eqref{eq:oldproblem} satisfies the following properties:
\begin{enumerate}
\item[(i)] $\ol{v}(x;y)$ is strictly increasing and continuous in $x$ over $\R$, and is non-increasing and continuous in $y$ over $\R$;
\item[(ii)] if there exists a constant $a\in\mc{D}_y\cap(-\infty, y]$, then $y\ge\ul{b}$ and $[\ul{b},a]\subset\mc{D}_y$;
\item[(iii)] the optimal stopping region $\mc{D}_y$ is a union of disjoint closed intervals and there is at most one component  that lies in $(y,\infty)$.
\end{enumerate}
\end{prop}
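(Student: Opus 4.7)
For part (i), the plan is a standard coupling-plus-Feller argument. Writing $X_t^x := x + (X_t - X_0)$ on the canonical space, the mapping $x \mapsto A^y_\tau$ is non-increasing while $x \mapsto U(\et^{X_\tau^x})$ is strictly increasing, which delivers strict monotonicity of $\ol{v}(\cdot; y)$ in $x$; the analogous coupling in $y$ for fixed $x$ gives $y \mapsto A^y_\tau$ non-decreasing, and hence $\ol{v}(x; \cdot)$ non-increasing. Continuity in $x$ follows by a standard argument exploiting the Feller property of $X$ together with the integrability bound implicit in footnote \ref{note1}. Continuity in $y$ uses that $X$ has no atoms at any fixed level, so $y \mapsto A^y_t$ is a.s.\ continuous, after which dominated convergence closes the argument inside the supremum.

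For part (ii), the first observation is that if $a \in \mc{D}_y \cap (-\infty, y]$ with $a < \underline{b}$, then by Proposition \ref{prop:compare1} and the explicit form of $\underline{v}$ one has $\ol{v}(a; y) \geq \underline{v}(a) > U(\et^a)$, contradicting $a \in \mc{D}_y$; hence $a \geq \underline{b}$ and so $y \geq \underline{b}$. For the interval inclusion, fix $x \in [\underline{b}, a]$. By the strong Markov property at $T_a^+$, combined with the spectral negativity of $X$ (so $X_{T_a^+} = a$ on $\{T_a^+ < \infty\}$) and $\ol{v}(a; y) = U(\et^a)$, the key estimate is
\ben
\ex_x[\et^{-A^y_\tau} U(\et^{X_\tau}) \ind_{\{\tau<\infty\}}] \leq \ex_x[\et^{-(r+q)(\tau \wedge T_a^+)} U(\et^{X_{\tau \wedge T_a^+}}) \ind_{\{\tau \wedge T_a^+<\infty\}}] \leq \underline{v}(x) = U(\et^x),
\een
where the first inequality uses that $X_u < a \leq y$ on $\{u < T_a^+\}$ so $A^y_u$ runs at rate $r+q$, and the last equality uses $x \geq \underline{b}$. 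Taking the supremum over $\tau$ and combining with the trivial lower bound $\ol{v}(x; y) \geq U(\et^x)$ from immediate stopping gives $x \in \mc{D}_y$.

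For part (iii), closedness of $\mc{D}_y$ is immediate from the joint continuity in (i), so $\mc{D}_y$ decomposes as a disjoint union of its closed connected components; part (ii) already identifies $\mc{D}_y \cap (-\infty, y]$ as either empty or the single interval $[\underline{b}, \max(\mc{D}_y \cap (-\infty, y])]$. For the ``at most one component in $(y,\infty)$'' claim, the plan is to exploit that $A^y_t$ runs at the constant rate $r$ on $\{X > y\}$: by the DPP at $T_y^-$, problem \eqref{eq:oldproblem} restricted to initial $x > y$ reduces to an $r$-discounted optimal stopping of $U(\et^{X_\tau})$ until exit below $y$, with boundary value $\ol{v}(\cdot; y)$ at $T_y^-$. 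Applying Theorem 2.2 of \cite{Long18}, together with the monotonicity criterion underlying \eqref{eq:g} from the analysis of $g(\cdot)$ in Section \ref{sec:connection}, yields that the stopping region in $(y, \infty)$ is either empty or a single half-line $[z^\star(y), \infty)$.

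The main obstacle is this last step of part (iii): the candidate ``reward-on-exit'' $\ol{v}(\cdot; y)$ below $y$ is not yet explicit when Proposition \ref{prop41} is stated. I would circumvent this by noting that the monotonicity conditions driving the threshold structure in \cite{Long18} depend only on $g(\cdot)$ and the model primitives, not on the explicit form of $\ol{v}(\cdot; y)$ below $y$; the fully explicit characterisation of $z^\star(y)$ as the largest root of \eqref{eq:zstarg}, carried out in the remainder of Section \ref{sec:barv}, then confirms the single-component structure in $(y, \infty)$ a posteriori.
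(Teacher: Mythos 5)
The weak link is exactly the one you flag yourself: the second half of part (iii). Theorem 2.2 of \cite{Long18} concerns perpetual stopping of $\et^{-rt}U(\et^{X_t})$ with a \emph{constant} discount rate and no exit payoff, and its threshold conclusion rests on monotonicity of the map in \eqref{eq:hzz}. After your reduction at $T_y^-$, the problem for $x>y$ is a stopping problem with absorption at the (possibly overshooting) down-crossing of $y$ and terminal reward $\ol{v}(X_{T_y^-};y)$ there; this exit payoff is precisely what couples the region above $y$ to the unknown behaviour of $\ol{v}(\cdot;y)$ below $y$, so the hypotheses of \cite{Long18} are not met and the claim that ``the monotonicity conditions depend only on $g$ and the model primitives'' is unsubstantiated — indeed under severe anxiety $g$ is not monotone, and the eventual stopping set above $y$ need not a priori be an up-set of the auxiliary problem. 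Your fallback, confirming the single-component structure ``a posteriori'' from the explicit characterisation of $z^\star(y)$ in Section \ref{sec:barv}, is circular: Proposition \ref{prop41}(iii) is an input to that analysis (it underlies the trichotomy of possible shapes of $\mc{D}_y$ stated right after the proposition, and is invoked in the proof of Theorem \ref{thm0} to locate the second contact point $x_0<\tilde{y}$). The paper avoids this by transferring the dedicated argument of \cite[Proposition A.1]{OmegaRZ} (proved there for the risk-neutral reward) to CRRA utilities via the domination $U^+(u)\le(u-1)^+$ and dominated convergence, and similarly handles (i)--(ii) by the proof of \cite[Proposition 3.1]{OmegaRZ}; some self-contained substitute for that appendix argument is what your proposal is missing.

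Two smaller points. In (i), the coupling comparison of the products $\et^{-A^y_\tau}U(\et^{X_\tau})$ is not immediate where $U(\et^{X_\tau})<0$ (i.e. $X_\tau<0$): lowering $x$ or raising $y$ increases the clock, hence \emph{increases} a negative term, so the pathwise comparison can go the wrong way; this is fixable by first noting that one may restrict to stopping times with $X_\tau\ge0$ on $\{\tau<\infty\}$ (never stop at a negative reward), and the continuity claims need uniform-in-$\tau$ estimates (using $r>\psi(1)\ge\psi(1-\rho)$) rather than dominated convergence ``inside the supremum''. In (ii), the inequality $\ol{v}\ge\ul{v}$ follows directly from $A^y_t\le(r+q)t$ rather than from Proposition \ref{prop:compare1} (which concerns the two-dimensional value $V$); apart from that, your strong-Markov argument at $T_a^+$, using $X_{T_a^+}=a$ and $\ul{v}(x)=U(\et^x)$ for $x\ge\ul{b}$, is correct and is a clean self-contained version of what the paper obtains by citation.
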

\begin{proof}
For any $\rho\in[0,1)$, we know from
\[\frac{u^{1-\rho}-1}{1-\rho}\bigg|_{u=1}=0, \quad \frac{\partial}{\partial u}\bigg|_{u=1}\bigg(\frac{u^{1-\rho}-1}{1-\rho}\bigg)=1, \quad\text{and} \quad \frac{\partial}{\partial\rho}\frac{\partial}{\partial u}\bigg(\frac{u^{1-\rho}-1}{1-\rho}\bigg)=-u^{-\rho}\,\log u<0\quad\forall u>1,\]
that 
\[\bigg(\frac{u^{1-\rho}-1}{1-\rho}\bigg)^+\le (u-1)^+,\quad\forall \;u\in\R_+.\]
Hence, by the dominated convergence theorem, one can repeat the steps used in the proof of \cite[Proposition 3.1]{OmegaRZ} to prove $(i)$ and $(ii)$. 
The claim in the first half of $(iii)$ follows from the fact that $\ol{v}(x;y)$ is continuous in $x$ over $\R$; the second half of the claim in $(iii)$ can be proved in the same way as in \cite[Proposition A.1]{OmegaRZ}.
\end{proof}

We recall that, if $\ol{u}=-\infty$ or $\ol{u}\ge0$ and $y<\bar{y}$, the candidate up-crossing selling threshold $z^\star(y)$ of \eqref{z*}, is actually the largest root to \eqref{eq:zstarg}. By the monotone property of $g$ we know that $z^\star(y)-y$ is strictly decreasing. Moreover, Proposition \ref{prop41}(i) also implies that if $\mc{D}_{y}=[z^\star(y),\infty)$ for all $y$ in some interval $I$ with nonempty interior, then 
$z^\star(y)$  is continuous and non-increasing in $y$ over $I$. Proposition \ref{prop41}(ii)--(iii) imply that there are three possibilities for the stopping region:
$\rm(I)$ $\mc{D}_y$ does not include $\ul{b}$, 
or $\rm(II)$ $\mc{D}_y=[\ul{b},\infty)$, 
or $\rm(III)$ $\mc{D}_y=[\ul{b}, a]\cap[b,\infty)$ for some $\ul{b}\le a<y<b$.

In what follows, we study the problem \eqref{eq:oldproblem} separately in the two cases of mild %anxiety 
(i.e. $\bar u=-\infty$) and severe (i.e. $\bar u \ge0$) anxiety (see Section \ref{sec:connection}). The following results generalise \cite[Theorem 2.4, Theorem 2.5]{OmegaRZ} to the case of risk averse investors.

\subsection{Investors with mild anxiety}
\label{sec:barvm}

\begin{thm}
\label{thm00}
For an investor with mild anxiety (i.e. $\ol{u}=-\infty$), the optimal stopping region and the value function for problem \eqref{eq:oldproblem} are given by 
$\mc{D}_y=[z^\star(y),\infty)$, 
and
\be \label{v1}
\ol{v}(x;y)= \ind_{\{x<z^\star(y)\}} U(\et^{z^\star(y)}) \frac{\mc{I}^{(r,q)}(x-y)}{\mc{I}^{(r,q)}(z^\star(y)-y)}+\ind_{\{x\ge z^\star(y)\}} U(\et^x) ;
\ee
Moreover, the function $z^\star(y)$ of \eqref{z*} is continuous and strictly decreasing over $(-\infty,\ul{b}]$, and $z^\star(y)\equiv \ul{b}$ for all $y\ge\ul{b}$.
\end{thm}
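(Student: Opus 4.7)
The plan is to combine the preliminary analysis of Section~\ref{sec:connection} with classical scale-function identities, and then verify the claimed properties of the selling threshold $z^\star(\cdot)$. Under mild anxiety ($\ol{u}=-\infty$), the function $g(\cdot)$ is non-decreasing over $\R$, so the mapping in \eqref{eq:g}, namely $z\mapsto U(\et^z)-\frac{1}{\Lambda(z-y)}\frac{\diff}{\diff z}U(\et^z)$, is non-decreasing for every fixed $y$. As already noted just after \eqref{eq:g}, this combined with Lemma~\ref{lem:I} and \cite[Theorem 2.2]{Long18} directly yields that the take-profit stopping time $\tau^\star=T^+_{z^\star(y)}$ is optimal for \eqref{eq:oldproblem}, whence $\mc{D}_y=[z^\star(y),\infty)$.

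For the explicit form of the value function, I would use spectral negativity of $X$: on $\{x\ge z^\star(y)\}$ trivially $\ol{v}(x;y)=U(\et^x)$, while for $x<z^\star(y)$ the absence of upward jumps forces $X_{\tau^\star}=z^\star(y)$ on $\{\tau^\star<\infty\}$, so that
\ben
\ol{v}(x;y)=U(\et^{z^\star(y)})\cdot\ex_x\bigl[\et^{-A^y_{T^+_{z^\star(y)}}}\ind_{\{T^+_{z^\star(y)}<\infty\}}\bigr].
\een
The relevant Laplace transform of the Omega clock at an up-crossing time is the classical scale-function identity (Lemma~\ref{lem:I} / Appendix~\ref{app:pre}),
\ben
\ex_x\bigl[\et^{-A^y_{T^+_z}}\ind_{\{T^+_z<\infty\}}\bigr]=\frac{\mc{I}^{(r,q)}(x-y)}{\mc{I}^{(r,q)}(z-y)},\quad x\le z,
\een
and plugging in $z=z^\star(y)$ yields exactly \eqref{v1}.

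It remains to establish the claimed properties of $z^\star(\cdot)$. For $y\ge\ul{b}$, a change of variables in \eqref{Irq} together with the Laplace transform of $W^{(r)}$ gives $\mc{I}^{(r,q)}(x)=\et^{\Phi(r+q)x}/q$ for $x\le 0$; hence $\Lambda(x)\equiv\Phi(r+q)$ on $(-\infty,0)$ and $g$ is strictly increasing on $(-\infty,0]$. A direct computation using \eqref{b_} then shows $g(\ul{b}-y)=\et^{-(1-\rho)y}/(1-\rho)$, so by strict monotonicity $u=\ul{b}-y\le 0$ is the infimum in \eqref{z*}, giving $z^\star(y)=\ul{b}$. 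For $y<\ul{b}$, the value $z^\star(y)-y>0$ lies in the region where $g$ is strictly increasing, so continuity of $z^\star(\cdot)$ follows from continuity of $g$; non-strict monotonicity is obtained from the pathwise comparison $A^{y_1}_t\le A^{y_2}_t$ (for $y_1<y_2$), which forces $\ol{v}(\cdot;y_1)\ge\ol{v}(\cdot;y_2)$ and hence $z^\star(y_2)\le z^\star(y_1)$. The main obstacle I expect is upgrading this to \emph{strict} monotonicity on $(-\infty,\ul{b}]$: either one shows that the inequality $A^{y_1}_t<A^{y_2}_t$ holds on a set of positive probability (so that $\ol{v}$ becomes strictly smaller in the pre-stopping region), or one differentiates the implicit relation $\Lambda(z^\star(y)-y)=(1-\rho)/(1-\et^{-(1-\rho)z^\star(y)})$ derived by rearranging \eqref{eq:zstarg} and uses the strict monotonicity of $\Lambda$ on $\R_+$.
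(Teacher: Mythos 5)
Your treatment of the main claims matches the paper's proof: the paper also identifies $\mc{D}_y=[z^\star(y),\infty)$ by appealing to the preliminary analysis of Section \ref{sec:connection} (monotonicity of the map \eqref{eq:g} together with \cite[Theorem 2.2]{Long18} and footnote \ref{note1}), reads off the formula \eqref{v1} from Lemma \ref{lem:I} (absence of positive jumps forcing $X_{T^+_{z^\star(y)}}=z^\star(y)$), and obtains $z^\star(y)\equiv\ul{b}$ for $y\ge\ul{b}$ from exactly your observation that $\Lambda(x)\equiv\Phi(r+q)$ for $x<0$ combined with \eqref{eq:zstarg} and \eqref{b_}. Up to this point your proposal and the paper are essentially identical.

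The divergence is in the regularity of $z^\star(\cdot)$, and here your write-up is incomplete in two places. First, for continuity you assume $g$ is strictly increasing at $z^\star(y)-y>0$; under mild anxiety the paper only establishes strict increase of $g$ on $(-\infty,0)$ and non-decrease on $\R$, and it obtains continuity and (weak) monotonicity of $z^\star$ instead from Proposition \ref{prop41}(i) (regularity and monotonicity of $\ol{v}$ in $(x,y)$) and the discussion following it — either supply an argument that $g$ cannot be locally constant at the relevant level, or argue through the value function as the paper does. Second, you flag strict monotonicity as the open obstacle and offer two routes, only one of which works. The $\Lambda$-based route does close the argument, and no differentiation is needed: under mild anxiety $z^\star(y)$ is a genuine root of \eqref{eq:zstarg}, equivalently $\Lambda(z^\star(y)-y)=(1-\rho)/(1-\et^{-(1-\rho)z^\star(y)})$, so if $y_1<y_2\le\ul{b}$ shared the same threshold $z\ge\ul{b}$ one would get $\Lambda(z-y_1)=\Lambda(z-y_2)$ with $z-y_1>z-y_2\ge0$, contradicting the strict decrease of $\Lambda$ on $\R_+$ (\cite[Lemma 4.2]{OmegaRZ}). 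By contrast, the purely probabilistic route does not suffice as sketched: a positive-probability strict comparison of the clocks only yields $\ol{v}(x;y_1)>\ol{v}(x;y_2)$ on the continuation region, and strict dominance there is compatible with the two boundaries coinciding, so no contradiction follows without a further ingredient (this is also the delicate step in the paper's own argument, which asserts that equal thresholds force $\ol{v}(\cdot;y_1)\equiv\ol{v}(\cdot;y_2)$; comparing $\Lambda(z-y_1)$ with $\Lambda(z-y_2)$ — equivalently the left derivatives of the two value functions at the common threshold — is the clean way to finish, and is in fact tighter than the published argument).
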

\begin{proof} 
The expression for $\mc{D}_y$ has already been proved in the preliminary analysis of Section \ref{sec:connection}, which also implies the form of $\ol{v}(x;y)$ in view of Lemma \ref{lem:I}. 
The expression of $z^\star(y)$ for $y>\ul{b}$ follows from equations \eqref{eq:zstarg} and \eqref{b_}, using the fact that $\Lambda(x)\equiv\Phi(r+q)$ for all $x<0$. 
Also, Proposition \ref{prop41} and succeeding discussions imply the continuity and non-increasing property of $z^\star(y)$ over $\R$. 

To prove the strictly decreasing property of $z^\star(y)$ over $(-\infty, \ul{b}]$, suppose that there exist $y_1<y_2< \ul{b}$ such that $z^\star(y_1)=z^\star(y_2)\ge\ul{b}$, aiming for a contradiction. Then, we must have $\ol{v}(x;y_1)\equiv \ol{v}(x;y_2)$ for all $x\in\R$. However, it is easily seen that 
$$A_{T_{z^\star(y_1)}^+}^{y_1}<A_{T_{z^\star(y_2)}^+}^{y_2}, \quad \pr_x-a.s. \text{ for any $x> y_2$} 
\quad \Rightarrow \quad 
\ol{v}(x;y_1)>\ol{v}(x;y_2) \text{ for any $x>y_2$,} 
$$ 
which is a contradiction. 
Hence, $z^\star(y)$ must be strictly decreasing over $(-\infty, \ul{b}]$.
\end{proof}

\subsection{Investors with severe anxiety}
\label{sec:barvs}

\begin{thm}
\label{thm0}
For an investor with severe anxiety (i.e. $\ol{u}\ge0$), we have $\ul{b}<\tilde{y}<\ol{y}$ and $\tilde{y}<\hat{y}$. The optimal stopping region and the value function for problem \eqref{eq:oldproblem}
 are given as follows: 
\begin{enumerate}
\item[\rm(a)] if $y<\tilde{y}$, then $\mc{D}_y=[z^\star(y),\infty)$, 
while if $y=\tilde{y}$, then $\mc{D}_y=\{\underline{b}\}\,\cup\, [z^\star(y),\infty)$. 
The value function $\ol{v}(x;y)$ is given by \eqref{v1};
\item[\rm(b)] if $\tilde{y}<y<\hat{y}$, then $\mc{D}_y=[\underline{b},a^\star(y)] \,\cup\, [b^\star(y),\infty)$, where $a^\star(y)$ and $b^\star(y)$ are defined in \eqref{a*b*}, and 
\begin{align} \label{vflast}
\ol{v}(x;y) = \underline{v}(x) + \ind_{\{a^\star(y)<x<b^\star(y)\}} \Delta(x,a^\star(y);y)\,,
\end{align}
with  $\Delta(x, a; y)$ given in \eqref{D}. 
\item[\rm(c)] if $y\ge \hat{y}$, then $\mc{D}_y=[\underline{b},\infty)$ and $\ol{v}(x;y)=\underline{v}(x)$.
\end{enumerate}
Overall, the function $a^\star(\cdot)$ is continuous and strictly increasing over $[\tilde{y}, \hat{y})$, with $a^\star(\tilde{y})=\ul{b}$ and $\lim_{y\uparrow \hat{y}}a^\star(y)=\hat{y}$, while the mapping
\[y\mapsto\begin{dcases}
z^\star(y),\quad & y\le \tilde{y},\\
b^\star(y),& \tilde{y}<y<\hat{y},
\end{dcases}\]
is continuous and strictly decreasing, with $\lim_{y\uparrow \hat{y}} b^\star(y)=\hat{y}$;  (see Figure \ref{fig:example}(a) for a plot of $a^\star(y)$ and $b^\star(y)$).
\end{thm}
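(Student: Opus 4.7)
The plan is to run a guess-and-verify argument organised around the trichotomy from Proposition \ref{prop41}(ii)--(iii): the stopping region $\mc{D}_y$ is either (I) of the form $[z,\infty)$ not containing $\ul{b}$, (II) equal to $[\ul{b},\infty)$, or (III) of the two-component form $[\ul{b},a]\cup[b,\infty)$. These three shapes should correspond exactly to cases (a), (c) and (b) of the theorem, with the transitions occurring at $y=\tilde{y}$ and $y=\hat{y}$. For each candidate value function I would verify that it is an excessive majorant of $U(\et^\cdot)$ for problem \eqref{eq:oldproblem} and that the candidate stopping region coincides with the set on which the majorant meets the reward (cf.\ \cite[Ch.\ I, Sec.\ 2.2]{PS}). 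Passage-time identities via $r$-scale functions (Appendix \ref{app:pre}) and the explicit resolvent kernel $W^{(r,q)}(\cdot,\cdot;y)$ from \eqref{eq:bigW1} will do most of the heavy lifting.

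I would first establish the strict ordering of critical constants. The inequality $\ul{b}<\tilde{y}$ follows by evaluating \eqref{tily} at $y=\ul{b}$: since $\Lambda(0+)\le\Phi(r+q)$ forces $z^\star(\ul{b})=\ul{b}$ in the limiting sense, the supremum on the left-hand side of \eqref{tily} strictly exceeds its right-hand side by $\ol{u}\ge 0$ together with the local-minimum structure of $g$ at $\ol{u}$ recalled in Section \ref{sec:connection}. The bound $\tilde{y}<\ol{y}$ is obtained by noting that at $y=\ol{y}$ the defining supremum in \eqref{tily} is attained at $x=\ol{y}+\ol{u}$, by the very definition $\ol{y}=-(1-\rho)^{-1}\log((1-\rho)g(\ol{u}))$. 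Lastly, $\tilde{y}<\hat{y}$ would follow by checking $\chi>0$ on $[\ul{b},\tilde y]$: this uses the representation \eqref{eq:57} together with $\ul{v}-U(\et^\cdot)>0$ on $(-\infty,\ul{b})$, so $\hat y$, the first zero of $\chi$, necessarily sits to the right of $\tilde y$.

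Cases (a) and (c) are comparatively direct. For (a) with $y\le\tilde{y}$, the preliminary analysis of Section \ref{sec:connection} shows that the mapping in \eqref{eq:g} is non-decreasing on $[z^\star(y),\infty)$; \cite[Theorem 2.2]{Long18} together with Lemma \ref{lem:I} then yield the up-crossing optimality at $z^\star(y)$ and the formula \eqref{v1}, and the boundary case $y=\tilde{y}$ appends the isolated point $\{\ul{b}\}$ to $\mc{D}_y$ precisely because the supremum in \eqref{tily} is attained there. For (c) with $y\ge\hat{y}$, I would take $\ul{v}(x)$ as candidate and verify super-harmonicity via a Dynkin-type calculation: the action of the generator of the Omega-clock-killed process on $\ul{v}$ produces, after accounting for the jumps of $\ul{v}'$ across $\ul{b}$, a term proportional to $\chi(x)\ind_{\{x\ge\ul{b}\}}$, and the non-positivity of this term for all $x\ge\ul{b}$ is precisely the condition $\chi(y)\le 0$ defining $\hat y$.

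The principal obstacle is case (b), where the two-component stopping region requires simultaneous determination of a stop-loss boundary $a^\star(y)$ and a take-profit boundary $b^\star(y)$. The candidate \eqref{vflast} augments $\ul{v}$ by the correction $\Delta(x,a^\star(y);y)$ supported on the continuation gap; probabilistically, $W^{(r,q)}(x,w;y)$ should act as the occupation-time-killed Green kernel of $X$ on this gap with exit through $(-\infty,a^\star(y)]\cup\{+\infty\}$, making $\Delta$ the expected discounted surplus of $\ul{v}-U(\et^\cdot)$ collected upon stopping. Verification splits into (b1) existence and non-triviality of $a^\star(y)\in(\ul{b},y)$ and $b^\star(y)\in(y,z^\star(\tilde y)]$, obtained by combining the sign of $\chi$ on $[\ul{b},\hat y)$ with the behaviour of $\Delta(\cdot,a;y)$ at the two endpoints $a=\ul{b}$ and $a=y$; and (b2) smooth-fit of the candidate at both $a^\star(y)$ and $b^\star(y)$, which follows from the defining infima in \eqref{a*b*} together with the $C^2$-regularity of $W^{(r)}$. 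The continuity and strict monotonicity of $a^\star(\cdot),b^\star(\cdot)$ on $[\tilde y,\hat y)$ would then come from an implicit-function theorem applied to the system $\Delta(b^\star(y),a^\star(y);y)=\p_x\Delta(b^\star(y),a^\star(y);y)=0$, with the signs of $\p_y\Delta$ and $\p_a\Delta$ delivering the strict monotonicities; the limits $a^\star(y),b^\star(y)\to\hat y$ as $y\uparrow\hat y$ would be read off by showing that $\mc{N}^{a^\star(y)}$ shrinks to the singleton $\{\hat y\}$, matching cleanly with case (c).
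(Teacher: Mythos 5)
Your skeleton (guess-and-verify over the trichotomy of Proposition \ref{prop41}, with transitions at $\tilde{y}$ and $\hat{y}$) matches the paper, but several load-bearing steps are wrong or missing. The most serious is part (a): you invoke \cite[Theorem 2.2]{Long18} on the grounds that the map \eqref{eq:g} is non-decreasing on $[z^\star(y),\infty)$, but under severe anxiety that map fails to be monotone on $(-\infty,y+\ol{u})$, which is exactly where optimality of the pure up-crossing strategy can break down; as written your argument applies verbatim to every $y\le\ol{y}$ and would contradict part (b). The actual work is to show that the ratio $D(x;y)=U(\et^x)/\ol{v}(x;y)$ stays below $1$ on $(-\infty,y+\ol{u})$ for $y<\tilde{y}$: one needs $\p_x D>0$ on $(y+\ol{u},z^\star(y))$ \emph{and} that $D(x;\cdot)$ is strictly increasing in $y$, so that the largest local maximum of $D(\cdot;y)$ crosses level $1$ exactly once, at $y=\tilde{y}$. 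This omission also undermines your threshold ordering. The identification of the second tangency point at $y=\tilde{y}$ as exactly $\ul{b}$ (via stationarity of $D$, the fact that $\Lambda\equiv\Phi(r+q)$ below $y$, and Proposition \ref{prop41}(iii)) is not free: it is simultaneously the proof that $\mc{D}_{\tilde{y}}=\{\ul{b}\}\cup[z^\star(\tilde{y}),\infty)$ and that $\ul{b}<\tilde{y}$. Your alternative route to $\ul{b}<\tilde{y}$ rests on the false premise $z^\star(\ul{b})=\ul{b}$ (under severe anxiety one only has $z^\star(\ul{b})\ge\ul{b}+\ol{u}$) and, worse, points the wrong way: if the supremum in \eqref{tily} strictly exceeded its right-hand side at $y=\ul{b}$, the monotonicity of $D$ in $y$ would force $\tilde{y}\le\ul{b}$. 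Likewise, proving $\tilde{y}<\hat{y}$ by ``checking $\chi>0$ on $[\ul{b},\tilde{y}]$'' is circular, since $\chi$ is strictly decreasing with zero at $\hat{y}$, so that statement \emph{is} $\tilde{y}<\hat{y}$; the non-circular argument compares stopping regions: $\mc{D}_{\tilde{y}}\neq[\ul{b},\infty)$ while $\mc{D}_y=[\ul{b},\infty)$ for all $y\ge\hat{y}$.

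In part (b), your claim of smooth fit at both boundaries is incorrect when $X$ has paths of bounded variation: there only continuous fit holds at $a^\star(y)$ (smooth fit at $b^\star(y)$ holds in all cases), so the derivation of the system $\Delta(b,a;y)=\p_b\Delta(b,a;y)=0$ must be split according to path variation. Your implicit-function-theorem route to continuity and strict monotonicity of $a^\star(\cdot),b^\star(\cdot)$ is only a plan: the sign and non-degeneracy computations are the actual content and are not supplied, whereas the paper gets continuity and weak monotonicity from the monotonicity of $\mc{D}_y$ in $y$ and strictness from a direct comparison of the clocks $A^{y_1}<A^{y_2}$ as in Theorem \ref{thm00}; the limits $a^\star(y),b^\star(y)\to\hat{y}$ as $y\uparrow\hat{y}$ similarly need an argument (tying the negative set $\mc{N}^{a^\star(y)}$ to the vanishing of $\chi$ at $\hat{y}$), not an assertion. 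Finally, in part (c) note that on $(\ul{b},y)$ the relevant supersolution inequality is $\chi(x)-q\,\ul{v}(x)\le0$, not $\chi(x)\le0$; this requires the monotonicity of $\chi-q\ul{v}$ and its non-positivity at $\ul{b}$, a point your Dynkin-type sketch glosses over.
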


\begin{proof}
\ul{\it Construction of the critical  $y$-value $\bar{y}$ from \eqref{ybar}.}
In view of equation \eqref{eq:g}, the fact that $\ol{u}$ is a local minimum of $g(\cdot)$ and the monotonicity 
\ben 
\frac{\partial}{\partial y} \Big( \et^{(1-\rho)y} g(z-y) \Big) 
= 
- \et^{(1-\rho)z} \frac{\Lambda'(z-y)}{\Lambda(z-y)} > 0 \,
\een
which follows from $\Lambda(\cdot)$ being strictly decreasing on $\R_+$ (see \cite[Lemma 4.2]{OmegaRZ}), we conclude that there exists $\bar{y}$ such that $z^\star(\bar y) = \bar u +\bar y$, which is given by the expression \eqref{ybar}.
For any $y>\bar y$, the value $z^\star(y)=\bar u + y$ does not identify with a root of \eqref{eq:zstarg} and a candidate up-crossing threshold for problem \eqref{eq:oldproblem}.
We thus focus on the values of $y \le \bar y$.

\vspace{2pt}

\ul{\it Construction of the critical  $y$-value $\tilde{y}$ from \eqref{tily}.}
In view of the above, for any fixed $y\le \bar{y}$, the up-crossing threshold  $z^\star(y)$ of \eqref{z*} satisfies  $z^\star(y)\ge y+\ol{u}$. 
By employing the techniques leading to \cite[Proposition 4.7]{OmegaRZ} in our setting, we know that the positive function $\ol{v}(x;y)$ given by \eqref{v1} satisfies: 
\begin{enumerate}
\item the process $\big(\exp(-A_t^y) \ol{v}(X_t;y)\big)_{t\ge0}$ is a super-martingale; \vspace{3pt}
\item the process $\big(\exp(-A_{t\wedge T_{z^\star(y)}^+}^y) \ol{v}(X_{t\wedge T_{z^\star(y)}^+};y)\big)_{t\ge 0}$ is a martingale;
\item $\ol{v}(x;y)= U(\et^x)$ for all $x\ge z^\star(y)$ and 
$\frac{\p}{\p x} \ol{v}(x;y)|_{x=z^\star(y)-}= U'(\et^{z^\star(y)})$.
\end{enumerate}
Combining these properties with the classical verification method (see, e.g. proofs of theorems in \cite[Section 6]{Alili2005}), we can establish the optimality of the selling strategy $T_{z^\star(y)}^+$ by finally proving that $\ol{v}(x;y) \geq U(\et^x)$ for all $x<z^\star(y)$, or equivalently  
\ben
D(x;y) \leq 1 , \quad\forall \; x < z^\star(y) , 
\quad \text{where } \; D(x;y):=\frac{U(\et^x)}{\ol{v}(x;y)}.
\een
Recall that $D(x;y)=1$ for all $x\ge z^\star(y)$.  Since $g(x-y)$ is strictly increasing for $y+\ol{u}<x<z^\star(y)$, we know that 
\be
\frac{\p}{\p x} D(x;y) = \frac{\Lambda(x-y)}{\ol{v}(x;y)}\Big( \frac{1}{1-\rho}- \et^{(1-\rho)y} g(x-y) \Big)>0\,,\quad \forall \; y+\ol{u}<x<z^\star(y) \,. \label{dxD}
\ee
Hence, $D(x;y)<1$ for all  $x\in [y+\bar u, z^\star(y))$. However, since \eqref{eq:g} can have more than one solution, \eqref{dxD} implies that $D(\cdot;y)$ may not be always increasing over $(-\infty,y+\bar u)$, in which case it admits at least one local maximum in this region.
Also, for every fixed $y<\bar{y}$ and $x<z^\star(y)$, given that $z^\star(\cdot)$ and $\Lambda(\cdot)$ are both decreasing over $(-\infty,\bar{y})$ and $(0,\infty)$ (see \cite[Lemma 4.2]{OmegaRZ}), we get
\begin{align}
\frac{\partial}{\partial y}\log D(x;y)
=& -\frac{\frac{\p}{\p y} \ol{v}(x;y,z^\star(y)) + \frac{\p}{\p z} \ol{v}(x;y,z))|_{z=z^\star(y)}\cdot\frac{\diff z^\star(y)}{\diff y}}{\ol{v}(x;y,z^\star(y))}\nn\\
=& \left.\begin{cases} 
\Lambda(z^\star(y)-y) \,, & \text{if } x< y\\
\Lambda(x-y)-\Lambda(z^\star(y)-y)\,, & \text{if } y<x<z^\star(y)
\end{cases} \right\} > 0
\,, \nn
\end{align}
thus the function $D(x;\cdot)$ is strictly increasing over $(-\infty,y)$. This yields that the largest local maximum of $D(\cdot;y)$ (which must be in $(-\infty, y+\ol{u})$) is also increasing in $y$ and we can define
$\tilde{y} := \inf \{ y\leq \bar{y} \,:\, \sup_{x<y+\ol{u}} D(x;y)=1\}$, 
which is equivalent to the definition \eqref{tily}, in view of the expression \eqref{v1} of $\ol{v}$. 

For $y=\bar{y}$ (and $z^\star(\bar{y})=\bar u + \bar y$), since $\bar u$ is a local minimum of $g(\cdot)$, we know that for all sufficiently small $\epsilon>0$, we have $g(\ol{u}-\epsilon)> e^{(\rho-1)\bar{y}} / (1-\rho)$, so $\frac{\p}{\p x}D(x;\bar{y})_{x=z^\star(\bar{y})-\epsilon}<0$. This implies that $D(x;\bar{y})>D(z^\star(\bar{y}); \bar{y})=1$ for all $x$ in a sufficiently small left neighborhood of $z^\star(\bar{y})$. Hence, we conclude that $\tilde{y}<\bar{y}$.

\vspace{2pt}

\ul{\it Proof of part $\rm (a)$.} 
By the construction of $\tilde{y}$, we know that for any fixed $y <\tilde{y}$, 
\ben
D(x;y)<1 , \quad\forall x<z^\star(y)\,, 
\een
which implies that the selling strategy $T_{z^*(y)}^+$ is optimal for all  $y <\tilde{y}$.

If $y=\tilde{y}$, we may also conclude that the candidate value function $\ol{v}(\cdot;\tilde{y})$ of \eqref{v1} is the true value function. 
Moreover, by the above properties of $\tilde{y}$, we know that there exists a point $x_0 <\ol{u}+\tilde{y}$ satisfying   
\ben
D(x_0;\tilde{y}) = \sup_{x<\ol{u}+\tilde{y}} D(x;\tilde{y}) = 1 \,. 
\een
Essentially, $x_0$ is a second point (other than $z^\star(\tilde{y})$) where the  value function $\ol{v}(\cdot;\tilde{y})$ smoothly touches the reward function $U(\et^x)$. 
Then for all sufficiently small $x$, $x_0$ will be the optimal up-crossing threshold, so $x_0$ is also a stationary point of $D(\cdot;\tilde{y})$ and solves $(1-\rho)^{-1}=\et^{(1-\rho) \tilde{y}}g(x_0-\tilde{y})$.
By Proposition \ref{prop41}(iii),  we may further conclude that $x_0<\tilde{y}$, thus  
\ben 
\frac{1}{1-\rho} =\et^{(1-\rho) \tilde{y}}g(x_0-\tilde{y})
=\et^{(1-\rho) x_0} \frac{\prq-1+\rho}{(1-\rho) \prq} 
\quad \Rightarrow \quad 
x_0=\underline{b}\,.
\een 
In view of the above, we conclude that there is only one such $x_0$ and $\mc{D}_{\tilde{y}}=\{\underline{b}\}\cup[z^\star(\tilde{y}),\infty)$ with $\ul{b} < \tilde{y}$.

\vspace{2pt}

\ul{\it Construction of the critical $y$-value $\hat{y}$ from \eqref{Lv_1} and proof of part $\rm(c)$.} 
In order to show that the value function is given by $\ol{v}(\cdot;y)=\underline{v}(\cdot)$ for all sufficiently large $y$, we adopt the method of proof through variational inequalities (see e.g. \cite{OksendalSulemBook} or \cite[Lemma 4.10]{OmegaRZ} for the same problem under risk neutral utility). 
Using the explicit expression of $\ul{v}(x)$ in \eqref{eq:underlinev}, one can easily see that $(\mc{L}-r)\ul{v}(x)\equiv \chi(x)$ from \eqref{eq:57}  for any $x>\ul{b}$, where 
$\mc{L}$ is the infinitesimal generator of $X$. Precisely,  
 for all functions ${F}(\cdot)\in C^2(\R)$, $\mc{L}F$ is given by
\[
\mc{L}{F}(x)=\frac{1}{2}\sigma^2{F}''(x)+\mu {F}'(x) + 
\int_{-\infty}^{0}({F}(x+z)-{F}(x)-\ind_{\{z>-1\}}z{F}'(x))\Pi(\diff z).
\]

Given that $(\mc{L}-r)\ul{v}(x)=q\ul{v}(x)>0$ for all $x<\ul{b}$, we know that $\hat{y}$ defined by \eqref{Lv_1} is the smallest $y$-value such that $\underline{v}(\cdot)$ is super-harmonic with respect to the discount rate $r+q\ind_{\{x<y\}}$. It follows that $\ol{v}(x;y)\equiv \ul{v}(x)$ and $\mc{D}_y\equiv[\ul{b},\infty)$ for all $y\ge \hat{y}$, which proves part $(c)$.
Since $\mc{D}^{\tilde{y}}\neq [\ul{b},\infty)$, we may conclude that $\tilde{y}<\hat{y}$. 

\vspace{2pt}

\ul{\it Proof of part $\rm(b)$.} The analysis in part (a), the fact that $\mc{D}_y$ is increasing in $y$ and the above observation (see also \eqref{Lv_1}) that $(y,\hat{y})\,\subset(\mc{S}^y)^c$, dictates the consideration of the following pasting points, for all $y\in\,(\tilde{y},\hat{y})$:
\begin{align}
a^\star(y):=\sup\{x\in[\underline{k}, y]: \ol{v}(x;y)=U(\et^x)\}
\quad \;\; \text{and} \;\; \quad
b^\star(y):=\inf\{x\in[\hat{y}, z^\star(\tilde{y})]: \ol{v}(x;y)=U(\et^x)\}.\label{eq:def_bs}
\end{align}
In fact, using Proposition \ref{prop41}(ii), we may conclude that $[\underline{b},a^\star(y)]\in\mc{D}_y$. 
Hence, for any $x\in\,(a^\star(y),b^\star(y))$, we have
\begin{align}
\ol{v}(x;y)
&=\ex_x\Big[ \exp\big(-A_{T_{a^\star(y)}^-\wedge T_{b^\star(y)}^+}^y\big) \underline{v}\big(X_{T_{a^\star(y)}^-\wedge T_{b^\star(y)}^+}\big) \Big]. \label{eqeqeqeqeq}
\end{align}
Using the results from Lemma \ref{lem:I} and \cite[Proposition 4.12]{OmegaRZ}, we can rewrite \eqref{eqeqeqeqeq}, using $\chi(\cdot)$ defined in \eqref{eq:57}, as 
\begin{align} \label{V2V22}
\ol{v}(x;y)
=&\underline{v}(x) 
+\int_{a^\star(y)}^{y}\bigg(\frac{W^{(r,q)}(x,a^\star(y))}{W^{(r,q)}(b^\star(y),a^\star(y))}W^{(r,q)}(b^\star(y),w)-W^{(r,q)}(x,w)\bigg) \left(\chi(w)-q\ind_{\{w<y\}}\underline{v}(w)\right)\diff w \nn\\
&+\int_{y}^{b^\star(y)} \bigg(\frac{W^{(r,q)}(x,a^\star(y))}{W^{(r,q)}(b^\star(y),a^\star(y))}W^{(r)}(b^\star(y)-w)-W^{(r)}(x-w)\bigg)  \left(\chi(w)-q\ind_{\{w<y\}}\underline{v}(w)\right)\diff w. 
\end{align}

In view of this explicit formula, one can easily see that the mapping $x\mapsto \ol{v}(x;y)$ is in $C^1(a^\star(y),b^\star(y))$.
By exploiting the optimality of $a^\star(y)$ and $b^\star(y)$ and similar arguments to the proof of \cite[Proposition 4.13]{OmegaRZ}, we can show that the smooth fit condition always holds for $b^\star(y)$, while it holds for $a^\star(y)$ only when $X$ has unbounded variation. 
Using the above smoothness at the log prices $a^\star(y)$ and $b^\star(y)$ together with \eqref{eqeqeqeqeq}--\eqref{V2V22}, we can show that the optimal pair $(a^\star(y)$, $b^\star(y))$ from \eqref{eq:def_bs} necessarily solves the following system:
\ben
\Delta(b,a;y)=\frac{\p}{\p b}\Delta(b,a;y)=0,\quad \text{for } \; (a,b)\in[\underline{k},y)\times[\hat{y},z^\star(y)], 
\een
where $\Delta(\cdot,\cdot;y)$ is given by \eqref{D}. The rest of the proof, including showing that $\ol{v}(x;y)$ takes the form \eqref{vflast} and $a^\star(y)$, $b^\star(y)$ from \eqref{eq:def_bs} are given by \eqref{a*b*}, is identical to the one for \cite[Theorem 2.5(iv)]{OmegaRZ}, hence is omitted for brevity.

Finally, the continuity, limits and weak (not strict) monotone properties of the relevant mappings follow from Proposition \ref{prop41} and the succeeding discussions. The strict monotonicity can be proved similarly to the proof of Theorem \ref{thm00}. 
\end{proof}
 
\begin{figure}%
\centering
\subfloat[$b^\star(y)$ and $a^\star(y)$]{{\includegraphics[width=200pt]{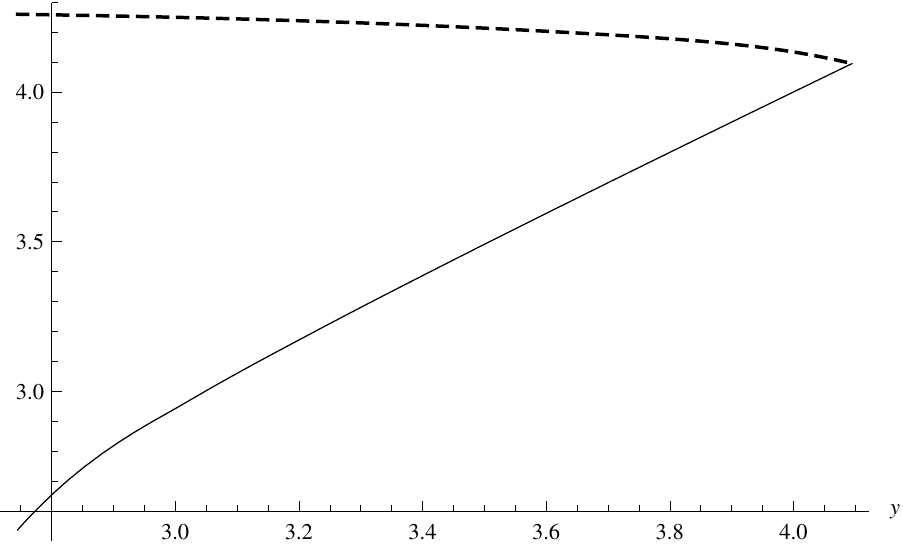} }}%
\qquad
\subfloat[$m(y)$]{{\includegraphics[width=200pt]{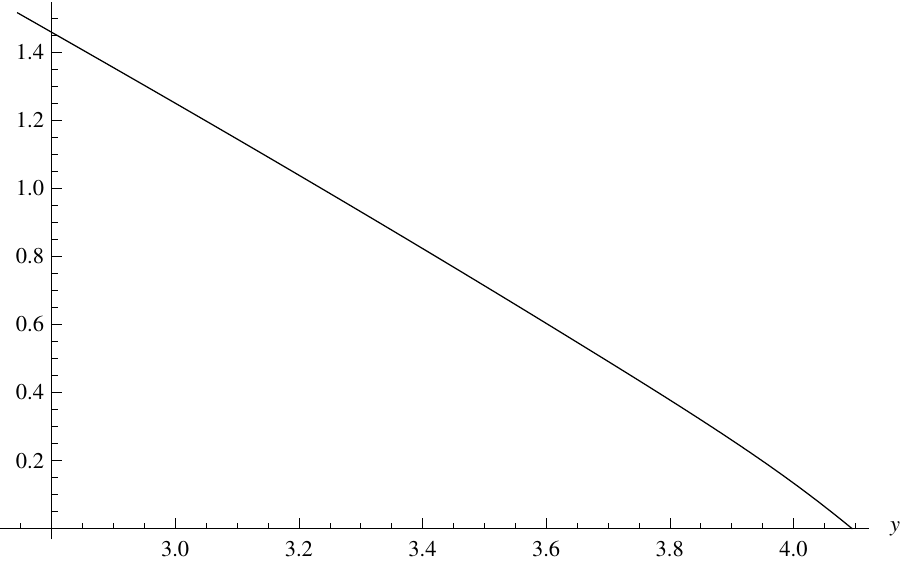} }}%
\caption{Using a compound Poisson plus a drifted Brownian motion model with $H^\star<0$ from \eqref{eq:explicit}, thus severe anxiety ($\ol{u}>0$), we plot the optimal boundaries $b^\star(y)$ (in dashed line) and $a^\star(y)$ defined in Theorem \ref{thm0} in figure (a); In figure (b), we plot the function $m(y)$ defined in \eqref{eq:m}. Both $b^\star(y)-a^\star(y)=0$ and $m(y)=0$ occur at $y=\hat{y}$.}%
\label{fig:example}%
\end{figure}

\section{Proofs of the main results}
\label{sec:res}

In this section, we shall prove that the solution to the problem \eqref{eq:problem1}, takes the forms presented in Theorems \ref{thm1}, \ref{thm2} and \ref{thm3}.
Our main verification approach is through the Hamilton-Jacobi-Bellman equation. 
Specifically, suppose that we can find a function  
$w:\mc{O}_+ \rightarrow \,(0,\infty)$ in $C^{1,1}(\mc{O}_+) \cap C^{2,1}(\mc{O}_+ \backslash\{(\theta_1,s),\ldots,(\theta_k,s)\})$ (resp., $C^{0,1}(\mc{O}_+)$) if $X$ has paths of unbounded (resp., bounded) variation, for some $\theta_1,\ldots,\theta_k \in\R$, 
such that $w(x,s) \geq U(\et^x)$ and is super-harmonic. That is, $w$ satisfies the variational inequality 
\be \label{verifeq}
\max\big\{ (\mc{L}-r-q\ind_{\{x<s-c\}})w(x,s) \,,\, U(\et^x) - w(x,s) \big\} = 0, 
\quad\forall \,(x,s)\in\mc{O}_+ \backslash\{(\theta_1,s),\ldots,(\theta_k,s),(s,s)\} \,,
\ee
with boundary condition 
\be \label{verifeq2}
\frac{\p}{\p s}\bigg|_{x=s} w(x,s) = 0 \,.
\ee
Using these properties of $w$ together with the It\^o-L\'evy lemma and the compensation formula, 
we know that $V(x,s;c)\equiv w(x,s)$ is the value function of the problem \eqref{eq:problem1} for this tolerance level $c>0$. 

\vspace{1pt}
Before we move on to the individual proofs, we define the mapping 
\be
m(y):=\begin{cases}
z^\star(y)-y, \quad \forall \,y\le \ul{b}, &\text{if }\ol{u}=-\infty;\\
z^\star(y)\ind_{\{y\le\tilde{y}\}}+b^\star(y)\ind_{\{\tilde{y}<y<\hat{y}\}}-y, \quad \forall \, y< \hat{y}, &\text{otherwise},
\end{cases}\label{eq:m}
\ee
where $z^\star(\cdot)$ and $b^\star(\cdot)$ are given by \eqref{z*} and \eqref{a*b*}. From  Theorems \ref{thm00} and \ref{thm0}, we know that function $m(y)$ is strictly decreasing over its domain, with a range equal to $\R_+$. A plot of $m(y)$ is shown in Figure \ref{fig:example}(b).

\subsection{Investors with mild anxiety}
\label{sec:51}

\begin{proof}[Proof of Lemma \ref{lem21}] 
The first claim is already proved in Theorem \ref{thm00}, while the second claim follows straightforwardly by definition \eqref{z_c} and \eqref{z*}--\eqref{eq:zstarg}. 
\end{proof}
\begin{proof}[Proof of Theorem \ref{thm1}] 
\ul{\it Proof of part for $s \geq z_c$.}
We know from Theorem \ref{thm00} under the case of $\ol{u}=-\infty$, that the optimal stopping region $\mc{D}_y$ for problem \eqref{eq:oldproblem} is the half-line 
$[z^\star(y),\infty)$. 
On the other hand, Proposition \ref{prop:compare} asserts that 
\ben 
\text{if} \quad [s,\infty) \,\subseteq \mc{D}_{s-c} \equiv [z^\star(s-c),\infty)\,, 
\quad \text{then} \quad V(x,s;c)=\ol{v}(x;s-c)\,. 
\een 
It is thus natural to consider the critical $s$-value, such that $\mc{D}_{s-c} = [s,\infty)$, namely, 
\be
 \quad z^\star(s-c)=s 
\quad \Leftrightarrow \quad m(s-c)=c\,, \label{y_c}
\ee
where $m(\cdot)$ is defined by \eqref{eq:m}. 
By the construction of $m(\cdot)$ we know that
\ben 
\exists \, ! \,\, y_c\in\,(-\infty,\underline{b} )
\quad \text{that solves} \quad 
m(y_c)=c\,.
\een
In fact, using \eqref{Lg} and \eqref{z*}, one can obtain an explicit expression for $y_c$:
\be\label{eq:zc2}
y_c 
=\frac{1}{1-\rho}\log\bigg(\frac{\Lambda(c)}{\Lambda(c)-1+\rho}\bigg)-c. 
\ee
Therefore, we can define $z_c$ as the $s$-value determined by \eqref{y_c}, which is given by \eqref{z_c}.
Taking into account the definition and expression of $z_c$ in \eqref{z_c}, we observe from the monotonicity of $\mc{D}_y$ in $y$ (cf. Theorem \ref{thm00}) that 
\ben
\text{if} \quad s\ge z_c \text{ (or equivalently, }s-c\ge y_c)\,, \quad \text{then} \quad 
[s,\infty)\,\subseteq[z_c,\infty)\,\equiv \mc{D}_{y_c}\subseteq\mc{D}_{s-c}\,.
\een 
Hence by Proposition \ref{prop:compare}, we know that 
$V(x;s;c)\equiv \ol{v}(x;s-c)$ for any $(x,s) \in \mc{O}_+$ such that $s\ge z_c$, where $\ol{v}$ admits the expression \eqref{v1}. 
The result follows by finally observing that, for all $x\in[z^\star(s-c),s]$, we have
\be 
\frac{\p}{\p s}\bigg|_{x=s} {V}(x,s;c)= \frac{\p}{\p s}\bigg|_{x=s} U(\et^x)= 0, \quad\forall s > z_c 
\quad \Rightarrow \quad 
{V}(\cdot,\cdot;s) \text{ satisfies \eqref{verifeq2} for all } s > z_c  \,.
\nn
\ee 

\vspace{1pt}

\ul{\it Proof of part for $s<z_c$.}
For the remaining case, we prove that it is optimal to wait until the process $(X,\overline{X})$ reaches the point $(z_c, z_c)$, i.e. to sell at $T_{z_c}^+$. 
The value of such strategy, denoted by $\bar V(x,s;c)$, is given by 
\begin{align} \label{U}
\bar V(x,s;c) 
&:= \ex_{x,s}\left[ \exp(-R_{T^+_{z_c}}^c) \, U\big( \exp(X_{T_{z_c}^+})\big) \ind_{\{T_{z_c}^+<\infty\}}\right] 
= \ex_{x,s}\left[\exp(-R_{T^+_s}^c) \bar V(s,s;c) \ind_{\{T^+_s<\infty\}}\right] \\
&= \ex_{x,s}\left[\exp(-A_{T^+_s}^{s-c}) \bar V(s,s;c) \ind_{\{T^+_s<\infty\}}\right]
\, \overset{\eqref{T+b}}{=} \,\dfrac{\mc{I}^{(r,q)}(x-s+c)}{\mc{I}^{(r,q)}(c)} \, \bar V(s,s;c) 
\,, \quad \text{for all } x<s< z_c \,, \nn
\end{align} 
thanks to the strong Markov property of $(X,\overline{X})$.
Using the Neumann boundary condition $\frac{\p}{\p s}|_{x=s} \bar V(x;s;c)=0$, 
as we expect the process $(X,\overline{X})$ to reflect at the diagonal of the state space $\p\mc{O}_+$ until it reaches the point $(z_c, z_c)$, 
we obtain from \eqref{U} that
\be \label{Us<zc}
\bar V(x,s;c) = 
\et^{-\Lambda(c)(z_c-s)} \, \dfrac{\mc{I}^{(r,q)}(x-s+c)}{\mc{I}^{(r,q)}(c)}\, U(\et^{z_c}) 
\,,\quad \forall\, (x,s)\in\mc{O}_+  \text{ s.t. } s< z_c.
\ee
Notice that, the positive function given by the right-hand side of \eqref{eq:thm1} is precisely $\bar{V}(x,s;c)$. 
Then by construction, it is easily seen that the mapping $(x,s)\mapsto \bar{V}(x,s;c)$ is continuous over $\mc{O}_+$, $x\mapsto \bar{V}(x,s;c)$ is $C^1$ over $(-\infty,s)$ for each fixed $s$, and it satisfies the Neumann condition \eqref{verifeq2} for all $s< z_c$. 
In addition, one can show that the Neumann condition also holds at $s=z_c$ by respectively computing the left and the right derivative with respect to $s$. 
Hence, it remains to show that $\bar{V}$ solves the variational inequality \eqref{verifeq} for all $(x,s)\in\mc{O}_+ \backslash\{(s,s)\}$.

We first prove that the inequalities involving the infinitesimal generator hold.
To this end, we fix $x < s\le z_c$ and by using the definition of $\bar V$ from \eqref{U} and similar arguments to Section 4 of \cite{Biffis_Kyprianou_2010}, we know that 
\be
(\mc{L}-(r+q\ind_{\{x<s-c\}})) \bar{V}(x,s;c)=0,\quad\forall \, x<s\le z_c.\label{eq:L1}
\ee

In order to finish the proof, we show in the sequel that the inequalities involving the dominance of $\bar{V}(x,s;c)$ over the intrinsic value $U(\et^x)$ hold as well. 
To that end, for $x\le s\le z_c$, we define
$$R(x,s;c):= \frac{U(\et^x)}{\bar{V}(x,s;c)} \,.$$ 
Taking the partial derivative of $R$ with respect to $x$ yields that 
\be
\frac{\p}{\p x}R(x,s;c)=\frac{\Lambda(x-s+c)}{\bar{V}(x,s;c)} \left(\frac{1}{1-\rho} - \et^{(1-\rho)(s-c)} g(x-s+c)\right),
\; \text{for $g$ and $\Lambda$ defined by (\ref{Lg})}. 
\label{eq:Rpartial}
\ee
Considering that $x \mapsto g(x-s+c)$ is strictly increasing over $(-\infty,s]$ when $\bar u = -\infty$, together with the fact that $z=z_c$ and $y=y_c$ from \eqref{z_c} and \eqref{eq:zc2}, respectively, solve the equation \eqref{eq:g}, we conclude that 
\begin{align*}
\frac{1}{1-\rho} - \et^{(1-\rho)(s-c)} g(x-s+c)
&\geq \frac{1}{1-\rho} - \et^{(1-\rho)(s-c)} g(c) 
= \frac{1}{1-\rho} - \et^{(1-\rho)s} \bigg(\frac{1}{1-\rho}-\frac{1}{\Lambda(c)}\bigg) \\
&\geq \frac{1}{1-\rho} - \et^{(1-\rho) z_c} \bigg(\frac{1}{1-\rho}-\frac{1}{\Lambda(c)}\bigg)
=0,\quad\forall \, x\le s<z_c.
\end{align*}
It follows that $x\mapsto R(x,s;c)$ is strictly increasing over $(-\infty,s]$, for any fixed $s< z_c$. Hence, $R(x,s;c)\le R(s,s;c)$ for all $x\le s<z_c$. Furthermore, we notice from the expression \eqref{Us<zc} of $\bar{V}$ that
\be \label{Rss}
R(s,s;c)=\frac{U(\et^s)}{\et^{-\Lambda(c)(z_c-s)} \, U(\et^{z_c})}.
\ee
Taking the derivative of the above with respect to $s$ and using \eqref{z_c} yields that
\begin{align*}
\frac{\diff}{\diff s} R(s,s;c)
&=\frac{\et^{\Lambda(c)(z_c-s)}\Lambda(c)}{U(\et^{z_c})}\bigg(\frac{1}{1-\rho} - \et^{(1-\rho) s} \bigg(\frac{1}{1-\rho}-\frac{1}{\Lambda(c)}\bigg)\bigg) \\
&>\frac{\et^{\Lambda(c)(z_c-s)}\Lambda(c)}{U(\et^{z_c})}\bigg(\frac{1}{1-\rho} - \et^{(1-\rho) z_c} \bigg(\frac{1}{1-\rho}-\frac{1}{\Lambda(c)}\bigg)\bigg)=0,
\quad\forall \, s<z_c.\nn
\end{align*}
In view of $R(z_c, z_c; c) = 1$, which follows straightforwardly from \eqref{Rss} for $s=z_c$, we thus obtain 
\be
\frac{U(\et^x)}{\bar{V}(x,s;c)}=R(x,s;c)\le R(s,s;c)<R(z_c,z_c;c)=1,\quad\forall \, x\le s<z_c,\nn
\ee
which implies that 
\be
\bar{V}(x,s;c)\ge U(\et^x) ,\quad\forall \, x\le s\le z_c.\label{eq:D2}
\ee

Combining \eqref{eq:L1} and \eqref{eq:D2}, we conclude that the variational inequality \eqref{verifeq} is satisfied and consequently we prove the optimality of $\bar{V}(x,s;c)$ for all $s\le z_c$.
\end{proof}

\subsection{Investors with severe anxiety}
\label{highc}

\begin{proof}[Proof of Lemma \ref{lem23}]
Part $(i)$, the detailed constructions of $\bar{y}$, $\tilde{y}$ and $\hat{y}$ and the inequalities in part $(ii)$, as well as part $(iv)$ are already proved in Theorem \ref{thm0}. Part $(iii)$ follows straightforwardly from definitions \eqref{z_c}, \eqref{tily} and \eqref{z*}--\eqref{eq:zstarg}. 
\end{proof}

\begin{proof}[Proof of Theorem \ref{thm2}]
\ul{\it Construction of $\tilde{c}$ from \eqref{tilc}.}
Recall from Theorem \ref{thm0}, under $\ol{u}\ge0$, that the optimal strategy for the simplified problem \eqref{eq:oldproblem} changes qualitatively when the value $y$ is less or greater than the key level $\tilde{y}$. 
Naturally, one may consider the tolerance level $\tilde{c}$ associated with $\tilde{y}$ through the definition
\be
\tilde{c}:=m(\tilde{y}), \quad \text{where $m(\cdot)$ is defined by (\ref{eq:m})}, 
\label{tilc0}
\ee 
which is equivalent to the definition in \eqref{tilc}.
By the construction of $m(\cdot)$ we know that $\tilde{c}$ is uniquely defined and positive. 

\vspace{1pt}

\ul{\it Proof of part for $s \geq z_c$.}
Following the same reasoning as in Section \ref{sec:51}, we consider the 
critical $s$-value satisfying the properties in \eqref{y_c} and thus solving 
$m(s-c)=c$. By the construction of $m(\cdot)$,
\be \label{yc2}
\exists \, ! \,\,  y_c \in\,(-\infty, \hat{y}] \text{ that solves } m(y)=c \,. 
\ee
However, in this case, we can further conclude from the monotonicity of 
$m(\cdot)$, the assumption $c\ge\tilde{c}$,  
the definition \eqref{tilc0} of $\tilde{c}$ and the fact that $\tilde{y}<\hat{y}$ that
\ben
m(y_c)=c\ge\tilde{c}=m(\tilde{y})
\quad \Leftrightarrow \quad y_c \leq \tilde{y}.
\een 
Thus, by Theorem \ref{thm0}.(a), we can define the critical $s$-value as $z_c :=y_c+c$ with $y_c$ given explicitly by \eqref{eq:zc2}, yielding that $z_c$ is indeed given by \eqref{z_c} in this case as well. 
Moreover, recall from Theorem \ref{thm0}.(a) that, the optimal stopping 
region $\mc{D}_{y_c}$ for problem \eqref{eq:oldproblem} is given by
\ben 
\mc{D}_{y_c} = \begin{cases} 
[z_c,\infty)\,, & \text{ if } y_c<\tilde{y} \quad \Leftrightarrow \quad c>\tilde{c}, \\
\{\underline{b}\}\U[z_c,\infty) \,, & \text{ if } y_c=\tilde{y} \quad \Leftrightarrow \quad c=\tilde{c} 
\quad (\text{where } z_c=z_{\tilde{c}}=z^\star(\tilde{y})).
\end{cases} 
\een
In both cases, using the monotonicity of $\mc{D}_y$ in $y$ (cf. Theorem \ref{thm0}), it is straightforward to see that 
\ben
\text{if} \quad s\ge z_c \,, \quad \text{then} \quad 
[s,\infty)\,\subseteq[z_c,\infty)\,\subseteq\mc{D}_{y_c}\subseteq\mc{D}_{s-c}\,.
\een
Hence, by Proposition \ref{prop:compare} we know that $V(x,s;c)\equiv \ol{v}(x;s-c)$ for any $(x,s)\in\mc{O}_+$ such that $s\ge z_c$, where $\ol{v}$ admits the expressions in Theorem \ref{thm0} for $y=s-c$. 
The result follows by the straightforward observation that the Neumann condition \eqref{verifeq2} holds for all $s \geq z_c$ (see e.g. Section \ref{sec:51} for similar arguments). 

\vspace{1pt}

\ul{\it Proof of part for $s<z_c$.}
For the remaining case, we prove that it is optimal to wait until the process $(X,\overline{X})$ reaches the point $(z_c,z_c)$. 
Notice that the positive function given by the right-hand side of \eqref{eq:thm2} identifies with $\bar{V}(x,s;c)$ from \eqref{U} (see also \eqref{Us<zc}), and  the proof follows similar arguments as the ones in the proof for Theorem \ref{thm1}.
Therefore, the only non-trivial task, before establishing the optimality of $\bar{V}$, is to prove the dominance of $\bar{V}$ over the intrinsic value $U(\et^x)$.
We examine below the ratio $R(x,s;c)=U(\et^x)/\bar{V}(x,s;c)$ for $s< z_c$. 

When $\bar u > 0$, we know that $g(\cdot)$ is non-monotone anymore, hence we cannot draw any conclusions from the partial derivative \eqref{eq:Rpartial} of $R(x,s;c)$. 
To this end, we employ a different technique than in the proof of Theorem \ref{thm1}.
We begin by noticing that for $s< z_c$, we have by \eqref{z_c} and \eqref{eq:zc2} that $s-c< z_c-c = y_c$, thus 
\be
m(s-c)\ge m(y_c)=c\ge\tilde{c}=m(\tilde{y}) \quad \Leftrightarrow \quad  
s-c\le y_c\le \tilde{y}. \label{eq:710}
\ee
Therefore, in light of Theorem \ref{thm0}.(a), we have 
\ben
z^\star(s-c)\ge z^\star(y_c) = z_c\ge z^\star(\tilde{y})>\underline{b}. 
\een
Then, we take the partial derivative of the expression  \eqref{Us<zc} of $\bar{V}(x,s;c)$ with respect to $s$, for $x \leq s < z_c$, 
\ben
\frac{\partial}{\p s} \bar{V}(x,s;c) 
= \left.\begin{cases} 
\bar{V}(x,s;c)  \left( \Lambda(c) - \Phi(r+q) \right) \,, & \text{for } x\in\,(-\infty, s-c), \\
\bar{V}(x,s;c)  \left( \Lambda(c) - \Lambda(x-s+c) \right) \,, & \text{for } x\in[s-c, s),
\end{cases} \right\} >0, \quad \forall \; x<s\,,
\een
due to the fact that $\Lambda(\cdot)$ is strictly decreasing on $\R_+$ (cf. \cite[Lemma 4.2]{OmegaRZ}). 
We therefore have 
\ben
\bar{V}(x,s;c) \geq \lim_{s \uparrow z_c} \bar{V}(x,s;c) 
= \dfrac{\mc{I}^{(r,q)}(x-z_c+c)}{\mc{I}^{(r,q)}(c)}\, U(\et^{z_c}) 
\overset{\eqref{z_c},\eqref{eq:zc2}}{=} \dfrac{\mc{I}^{(r,q)}(x-y_c)}{\mc{I}^{(r,q)}(z_c - y_c)}\, U(\et^{z_c}) .
\een
In light of \eqref{eq:710} and Theorem \ref{thm0}.(a) for $y=s-c \leq y_c \leq \tilde{y}$, we know from the above and \eqref{v1} that 
\ben
\bar{V}(x,s;c) 
\geq \dfrac{\mc{I}^{(r,q)}(x-y_c)}{\mc{I}^{(r,q)}(z_c - y_c)} \, U(\et^{z_c}) 
= \ol{v}(x;y_c) \geq U(\et^{x})
, \quad \forall \,(x,s)\in\mc{O}_+  \text{ s.t. } s< z_c \,.
\een

Finally, we comment that the only possibility for $\bar{V}(x,s;c)=U(\et^{x} )$ is realised when the two inequalities above are equalities. In particular, this can only occur when either $x=s \to z_c$, or when $s \to z_c=z^\star(\tilde{y})$ and $x=\ul{b}$ (see Theorem \ref{thm0}.(a) for $y=s-c = y_c = \tilde{y}$). 
\end{proof}

\begin{proof}[Proof of Lemma \ref{lem:as}]
The existence and the uniqueness of the solution follow from classical results for nonlinear ODEs. To show the other properties, let $\theta(s)=s-a(s)$, and let $F(s,a)$ be the slope field of \eqref{eq:ODE}, i.e. 
\[
F(s,a)=\dfrac{qW^{(r)}(c)}{W^{(r,q)}(s,a;s-c)} \, \dfrac{(1-\rho) \,\big(\underline{v}(s-c) + \Delta(s-c,a(s);s-c)\big)}{(r+q-\psi(1-\rho)) \et^{(1-\rho)\, a(s)} - (r+q) - (1-\rho) \, f(a(s))}
\]
Then $\theta(\cdot)$ satisfies the ordinary differential equation
\ben
\theta'(s)=1-F(s,s-\theta(s)). 
\een
By equation (\ref{newW}), we see that for $a=y=s-c$, which occurs whenever $\theta(s)=c$, for some $s\ge\underline{b}+c$, we get 
\be
F(s,s-c) =
q \, \dfrac{(1-\rho) \,U(\et^{s-c})}{(r+q-\psi(1-\rho)) \et^{(1-\rho)\,(s-c)} - (r+q) - (1-\rho) \, f(s-c)} \,,
\nn
\ee
since $\Delta(a,a;a)=0$ by its definition \eqref{D}.
Hence, for such values of $s$, we have 
\begin{align}
1-F(s,s-c) = 
\dfrac{(r-\psi(1-\rho)) \et^{(1-\rho)\,(s-c)} - r - (1-\rho) \, f(s-c)}{(r+q-\psi(1-\rho)) \et^{(1-\rho)\,(s-c)} - (r+q) - (1-\rho) \, f(s-c)}
\label{thetadecr}
\end{align}
However, from \eqref{z_cn} we have $b^\star(y_c) -y_c = c$, thus we get in this case that $\underline{b}\le s-c\le b^\star(y_c)-c = y_c<\hat{y}$.
Therefore, the inequality $\chi(x)>0$ for all $x<\hat{y}$ (see \eqref{Lv_1})
yields that  
\be
(r-\psi(1-\rho)) \et^{(1-\rho)\,(s-c)} - r - (1-\rho) \, f(s-c) < 0 \,.
\label{eq:719}
\ee
On the other hand, following the proof of \cite[Lemma 4.9]{OmegaRZ} one can show that $f(\cdot)$ is decreasing and continuous over $[\underline{b},\infty)$, with limit
\be
f(\ul{b})=\frac{\et^{(1-\rho)\ul{b}}}{1-\rho}\left(r+q-\psi(1-\rho)\right)-\frac{r+q}{1-\rho}-\half\sigma^2\Phi(r+q).\label{eq:fff}
\ee
Therefore,
\begin{align} \label{ineq1}
&(r+q-\psi(1-\rho)) \et^{(1-\rho)\,(s-c)} - (r+q) - (1-\rho) \, f(s-c) \\
&> (r+q-\psi(1-\rho)) \et^{(1-\rho)\,\ul{b}} - (r+q) - (1-\rho) \, f(\ul{b}) 
= \half(1-\rho)\sigma^2\Phi(r+q) \geq 0 \,, 
\quad \text{for all } \, s>\underline{b}+c , 
\nn
\end{align}
where the last equality follows from \eqref{eq:fff}. 
From \eqref{thetadecr}, \eqref{eq:719} and \eqref{ineq1} we know that 
\ben
\text{for all $s\leq z_c$ such that $\theta(s)=c$, we have $\theta'(s)=1-F(s,s-\theta(s))<0$.}
\een 
Thus, if there exists $s_0\in[ \underline{b}+c, b^\star(y_c))$ such that $\theta(s_0)=c$, then $\theta(s)<c$ for all $s\in(s_0,b^\star(y_c)]$. 
However, notice that 
\ben
\theta(z_c) = \theta(b^\star(y_c)) = b^\star(y_c)-a(b^\star(y_c))=b^\star(y_c)-a^\star(y_c)>b^\star(y_c)-y_c=c \,,
\een
which is a contradiction. 
Therefore, such an $s_0$ cannot exist and the only possibility is $\theta(s)>c$, i.e. $a(s)<s-c$, as long as $s\in[\underline{b}+c,b^\star(y_c))$ and $a(s)$ is well-defined. 

In the final part of the proof, we use the aforementioned property of $a(s)<s-c$, in order to examine the behaviour of $a(\cdot)$ when $a(s)>\underline{b}$, which also implies that $s-c>\underline{b}$.  
Then, for any $s$ fixed and all $x\in\,(a(s),s-c]$, 
\ben 
\ul{v}(x) + \Delta(x,a(s);s-c) = U(\et^{x}) + \Delta(x,a(s);s-c) > U(\et^{x}) > U(\et^{a(s)}) > U(\et^{\,\underline{b}}) > 0 .
\een
Combining all the above with the probabilistic meaning of $W^{(r,q)}(s,a(s);s-c)$ in Lemma \ref{lem:bigW}, we conclude that $a'(s)=F(s,a(s))>0$ for all $s>\ul{b}+c$, as long as $a(s)>\underline{b}$. 
We now define
\[
s_c:=\sup\{s<b^\star(y_c)\,:\, a(s)\le\underline{b}\}\,. 
\]
Notice that $a(b^\star(y_c))=a^\star(y_c)>a^\star(\tilde{y})=\underline{b}=a(s_c)$. Thus, from the monotonicity of $a(\cdot)$ we know that $s_c< b^\star(y_c)$.
We complete the proof by showing that $s_c>\underline{b}+c$. Arguing by contradiction, we suppose that $s_c \leq\underline{b}+c$, which implies that $a(s)$ is well-defined at $s=\underline{b}+c$ and $a(\underline{b}+c)\ge\underline{b}$. However, it follows from the established fact $a(s)<s-c$, that $a(\underline{b}+c)<\underline{b}+c-c=\underline{b}$, which is indeed a contradiction. 
\end{proof}

\begin{proof}[Proof of Theorem \ref{thm3}]
Let $\tilde{c}$ be the level defined by \eqref{tilc}.

\ul{\it Proof of part for $s \geq b^\star(y_c)$.}
By the construction of $m(\cdot)$ and the fact that $\tilde{y}<\hat{y}$, we conclude that the critical value $y_c$ from \eqref{yc2}, satisfies 
\ben
m(y_c)=c<\tilde{c}=m(\tilde{y})
\quad \Leftrightarrow \quad y_c > \tilde{y}
\quad \Leftrightarrow \quad y_c \in\,(\tilde{y},\hat{y}). 
\een 
Using the definition \eqref{eq:m} of $m(\cdot)$, we know that the unique value $y_c$ satisfies \eqref{z_cn}.   
Recall from Theorem \ref{thm0}.(b) that, for $y_c \in(\tilde{y},\hat{y})$, the optimal stopping region $\mc{D}_{y_c}$ for problem \eqref{eq:oldproblem} is given by
\be
\mc{D}_{y_c} = 
[\underline{b}, a^\star(y_c)] \cup [b^\star(y_c),\infty)\,. \label{D^tily}
\ee
By the monotonicity of $\mc{D}_y$ in $y$ (cf. Theorem \ref{thm0}), we have that,
\ben
\text{if} \quad s\ge b^\star(y_c) \,, \quad \text{then} \quad 
[s,\infty)\,\subseteq[b^\star(y_c),\infty)\,\subseteq\mc{D}_{y_c}\subseteq\mc{D}_{s-c}\,.
\een
Hence, by Proposition \ref{prop:compare} we know that $V(x,s;c)\equiv \ol{v}(x;s-c)$ 
for any $(x,s)\in\mc{O}_+$ such that $s\ge b^\star(y_c)$, where $\ol{v}$ admits the expressions in Theorem \ref{thm0} for $y=s-c$. 
The result follows by the straightforward observation that the Neumann condition \eqref{verifeq2} holds for all $s \geq b^\star(y_c)$ (see e.g. Section \ref{sec:51} for similar arguments). 

\vspace{1pt}

\ul{\it Proof of part for $s<b^\star(y_c)$.}
We now focus on the remaining case that $s<b^\star(y_c)$.  
As in Theorems \ref{thm1} and \ref{thm2}, we prove that the take-profit selling strategy $T_{b^\star(y_c)}^+$ still constitutes part of the optimal strategy if $s<b^\star(y_c)$. 
However, we shall prove that the optimal selling strategy is also partially given by a trailing stop type order, which requires selling the asset if its log price drops below some moving threshold $a(s)$ that depends on the running best performance $s$ of the asset log price. 

To be more precise, 
we define the value $\tilde{V}$ of the two-sided exit strategy from an interval $(a, b^\star(y_c))$ by the asset log price process $X$, where $a=a(s)$ is such that $a < s-c$ for a fixed $s\le b^\star(y_c)$. 
In view of the equivalent expression in \eqref{eq:problem2} of our original problem  \eqref{eq:problem1}, the value of this strategy is given by
\begin{equation} \label{tilUdef} 
\tilde{V}(x,s;c,a) := \ex_{x,s}\left[ \exp(-R_{T_a^-\wedge T_{b^\star(y_c)}^+}^c) \, 
\ul{v}(X_{T_a^-\wedge T_{b^\star(y_c)}^+}) \right] .
\end{equation}
We now derive a useful renewal equation satisfied by $\tilde{V}$. 
Since the second component of the process $(X,\overline{X})$ is constant up to time $T_s^+$, we know that $R_{t}^{c} = A_{t}^{s-c}$ for all $t \leq T_s^+$ ($\pr_{x,s}$-a.s.), so we can rewrite \eqref{tilUdef} in the form
\begin{align*} 
\tilde{V}(x,s;c,a) &= 
\ex_{x,s}\left[ \exp(-A_{T_a^-}^{s-c}) \, \ul{v}(X_{T_a^-}) \, \ind_{\{  T_{a}^{-}<T_{s}^{+}\}}\right] + 
\ex_{x,s}\left[ \exp(-A_{T_{s}^+}^{s-c}) \, \tilde{V}(s,s;c,a) \, \ind_{\{  T_{s}^{+}<T_{a}^{-}\}} \right] \,.
\end{align*}
Hence, using  Lemma \ref{lem:I} and \cite[Proposition 4.12]{OmegaRZ}, we conclude that
\begin{align} \label{tilU}
\tilde{V}(x,s;c,a) &= 
\ul{v}(x) + \Delta(x,a;s-c) + \frac{W^{(r,q)}(x,a;s-c)}{W^{(r,q)}(s,a;s-c)} \, \Big( \tilde{V}(s,s;c,a) - \Delta(s,a;s-c) - \ul{v}(s) \Big) 
\,, 
\end{align}
where the non-negative function $\Delta(x,a;y)$ is defined in \eqref{D}. 

\vspace{1pt}

\ul{\it Optimality of the trailing stop $a(s)$.}
We now choose an appropriate threshold $a$ that maximises the value of the two-sided selling strategy $\tilde{V}(x,s;c,a)$ for each fixed $x\le s\le b^\star(y_c)$. 
In order to make it a reasonable candidate for the value function (see also beginning of Section 5), we invoke the principle of continuous (resp., smooth) fit for function $\tilde{V}$ when $X$ has bounded (resp., unbounded) variation. Specifically:

{\it Continuous fit}: 
If the asset log price process $X$ has paths of bounded variation, we start from the point $(x,s)=(a,s) \equiv (a(s),s)$, then by imposing continuous fit at $x=a$, i.e.  $\tilde{V}(a,s;c,a) = \ul{v}(a)$ in \eqref{tilU}, and recalling from \eqref{D} that, $\Delta(x,a;y)=0$ for any $x \leq a\le y$, we obtain 
\begin{align*}
\frac{W^{(r+q)}(0)}{W^{(r,q)}(s,a;s-c)} \, \Big( \tilde{V}(s,s;c,a) - \Delta(s,a;s-c) - \ul{v}(s) \Big) =0
\,. 
\end{align*}
where we used 
\eqref{eq:bigW1} for simplification. 
Because $W^{(r+q)}(0)>0$ (cf. Lemma \ref{lem W}), the above equation is equivalent to
\be
\tilde{V}(s,s;c,a) = \Delta(s,a;s-c) + \ul{v}(s) , \quad \text{for all } s<b^\star(y_c).\label{eq:Ufit}
\ee

{\it Smooth fit}:
If the asset log price process $X$ has paths of unbounded variation, then the fact that $W^{(r+q)}(0)=0$ (cf. Lemma \ref{lem W}) guarantees the continuous fit at $x=a$. However, taking the partial derivative of \eqref{tilU} with respect to $x$, we have
\begin{align*} 
\frac{\p}{\p x} \tilde{V}(a,s;c,a) &= 
\ul{v}'(a) + \frac{W^{(r+q)\prime}(x-a)}{W^{(r,q)}(s,a;s-c)} \, \Big( \tilde{V}(s,s;c,a) - \Delta(s,a;s-c) - \ul{v}(s) \Big) 
\,,\quad\forall  \;a<x<s-c.
\end{align*}
Then, Lemma \ref{lem W} asserts that $W^{(r+q)\prime}(0+)>0$, so by imposing smooth fit at $x=a$, i.e. $\frac{\p}{\p x} \tilde{V}(a,s;c,a) = \ul{v}'(a)$, we again obtain \eqref{eq:Ufit}.

\vspace{0.2cm} 

In both cases, we obtain from \eqref{tilU} and \eqref{eq:Ufit} that the threshold $a=a(s)$ (if it exists) and function $\tilde{V}(\cdot,s;c,a(s))$ satisfy
\begin{multline} \label{tilUfin}
\tilde{V}(x,s;c,a(s)) = \ul{v}(x) + \Delta(x,a(s);s-c) \\ \equiv \ul{v}(x) + \int_{a(s)}^{s-c} W^{(r,q)}(x,w;s-c)\cdot[q\,\underline{v}(w)-\chi(w)]\diff w - \int_{s-c}^{x\vee (s-c)} W^{(r)}(x-w)\cdot\chi(w)\diff w. 
\end{multline}
Then by the above construction, it is easily seen that the mapping $(x,s)\mapsto \tilde{V}(x,s;c,a(s))$ is continuous over $\mc{O}_+$, and $x\mapsto \tilde{V}(x,s;c,a(s))$ is $C^1$ over $(-\infty,s)$ for each fixed $s$ in the unbounded variation case. 
Also, by straightforward calculations, the function $\tilde{V}(\cdot,\cdot;c,a(s))$ from \eqref{tilUfin} (defined in \eqref{tilUdef}) satisfies the Neumann condition \eqref{verifeq2} if and only if $a(\cdot)$ solves the ODE \eqref{eq:ODE}, where the boundary condition follows from the structure of the optimal selling region \eqref{D^tily} when $s=b^\star(y_c)$.
Given that $\{(x,s) \in \mc{O}_+ : x<\ul{b}\}$ is always part of the continuation region $ \mc{O}_+ \setminus \mc{S}_c$ of problem \eqref{eq:problem1} (cf. Proposition  \ref{prop:compare1}), the candidate optimal threshold $a(s)$ must satisfy $a(s) \geq \ul{b}$, which is the final condition imposed in Lemma \ref{lem:as}. 
Notice that, Lemma \ref{lem:as} then implies that the function $a(\cdot)$ is strictly increasing and there exists a unique value $s_c < b^\star(y_c)$, such that $a(s_c) = \ul{b}$.
The above function $\tilde{V}$ is precisely the positive function from the right-hand side of \eqref{eq:thm3} for $s_c \leq s < b^\star(y_c)$.

\ul{\it Proof of sub-part for $s<s_c$.}
In light of the above, the selling strategy in \eqref{tilUdef} (see also  \eqref{tilUfin}) is a candidate only for $s_c \leq s < b^\star(y_c)$, while for all $s<s_c$, it is optimal to simply wait until the asset log price increases to $s_c$ and then follow the optimal strategy ${V}(s_c,s_c;c)$. 
Thus, the expression of the value function in \eqref{eq:thm3} for all $s<s_c$ follows from similar arguments to the ones leading to \eqref{U}--\eqref{Us<zc} and their optimality in the proof of Theorem \ref{thm2}, as soon as we prove the optimality of $\tilde{V}(x,s;c,a(s))$ for all $s_c \leq s < b^\star(y_c)$.

\ul{\it Proof of sub-part for $s_c \leq s < b^\star(y_c)$.}
Therefore, in order to complete the proof, it suffices to show that $\tilde{V}$ satisfies the variational inequality \eqref{verifeq} for all $(x,s)\in\mc{O}_+ \backslash\{(s,s)\}$ such that $s_c \leq s < b^\star(y_c)$.

Firstly, it is seen by construction and the definition \eqref{D} of $\Delta$, that the inequalities involving the infinitesimal generator can be straightforwardly verified, since 
\begin{alignat*}{2}
(\mc{L}-(r+q\ind_{\{x<s-c\}}))\tilde{V}(x,s;c,a(s)) &= 0,  \quad\,&\forall\, a(s)<x<s, \;s_c\le s<b^\star(y_c),\\
(\mc{L}-(r+q\ind_{\{x<s-c\}}))\tilde{V}(x,s;c,a(s)) &= (\mc{L}-(r+q))\underline{v}(x)\le 0,  \quad\,&\forall\, x<a(s), \;s_c\le s<b^\star(y_c).
\end{alignat*}
Hence, it remains to show in the sequel that the inequalities involving the dominance of $\tilde{V}(x,s;c,a(s))$ over the intrinsic value $U(\et^x)$ hold as well, namely 
\ben
\tilde{V}(x,s;c,a(s))> U(\et^x), \quad\forall \;
a(s)<x\le s, \quad  s_c\le s< b^\star(y_c). 
\een
We know from Lemma \ref{lem:as} that 
$$
\tilde{V}(x,s;c,a(s)) = \ul{v}(x) + \Delta(x,a(s);s-c) > U(\et^x), \quad \text{for all } \, x\in(a(s),s-c]\subsetneq[\underline{b},\infty) .
$$ 
We therefore focus on any fixed $x\in(s-c,s]$, for which we have $\frac{\p}{\p s} \tilde{V}(x,s;c,a(s))= \frac{\p}{\p s} \Delta(x,a(s);s-c)$, where
\begin{align*}
\frac{\p}{\p s} \Delta(x,a(s);s-c) = a'(s) \, W^{(r,q)}(x,a(s);s-c) \, (\mc{L}-r-q)\underline{v}(a(s)) + q W^{(r)}(x-s+c) \, \tilde{V}(s-c,s;c,a(s))\,. 
\end{align*}
Using the expression of $\tilde{V}(s-c,s;c,a(s))$ from \eqref{tilUfin} and the ODE \eqref{eq:ODE} solved by $a(\cdot)$, we get 
\ben
\tilde{V}(s-c,s;c,a(s)) = - \frac{a'(s) \, W^{(r,q)}(s,a(s);s-c)}{qW^{(r)}(c)} \, (\mc{L}-r-q)\underline{v}(a(s)) \,. 
\een
Combining all of the above, we obtain that, for all $x\in\,(s-c,s]$ and $s\in(s_c, b^\star(y_c))$,  
\begin{align*}
&\frac{\p}{\p s} \tilde{V}(x,s;c,a(s))\nn\\
&=a'(s) \, W^{(r,q)}(s,a(s);s-c)\bigg(\frac{W^{(r,q)}(x,a(s);s-c)}{W^{(r,q)}(s,a(s);s-c)}-\frac{W^{(r)}(x-s+c)}{W^{(r)}(c)}\bigg) (\mc{L}-r-q)\underline{v}(a(s))\nn\\
&=a'(s) \, W^{(r,q)}(s,a(s);s-c)\bigg(
\ex_x [\exp(-A_{T_{s}^+}^{s-c}) \, \ind_{\{T_{s}^+<T_{a(s)}^-\}}] 
- \ex_x [\exp(-rT_s^+) \, \ind_{\{T_s^+<T_{s-c}^-\}}] \bigg) 
(\mc{L}-r-q)\underline{v}(a(s))\nn\\
&=a'(s) \, W^{(r,q)}(s,a(s);s-c) \, \ex_x[\exp(-A_{T_{s}^+}^{s-c}) \, \ind_{\{T_{s-c}^-<T_{s}^+<T_{a(s)}^-\}}] \, (\mc{L}-r-q)\underline{v}(a(s))<0,
\end{align*}
where we used \eqref{up} and \eqref{uphit} in the second identity, while the last inequality is due to the facts that 
\[(\mc{L}-r-q)\underline{v}(x)=\chi(x)-q\ul{v}(x),\quad\forall x\ge\ul{b}\]
is strictly decreasing and that (see \eqref{eq:fff})
$\chi(\ul{b})-q\ul{v}(\ul{b})=-\half\sigma^2\Phi(r+q)\le 0$.
We therefore see that 
\be \label{tilUmon}
\text{the mapping $s\mapsto \tilde{V}(x,s;c,a(s))$ is strictly decreasing on $[s_c, b^\star(y_c))$ for any fixed $x\in\,(s-c,s]$.} 
\ee
We then argue by contradiction, assuming that there exists a pair $(x_0,s_0)$ for $x_0 \in(s_0-c, s_0]$ and $s_0\in[s_c,b^\star(y_c))$, such that $\tilde{V}(x_0,s_0;c,a(s_0)) \le U(\et^{x_0})$. 
We arrive to a contradiction in both scenarios of $x_0 \in\,(y_c,s_0]$ and $x_0 \in \,(s_0-c, y_c]$. In particular, on one hand,
if $y_c<x_0\le s_0< b^\star(y_c)$, then \eqref{tilUmon} yields that 
\[U(\et^{x_0}) \ge \tilde{V}(x_0,s_0;c,a(s_0))
>\tilde{V}(x_0,b^\star(y_c);c,a(b^\star(y_c)) 
\overset{\eqref{eq:ODE}}{=}\ul{v}(x_0) + \Delta(x_0,a^\star(y_c); y_c).\]
However, combining the above with Theorem \ref{thm0}.(b) for $y=y_c>\tilde{y}$, we get the contradiction  
\[
U(\et^{x_0}) > \ul{v}(x_0) + \Delta(x_0,a^\star(y_c); y_c) \overset{\eqref{vflast}}{=} \ol{v}(x_0;y_c) \ge U(\et^{x_0}).
\]
On the other hand, 
if $s_0-c<x_0 \leq y_c$, then we define $s_1 := x_0+c \in\,(s_0, b^\star(y_c)]$ and use again \eqref{tilUmon} to get the contradiction 
\[
U(\et^{x_0}) \ge \tilde{V}(x_0,s_0;c,a(s_0)) >  \tilde{V}(x_0,s_1;c,a(s_1)) = \ul{v}(x_0) + \Delta(x_0,a(s_1);x_0) \geq U(\et^{x_0}) ,
\]
where the last inequality follows from Lemma \ref{lem:as}. 
In summary we conclude that 
\ben
\tilde{V}(x,s;c,a(s))\equiv \ul{v}(x) + \Delta(x,a(s);s-c)> U(\et^x) , \quad\forall \, (x,s)\in\{(x,s)\in\mc{O}_+\,:\, a(s)< x, \, s_c\le s<b^\star(y_c)\} \,,
\een
which completes the proof.\end{proof}

\appendix
\section{Preliminaries on scale functions}\label{app:pre}

A well-known fluctuation identity of spectrally negative L\'{e}vy
processes (see e.g. \cite[Theorem 8.1]{Kyprianou2006}) is given, 
for $r\geq0$ and $x\in[a, b]$, by 
\begin{align}
\mathbb{E}_{x}[  \et^{-rT_{b}^{+}} \ind_{\{  T_{b}^{+}<T_{a}^{-}\}
}]  =\frac{W^{(r)}(x-a)}{W^{(r)}(b-a)} \,. \label{up} 
\end{align}
The following result from \cite[Theorem 2]{OccupationInterval} provides a generalization of the above case with deterministic discounting $r$ in \eqref{up} to the case with state-dependent discount rate $r+q\ind_{\{X_t<y\}}$, for some fixed $y\in \R$. 
\begin{lem}\label{lem:bigW}
For any $r\ge 0, q>0$, and $x\le b$ with $a\le y\le b$, we have
\begin{align}
\ex_x[\exp(-A_{T_b^+}^y)\ind_{\{T_b^+<T_a^-\}}] 
&=\frac{W^{(r,q)}(x,a;y)}{W^{(r,q)}(b,a;y)}, \label{uphit}
\end{align}
where $A^y$ is given by \eqref{Aold} and we define the non-negative function (see \cite[(6)-(7)]{OccupationInterval})
\begin{align}
W^{(r,q)}(x,a;y)
&:=W^{(r+q)}(x-a)-q\int_{y}^{x\vee y}W^{(r)}(x-z)W^{(r+q)}(z-a)\diff z \label{eq:bigW1} \\
&=W^{(r)}(x-a)+q\int_{a}^{y}W^{(r)}(x-z)W^{(r+q)}(z-a)\,\diff z\,. \label{newW}
\end{align}
\end{lem}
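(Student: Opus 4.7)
The strategy is a Feynman--Kac argument: I will show that $\phi(x) := W^{(r,q)}(x,a;y)$ is, up to normalisation by $\phi(b)$, the solution of the boundary-value problem satisfied by
\[u(x) := \ex_x[\exp(-A_{T_b^+}^y)\ind_{\{T_b^+<T_a^-\}}],\]
then extract \eqref{uphit} by applying optional stopping to $M_t := \exp(-A_t^y)\phi(X_t)$ up to $\tau := T_b^+ \wedge T_a^-$. A preliminary step is to check that (\ref{eq:bigW1}) and (\ref{newW}) define the same function, which reduces to the classical resolvent identity
\[W^{(r+q)}(u) = W^{(r)}(u) + q\int_0^u W^{(r)}(u-z)W^{(r+q)}(z)\,\diff z, \quad u\ge0,\]
itself an immediate consequence of $\frac{1}{\psi(\beta)-(r+q)}-\frac{1}{\psi(\beta)-r} = \frac{q}{(\psi(\beta)-r)(\psi(\beta)-(r+q))}$ and Laplace inversion of a convolution.

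Next, I would verify that $\phi$ vanishes on $(-\infty,a]$ (both scale functions do) and satisfies the integro-differential equation
\[(\mathcal{L}-r-q\ind_{\{x<y\}})\phi(x) = 0, \quad x \in (a,\infty)\setminus\{y\}.\]
For $x\in(a,y]$, representation (\ref{eq:bigW1}) collapses to $\phi(x) = W^{(r+q)}(x-a)$, and the equation reduces to the defining property $(\mathcal{L}-(r+q))W^{(r+q)}(\cdot-a)=0$ on $(a,\infty)$. For $x>y$, I would use (\ref{newW}) together with Fubini to pass $\mathcal{L}-r$ through the $z$-convolution, and observe that $(\mathcal{L}-r)W^{(r)}(\cdot-z)(x) = 0$ whenever $x>z$. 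Applying the It\^o--L\'evy formula to $M_t$ then produces a drift term equal to $e^{-A_t^y}(\mathcal{L}-r-q\ind_{\{X_t<y\}})\phi(X_t)\,\diff t$, which vanishes, so $M^\tau$ is a bounded local, hence true, martingale. By spectral negativity of $X$, on $\{T_a^- < T_b^+\}$ we have $X_\tau \le a$ and $\phi(X_\tau)=0$, while $\phi(X_\tau)=\phi(b)$ on the complementary event; optional stopping yields
\[\phi(x) = \phi(b)\, \ex_x[\exp(-A_{T_b^+}^y)\ind_{\{T_b^+<T_a^-\}}],\]
which is exactly \eqref{uphit}.

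The delicate point, and the main obstacle, is the behaviour at the kink $x=y$ where the discount coefficient $r+q\ind_{\{x<y\}}$ is discontinuous. First, the exchange of $\mathcal{L}$ and the $z$-integral in (\ref{newW}) requires the convention $W^{(r)} \equiv 0$ on $(-\infty,0)$ to absorb downward jumps that cross the convolution variable. Second, in the bounded-variation case one has $W^{(r)}(0)>0$, so $\phi$ fails to be $C^1$ at $y$: direct differentiation of (\ref{eq:bigW1}) and (\ref{newW}) shows that $\phi'(y+)-\phi'(y-) = -qW^{(r)}(0)W^{(r+q)}(y-a)$. The It\^o--Meyer version of the change-of-variables formula still applies, and one must argue that this derivative jump contributes no extra local-time term: this follows because $X$ spends Lebesgue-zero time at the deterministic level $y$, so the singularity integrates trivially against $\diff t$ in the drift decomposition. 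Once these technicalities are handled, the optional-stopping step is routine.
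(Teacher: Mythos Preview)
The paper does not prove this lemma at all: it is quoted verbatim as \cite[Theorem~2]{OccupationInterval} (Loeffen--Renaud--Zhou), with the two representations \eqref{eq:bigW1}--\eqref{newW} taken from equations (6)--(7) of that reference. So there is no ``paper's own proof'' to compare against; you are supplying one where the authors simply cite.

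Your Feynman--Kac/optional-stopping argument is a legitimate and essentially complete route to the identity. The ingredients --- the resolvent identity giving \eqref{eq:bigW1}$=$\eqref{newW}, the harmonicity $(\mathcal L-r)W^{(r)}(\cdot-z)=0$ on $(z,\infty)$ with the convention $W^{(r)}\equiv0$ on $(-\infty,0)$, Fubini for the nonlocal part of $\mathcal L$, and boundedness of $M^{\tau}$ on $[a,b]$ --- are all standard. One clarification on the kink at $y$: your explanation conflates two separate mechanisms. In the bounded-variation case ($\sigma=0$) there is a genuine jump in $\phi'$ of size $-qW^{(r)}(0)W^{(r+q)}(y-a)$, but there is no Brownian local time, and the first-order drift $\int_0^t\gamma\phi'(X_s)\,\diff s$ is insensitive to the value of $\phi'$ on the Lebesgue-null set $\{s:X_s=y\}$ --- that is the correct reason no extra term appears. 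In the unbounded-variation case one has $W^{(r)}(0)=0$, so $\phi$ is in fact $C^1$ across $y$ (only $\phi''$ jumps), and the usual It\^o formula for $C^1$ functions that are $C^2$ off a finite set applies without any Tanaka correction. Separating these two cases would make the argument cleaner.
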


\noindent 
A related result from \cite[Proposition 4.1]{OmegaRZ} (see also \cite[Corollary 2.(ii)]{OccupationInterval} for the proof of a similar result) will also be useful. 
\begin{lem}\label{lem:I}
For any $x<b$, we have 
\be \label{T+b}
\ex_x\big[\exp(-A_{T_b^+}^y)\big]=\frac{\mc{I}^{(r,q)}(x-y)}{\mc{I}^{(r,q)}(b-y)},
\quad \text{where} \;\; \mc{I}^{(r,q)}(\cdot) \;\; \text{is given by \eqref{Irq}.}
\ee
\end{lem}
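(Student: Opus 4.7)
The plan is to obtain \eqref{T+b} as the $a\downarrow -\infty$ limit of the two-sided exit identity \eqref{uphit} from Lemma~\ref{lem:bigW}. On the probabilistic side: on $\{T_b^+=\infty\}$ the integrand $\exp(-A_{T_b^+}^y)\ind_{\{T_b^+<T_a^-\}}$ vanishes for every $a$, because $A_{T_b^+}^y\ge r T_b^+=\infty$ forces $\exp(-A_{T_b^+}^y)=0$; on $\{T_b^+<\infty\}$, the sample path of $X$ is bounded below on the compact interval $[0,T_b^+]$, so $T_a^->T_b^+$ and hence $\ind_{\{T_b^+<T_a^-\}}\to 1$ for all sufficiently small $a$. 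Bounded convergence (with envelope $1$) then gives $\ex_x[\exp(-A_{T_b^+}^y)\ind_{\{T_b^+<T_a^-\}}]\to\ex_x[\exp(-A_{T_b^+}^y)]$.

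On the analytic side, I would use the representation \eqref{eq:bigW1} together with the standard large-$u$ asymptotic $W^{(r+q)}(u)\et^{-\prq u}\to 1/\psi'(\prq)$. Dividing \eqref{eq:bigW1} by $\et^{-\prq a}$ and passing to the limit yields
\[
\frac{W^{(r,q)}(x,a;y)}{\et^{-\prq a}}\longrightarrow \frac{1}{\psi'(\prq)}\left[\et^{\prq x}-q\int_{y}^{x\vee y}W^{(r)}(x-z)\,\et^{\prq z}\,\diff z\right].
\]
The substitution $u=x-z$, combined with the Laplace-transform identity $\int_0^\infty\et^{-\prq u}W^{(r)}(u)\diff u=(\psi(\prq)-r)^{-1}=1/q$ and the relation $\et^{-\prq(x-y)}\mc{I}^{(r,q)}(x-y)=\int_{x-y}^\infty\et^{-\prq u}W^{(r)}(u)\diff u$ (read off from \eqref{Irq} after the change of variable $v=u+(x-y)$), reduces the bracket to $q\et^{\prq y}\mc{I}^{(r,q)}(x-y)$ when $x\ge y$. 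For $x\le y$, the integral is absent and a direct calculation using $W^{(r)}\equiv 0$ on $(-\infty,0)$ shows $\mc{I}^{(r,q)}(z)=\et^{\prq z}/q$ for all $z\le 0$, so the same closed form persists. The common $y$-dependent prefactor $q\et^{\prq y}/\psi'(\prq)$ therefore cancels in the ratio $W^{(r,q)}(x,a;y)/W^{(r,q)}(b,a;y)$, leaving precisely $\mc{I}^{(r,q)}(x-y)/\mc{I}^{(r,q)}(b-y)$.

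The only delicate step is exchanging limit and integral in the display above. Since $z$ varies over the compact interval $[y,x\vee y]$, the shift $z-a\to+\infty$ uniformly in $z$, so $W^{(r+q)}(z-a)\et^{-\prq(z-a)}\to 1/\psi'(\prq)$ uniformly on that set, while $W^{(r)}(x-z)$ is continuous and hence bounded there. Uniform convergence of the integrand combined with a standard dominated-convergence argument legitimises the limit passage and completes the proof.
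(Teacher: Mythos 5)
Your proof is correct. One thing to be aware of: the paper itself gives no proof of Lemma \ref{lem:I} — it is quoted from \cite{OmegaRZ} (Proposition 4.1), with a pointer to \cite{OccupationInterval} (Corollary 2.(ii)) for the proof of a similar identity — so your argument is necessarily a reconstruction rather than a match, but it is a legitimate and self-contained one, and it proceeds exactly in the spirit of those references: obtain the one-sided occupation-time identity as the $a\downarrow-\infty$ limit of the two-sided identity \eqref{uphit}. Your probabilistic step is sound (on $\{T_b^+=\infty\}$ the integrand vanishes because $A^y_{T_b^+}\ge rT_b^+$; on $\{T_b^+<\infty\}$ c\`adl\`ag paths are bounded on $[0,T_b^+]$, so $\ind_{\{T_b^+<T_a^-\}}\uparrow 1$, and monotone or bounded convergence applies), and the analytic step checks out: dividing \eqref{eq:bigW1} by $\et^{-\prq a}$ and using $W^{(r+q)}(u)\et^{-\prq u}\to 1/\psi'(\prq)$ uniformly for $u=z-a$ with $z$ in the compact interval $[y,x\vee y]$ gives your displayed limit, and since $\psi(\prq)=r+q$ yields $\int_0^\infty \et^{-\prq u}W^{(r)}(u)\diff u=1/q$, the bracket does collapse to $q\,\et^{\prq y}\mc{I}^{(r,q)}(x-y)$, both for $x\ge y$ and for $x\le y$ (where $\mc{I}^{(r,q)}(w)=\et^{\prq w}/q$ for $w\le 0$), so the $y$-dependent prefactor cancels in the ratio. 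Two small points worth making explicit for completeness, though neither is a gap: the asymptotic you invoke requires $\psi'(\prq)\in(0,\infty)$, which holds here because $r+q>0$; and passing from the limit of the ratio to the ratio of the limits needs the limiting denominator to be nonzero, which is immediate since $\mc{I}^{(r,q)}(b-y)=\int_0^\infty\et^{-\prq u}W^{(r)}(u+b-y)\diff u>0$ as $W^{(r)}$ is strictly positive on $(0,\infty)$.
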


Finally, the following lemma gives the behaviour of scale functions at $0+$ and $\infty$; see, e.g., \cite[Lemmata 3.1, 3.2, 3.3]{Kuznetsov_2011}, and \cite[(3.13)]{Leung_Yamazaki_2011}.
\begin{lem} \label{lem W}
For any $r>0$,
\begin{align*}
W^{(r)}(0)    =\left\{
\begin{array}
[c]{ll}
0, & \text{unbounded variation},\\
\frac{1}{\gamma}, & \text{bounded variation},
\end{array}
\right. 
W^{(r){\prime}}(0+) =\left\{
\begin{array}
[c]{ll}%
\frac{2}{\sigma^{2}}, & \text{if }\sigma>0,\\
\infty, & \text{if }\sigma=0\text{ and }\Pi(-\infty,0)=\infty,\\
\frac{r+\Pi(-\infty,0)}{\gamma^{2}}, & \text{if }\sigma=0\text{ and }\Pi
(-\infty,0)<\infty,
\end{array}
\right.
\end{align*}
\end{lem}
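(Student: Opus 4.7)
The plan is to derive the stated boundary values directly from the defining Laplace transform
$\int_0^\infty \et^{-\beta x} W^{(r)}(x)\,\diff x = 1/(\psi(\beta)-r)$ for $\beta > \Phi(r)$, combined with the initial value theorem for Laplace transforms and a careful asymptotic analysis of $\psi(\beta)$ as $\beta\to\infty$ in the three regimes singled out by the L\'evy triplet.

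First I would classify the asymptotics of $\psi$. When $\sigma>0$, the L\'evy--Khintchine representation and dominated convergence give $\psi(\beta) \sim \tfrac12 \sigma^2\beta^2$. When $\sigma=0$ and $\int_{(-1,0)}|x|\Pi(\diff x)=\infty$ (the pure-jump unbounded-variation case, which forces $\Pi(-\infty,0)=\infty$), one has $\psi(\beta)/\beta\to\infty$ while $\psi(\beta)/\beta^2\to 0$. In the bounded-variation case, rewrite $\psi(\beta)=\gamma\beta+\int_{-\infty}^0(\et^{\beta x}-1)\Pi(\diff x)$ with $\gamma>0$, and dominated convergence gives $\psi(\beta)=\gamma\beta-\Pi(-\infty,0)+o(1)$.

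Next, since $W^{(r)}$ is continuous on $[0,\infty)$, the initial value theorem for Laplace transforms yields $W^{(r)}(0)=\lim_{\beta\to\infty}\beta/(\psi(\beta)-r)$, which produces $0$ in the unbounded-variation case and $1/\gamma$ in the bounded-variation case. Differentiating the Laplace identity gives $\int_0^\infty \et^{-\beta x}W^{(r)\prime}(x)\,\diff x = \beta/(\psi(\beta)-r) - W^{(r)}(0)$, and a second application of the initial value theorem gives
$W^{(r)\prime}(0+) = \lim_{\beta\to\infty} \beta\bigl[\beta/(\psi(\beta)-r)-W^{(r)}(0)\bigr]$.
Plugging in the above asymptotics delivers $2/\sigma^2$ when $\sigma>0$, and in the bounded-variation case the finer expansion $\gamma\beta-(\psi(\beta)-r) = r+\Pi(-\infty,0)+o(1)$ combined with $\psi(\beta)-r\sim\gamma\beta$ yields $(r+\Pi(-\infty,0))/\gamma^2$.

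The main obstacle is the rigorous application of the initial value theorem for $W^{(r)\prime}$ in the pure-jump unbounded-variation case, where the derivative is unbounded near $0+$. The classical pointwise statement of the initial value theorem does not apply directly, so one must invoke a Karamata-type Tauberian argument together with the general fact that $W^{(r)\prime}$ is monotone near the origin for spectrally negative L\'evy processes, in order to conclude $W^{(r)\prime}(0+)=\infty$ from $\beta^2/(\psi(\beta)-r)\to\infty$. The $\sigma>0$ and bounded-variation cases in contrast only require the standard initial value theorem, which follows immediately from the continuity and local regularity of $W^{(r)}$ at the origin.
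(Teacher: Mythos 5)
The paper never proves Lemma \ref{lem W}: it is imported from the literature (see the citations to \cite[Lemmata 3.1--3.3]{Kuznetsov_2011} and \cite[(3.13)]{Leung_Yamazaki_2011} accompanying the statement). Your Laplace-transform derivation is essentially the standard argument behind those cited results: identifying $W^{(r)}(0)=\lim_{\beta\to\infty}\beta/(\psi(\beta)-r)$, then extracting $W^{(r)\prime}(0+)$ from the second-order behaviour of $\beta/(\psi(\beta)-r)$, using $\psi(\beta)\sim\tfrac12\sigma^2\beta^2$ when $\sigma>0$, $\psi(\beta)/\beta\to\infty$ with $\psi(\beta)/\beta^2\to0$ in the pure-jump unbounded-variation case, and $\psi(\beta)=\gamma\beta-\Pi(-\infty,0)+o(1)$ in the bounded-variation case. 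The asymptotics of $\psi$ are all correct, the integration by parts is legitimate under the paper's standing smoothness assumption, and your bounded-variation expansion in fact also covers the regime $\sigma=0$, finite variation, $\Pi(-\infty,0)=\infty$ (by monotone rather than dominated convergence), so all cases of the statement are reached. As a self-contained substitute for the citation, the outline is the right one.

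The one genuine soft spot is your claim that the $\sigma>0$ and bounded-variation cases of $W^{(r)\prime}(0+)$ ``only require the standard initial value theorem.'' That theorem is Abelian: it presupposes that the limit $W^{(r)\prime}(0+)$ exists, which is itself part of what the lemma asserts, and the paper's assumption $W^{(r)}\in C^{2}(0,\infty)$ does not by itself guarantee a limit of the derivative at $0+$. So the Tauberian-type ingredient you invoke only in the pure-jump unbounded-variation case is in fact needed in all three derivative cases: either supply a Tauberian condition (e.g.\ Karamata's theorem for the nonnegative function $W^{(r)\prime}$ plus a monotone-density argument), or import the existence of $W^{(r)\prime}(0+)$ from the references --- at which point one may as well cite the full statement, as the paper does. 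Note also that ``$W^{(r)\prime}$ is monotone near the origin'' is not a general fact for spectrally negative L\'evy processes; it requires justification, e.g.\ via known shape results for scale functions under assumptions on the jump density such as the monotone density $\pi$ imposed in this paper. With that step either repaired or delegated to \cite{Kuznetsov_2011}, your proof is sound.
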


\section{Technical results}\label{app:tech}

 \begin{lem}\label{lem:comp}
The take-profit selling targets $\ul{b}$, $z_c$ and $z^\star(y)$, defined, respectively, by \eqref{b_}, \eqref{z_c} and \eqref{z*} for all relevant $y$-values in Theorems \ref{thm1} and \ref{thm2}, are strictly decreasing functions of the risk aversion coefficient $\rho \in [0,1)$. 
\end{lem}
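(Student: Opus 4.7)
My plan is to handle the three targets separately, all of them reducing to the elementary inequality $\log t<t-1$ for every $t>1$. For $\ul{b}(\rho)$ I would simply differentiate the explicit formula \eqref{b_} directly: setting $L:=\Phi(r+q)>1$ (which holds because $r>\psi(1)$) and parametrising $t:=L/(L-1+\rho)>1$, a short calculation yields $\ul{b}=(\log t)/(1-\rho)$ and $(1-\rho)/(L-1+\rho)=t-1$, after which $\ul{b}'(\rho)$ is easily seen to be proportional to $\log t-(t-1)$ and therefore strictly negative. For $z_{c}(\rho)$ the same computation transfers verbatim with $L:=\Lambda(c)$ in place of $\Phi(r+q)$: the required inequality $\Lambda(c)>1$ comes from the strict monotonicity of $\Lambda(\cdot)$ on $\R_{+}$ with $\Lambda(\infty)=\Phi(r)>1$ (cf.\ \cite[Lemma~4.2]{OmegaRZ}), and $\Lambda(c)$ is independent of $\rho$, so the previous argument goes through unchanged.

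For $z^{\star}(y;\rho)$ I would invoke the implicit function theorem. When $y\ge\ul{b}$ we have $z^{\star}(y;\rho)\equiv\ul{b}(\rho)$ by Lemmata \ref{lem21}--\ref{lem23}, and monotonicity follows from the first step. In the nontrivial range I would apply implicit differentiation to
\[
F(z,\rho):=(1-\rho)z-\log\Lambda(z-y)+\log\bigl(\Lambda(z-y)-1+\rho\bigr),
\]
which vanishes at $z=z^{\star}(y;\rho)$ by \eqref{eq:zstarg}. Two observations close the argument: (i) a short calculation shows that $\partial_{z}F$ is a positive multiple of $\Lambda(\Lambda-1+\rho)+\Lambda'$, which in turn is a positive multiple of $g'(z^{\star}-y)$; this is nonnegative because $z^{\star}(y)$ is a crossing of $g$ from below (here $\Lambda$ and $\Lambda'$ are evaluated at $z^{\star}-y$); (ii) with the same parametrisation $t=\et^{(1-\rho)z^{\star}}=\Lambda/(\Lambda-1+\rho)>1$ used above, one obtains $z^{\star}=(\log t)/(1-\rho)$ and $1/(\Lambda-1+\rho)=(t-1)/(1-\rho)$, so that $\partial_{\rho}F|_{z^{\star}}=\bigl((t-1)-\log t\bigr)/(1-\rho)>0$. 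The implicit function theorem then gives $dz^{\star}/d\rho=-\partial_{\rho}F/\partial_{z}F\le 0$.

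The main technical point remaining is to upgrade $dz^{\star}/d\rho\le 0$ to strict inequality in the degenerate case where $g'(z^{\star}-y)=0$ and hence $\partial_{z}F=0$. Across the $y$-ranges appearing in Theorems \ref{thm1}--\ref{thm2}, Proposition \ref{prop41} together with the arguments in the proofs of Theorems \ref{thm00}--\ref{thm0} guarantees that the crossing of $g$ at $z^{\star}(y;\rho)$ is strict for all but an isolated set of $\rho$, which combined with the continuity of $\rho\mapsto z^{\star}(y;\rho)$ and the weak inequality $dz^{\star}/d\rho\le 0$ yields strict monotonicity throughout.
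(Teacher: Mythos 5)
Your treatment of $\ul{b}(\rho)$ and $z_c(\rho)$ is correct and matches what the paper does (the paper simply records these as immediate from the definitions \eqref{b_} and \eqref{z_c}; your reduction to $\log t<t-1$ with $t=L/(L-1+\rho)>1$ is a clean verification). Your implicit-differentiation computation for $z^\star$ is also correct \emph{where it applies}: $\partial_z F$ is indeed a positive multiple of $g'(z^\star-y)$ (since $\Lambda>1$ and $\Lambda-1+\rho>0$), and $\partial_\rho F|_{z^\star}=\bigl((t-1)-\log t\bigr)/(1-\rho)>0$ because $z^\star\ge\ul{b}>0$ forces $t>1$. So at every point where $g'(z^\star(y;\rho)-y)>0$ you get a strictly negative derivative in $\rho$.

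The gap is your final paragraph. At a degenerate point, where $g'(z^\star-y)=0$, the implicit function theorem gives you nothing: not differentiability of $\rho\mapsto z^\star(y;\rho)$, not even the ``weak inequality $dz^\star/d\rho\le0$'' that your closing argument leans on — away from degenerate points your inequality is strict, and at degenerate points you have no inequality at all. The two facts you invoke to patch this are not available: (a) Proposition \ref{prop41} and Theorems \ref{thm00}--\ref{thm0} concern continuity and monotonicity of $z^\star$ \emph{in $y$} and the structure of $\mc{D}_y$; they say nothing about how $g'$ at the crossing, or $z^\star$ itself, varies with $\rho$, so ``strict crossing for all but an isolated set of $\rho$'' is an unproven assertion; (b) continuity of $\rho\mapsto z^\star(y;\rho)$ is likewise never established — $z^\star$ is the largest root of a merely non-decreasing function against a level, and such a root can jump when $g$ has a flat stretch at the level, which is exactly the degenerate configuration you are trying to dismiss. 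For comparison, the paper's proof avoids differentiating in $\rho$ altogether: it fixes $\rho_1<\rho_2$, uses continuity of $z_i^\star(\cdot)$ in $y$ together with the explicit limits $z_1^\star(-\infty)>z_2^\star(-\infty)$, and shows $z_1^\star(y)=z_2^\star(y)=z$ is impossible because then $\rho\mapsto \et^{(1-\rho)z}\bigl(1-\tfrac{1-\rho}{\Lambda(z-y)}\bigr)$ would equal $1$ at the three distinct points $\rho_1,\rho_2,1$, contradicting its fixed convexity; this global two-$\rho$ argument needs no regularity in $\rho$ and no non-degeneracy of the crossing. To salvage your route you would have to prove continuity of $z^\star$ in $\rho$ and rule out (or separately handle) flat crossings; alternatively, note that your own observation $\partial_\rho F|_{z^\star}>0$ already feeds a two-$\rho$ comparison in the spirit of the paper, which is the cleaner way to close the argument.
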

\begin{proof}
Suppose that $0\le \rho_1 < \rho_2 < 1$. 
Denote by $z_1^\star(y)$, $\ul{b}_1$ and $z_2^\star(y)$, $\ul{b_2}$ the selling strategies defined by \eqref{z*}, \eqref{b_}, under $\rho_1$ and $\rho_2$, respectively. 

We can firstly observe from the definitions \eqref{b_} and \eqref{z_c} that both $\ul{b} = \ul{b}(\rho)$ and $z_c = z_c(\rho)$ are strictly decreasing in $\rho\in[0,1)$ independently of the case under consideration (cf.\ Section \ref{mainres}).$^{\ref{foot}}$  

Taking into account (from above) that $\ul{b}_1>\ul{b}_2$ holds, 
we can straightforwardly conclude from Theorems \ref{thm1} and \ref{thm2} that $z_1^\star(y) \equiv \ul{b}_1>\ul{b}_2 \equiv z_2^\star(y)$ for all $y \geq \ul{b}_1$.
In view of the inequalities
\[\infty>z^\star_1(-\infty)=\frac{\log\Phi(r)-\log(\Phi(r)-1+\rho_1)}{1-\rho_1}
>\frac{\log\Phi(r)-\log(\Phi(r)-1+\rho_2)}{1-\rho_2}=z^\star_2(-\infty)>0 \]
we claim that $z_1^\star(y) > z_2^\star(y)$ for all $y < \ul{b}_1$.
To prove this by contradiction, we assume that there exists $y<\ul{b}_1$ such that $z_1^\star(y)=z_2^\star(y)=z$ (this suffices as $z_i^\star(y), i=1,2$, is continuous in $y$). Given that $z^\star(y)$ solves \eqref{eq:zstarg}, it follows that 
\[1=e^{(1-\rho_1)z}\left(1-\frac{1-\rho_1}{\Lambda(z-y)}\right)=e^{(1-\rho_2)z}\left(1-\frac{1-\rho_2}{\Lambda(z-y)}\right) .\]
However, firstly note that the function $\rho \mapsto e^{(1-\rho)z}(1-\frac{1-\rho}{\Lambda(z-y)})$ has fixed convexity in $[0,1]$. Secondly, for any fixed $z, y$, observe that $\rho_0=1$ is already a solution. 
Hence, having two more distinct roots $\rho_1, \rho_2 \in [0,1)$ is impossible. This is a contradiction. 
Therefore, we have that $z_1^\star(y)\neq z_2^\star(y)$ for all $y<\ul{b}_1$.
Combining the above results, we can conclude that we eventually have $z_1^\star(y)>z_2^\star(y)$ for all $y$ in their common domain independently of the case under consideration (cf.\ Theorems \ref{thm1} and \ref{thm2} in Section \ref{mainres}).
\end{proof}

\section*{Acknowledgments}
Neofytos Rodosthenous gratefully acknowledges support from EPSRC Grant Number EP/P017193/1.

\bibliographystyle{imsart-nameyear}

\def\cprime{$'$}

\end{document}